\DeclareFontFamily{OT1}{pzc}{}
\DeclareFontShape{OT1}{pzc}{m}{it}{<-> s * [1.200] pzcmi7t}{}
\DeclareMathAlphabet{\mathpzc}{OT1}{pzc}{m}{it}
\newcommand{\R}{\mathbb{R}}
\newcommand{\V}{\mathrm{V}}
\newcommand{\1}{\mathbf{1}}
\DeclareMathOperator{\diag}{diag}
\DeclareMathOperator{\rank}{rank}
\newtheorem{theorem}{Theorem}
\newtheorem{definition}{Definition}
\newtheorem{lemma}{Lemma}
\newtheorem{remark}{Remark}
\DeclareFontFamily{OT1}{pzc}{}
\DeclareFontShape{OT1}{pzc}{m}{it}{<-> s * [1.200] pzcmi7t}{}
\DeclareMathAlphabet{\mathpzc}{OT1}{pzc}{m}{it}
\newcommand*\mcapinn[2]{\vcenter{\hbox{$\mathsurround=0pt
  \ifx\displaystyle#1\textstyle\else#1\fi\bigcap$}}}
\newcommand*\mcupinn[2]{\vcenter{\hbox{$\mathsurround=0pt
  \ifx\displaystyle#1\textstyle\else#1\fi\bigcup$}}}
\def\begequarr{\begin{eqnarray}}
\def\endequarr{\end{eqnarray}}
\def\begequarrs{\begin{eqnarray*}}
\def\endequarrs{\end{eqnarray*}}
\def\begequ{\begin{equation}}
\def\endequ{\end{equation}}
\def\begequs{\begin{equation*}}
\def\endequs{\end{equation*}}
\def\begite{\begin{itemize}}
\def\endite{\end{itemize}}
\def\begcen{\begin{center}}
\def\endcen{\end{center}}
\def\begrem{\begin{remark}\rm}
\def\endrem{\end{remark}}
\def\ba{\begin{array}}
\def\ea{\end{array}}
\def\diag{\mbox{diag}}
\def\rank{\mbox{rank}\;}
\def\blkdiag{\mbox{blkdiag}\;}
\def\det{\mbox{det}\;}
\def\span{\mbox{span}\;}
\newcommand{\pdf}{\textnormal{pdf}}
\newcommand{\range}{\textnormal{range}}
\newcommand{\xb}{\mathbf{x}}
\newcommand{\yb}{\mathbf{y}}
\newcommand{\zb}{\mathbf{z}}
\newcommand{\Hb}{\mathbf{H}}
\newcommand{\Ib}{\mathbf{I}}
\newcommand{\Fb}{\mathbf{F}}
\newcommand{\Ab}{\mathbf{A}}
\newcommand{\Cb}{\mathbf{C}}
\newcommand{\Db}{\mathbf{D}}
\newcommand{\eb}{\mathbf{e}}
\newcommand{\db}{\mathbf{d}}
\newcommand{\phib}{\bm{\phi}}
\newcommand{\psib}{\bm{\psi}}
\newcommand{\mG}{\mathrm{G}}
\newcommand{\mV}{\mathrm{V}}
\newcommand{\gammab}{{\bm \gamma}}
\newcommand{\zetab}{\bm \zeta}
\newcommand{\betab}{\bm{\beta}}
\newcommand{\Exp}{\mathbb{E}}
\newcommand{\Eq}{\mathscr{E}}
\newcommand{\mcalQ}{\mathcal{Q}}
\def\beeq#1{\begin{equation}{#1}\end{equation}}
\title{\bf Differentially Private Distributed Computation \\ via Public-Private Communication Networks}
\date{}
\author{Lei~Wang,~Yang~Liu,~Ian~Manchester,~and~Guodong~Shi% <-this % stops a space
\thanks{L. Wang, I. Manchester and G. Shi are with Australian Centre for Field Robotics, The University of Sydney, Australia. E-mail: (lei.wang2; ian.manchester; guodong.shi@sydney.edu.au)}% <-this % stops a space
\thanks{Y. Liu is with Tencent Cloud Product Department, P.R. China. E-mail: (clarkfromthu@163.com)}% <-this % stops a space
%\thanks{This paper was partly presented at the IEEE Conference on Decision and Control, 2018 \cite{8619065,8619783}.}
}
\begin{document}
\maketitle

 \begin{abstract}
This paper studies the problem of multi-agent computation  under the differential privacy requirement of the agents' local datasets against eavesdroppers having access to node-to-node communications.
We first propose for the network equipped with public-private networks. The private network is sparse and not even necessarily connected, over which communications are encrypted and secure along with the intermediate node states; the public network is connected and may be dense, over which communications are allowed to be public. In this  setting, we propose a multi-gossip  Privacy-Preserving/Summation-Consistent (PPSC) mechanism over the private network, where at each  step, randomly selected node pairs update their states in such a way that they are shuffled with random noise while maintaining summation consistency. It is shown that this mechanism can achieve any desired differential privacy level with any prescribed probability. Next, we embed this mechanism in distributed computing processes, and propose privacy-guarantee protocols for three basic computation tasks, where an adaptive mechanism adjusts the amount of noise injected in PPSC steps for privacy protection, and the number of regular computation steps for accuracy guarantee. For average consensus, we develop a PPSC-Gossip averaging consensus algorithm by utilizing the multi-gossip  PPSC mechanism for privacy encryption before an averaging consensus algorithm over the public network for local computations.
For network linear equations and distributed convex optimization, we develop two respective distributed computing protocols by following the PPSC-Gossip averaging consensus algorithm with an additional projection or gradient descent step within each step of computation. Given any privacy and accuracy requirements, it is shown that all three proposed protocols can compute their corresponding problems with the desired computation accuracy, while achieving the desired differential privacy.
Numerical examples are used to illustrate the validity of the established theoretical results. In particular,  our framework demonstrates clear improvements in terms of learning accuracy for classification problems compared to  existing approaches under the same privacy budget.

 \end{abstract}

 \section{Introduction}

%%background and privacy risks of distributed computation
The study of distributed algorithms for network-wide computation problems over a multi-agent system is an emerging research topic of significance in the fields of smart grids \cite{berger2012smart},  mobile robotic networks \cite{sabattini2013decentralized}, intelligent
transportation \cite{Zhang2011} and machine learning \cite{boyd2011distributed,scaman2017optimal}.
In such problems, each agent (or node, representing a subsystem or a computing unit) over a network is assigned with a local dataset over which a local cost function is defined. Under a distributed computing scheme, the local datasets are encoded in either individual initial node states or update rules;  agents  share their node states over a communication network; these node states are updated based on the local datasets and the received neighboring states.  During such a computing process, when an attacker, a malicious user, or an eavesdropper has access to part or the entirety of node-to-node communications, the local  datasets/costs may be inferred. Since   the local dataset or  cost function of an agent    may  contain sensitive private information for the agent, new risks of  privacy breach arise.

%%existing privacy-preservation approaches and gap
In the literature, several insightful privacy-preserving distributed computing frameworks have been proposed. First of all, datasets themselves may be encrypted, e.g., via homomorphic encryption, and then used for optimization or computation, e.g.,  \cite{gentry2009fully,shoukry2016privacy,adria2017privacy}. For encryption-based approaches,  the resulting cyphertext is generally of high dimensionality, resulting in high communication and computation complexities. Besides, a quantitative privacy protection metric is usually absent.
%At a fundamental level, cryptography approaches are developed  for ensuring secure point-to-point communication \cite{Bellare2007Multi}, while eavesdroppers may be users participating in the computation process.
Another notable approach is to perturb the node communications or iterations with random noise \cite{huang2015differentially,hall2013differential},  providing a quantitative privacy guarantee and the robustness to post-processing and side information  under the notion of differential privacy \cite{dwork2006calibrating}.
The added random noises in the computation process, however, may significantly    jeopardize the computation accuracy \cite{huang2015differentially,nozari2016differentially,huang2012differentially,He2018Preserving}. There is a gap in the literature, where one cannot achieve  computational  accuracy, convergence efficiency, and  provable  privacy guarantees all together, but needs to search for convenient tradeoffs among the three aspects, for distributed computing.

We consider a multi-agent system where each agent holds a local private dataset, e.g., a local number,  equation, or function. The system seeks to compute a  value that   depends on   the  datasets at all agents. In distributed computing schemes,  agents hold individual dynamical states; share these states with neighbors over a communication network; and compute the update of the states based on their local datasets and received neighboring states. Specifically, we investigate  the following three distributed computation tasks that are extensively studied in the literature: average consensus  \cite{mo2016privacy,Nedic2009consensus}, network linear equations   \cite{shi2017networkTAC,Mou2015,Vempala2020}, and distributed convex optimization  \cite{nozari2016differentially,nedic2010constrained,konevcny2015federated}. We propose a public-private  setup for the node-to-node communication network. In this  public-private network, the public network  is connected and may be dense, over which communications are allowed to be public, while the private network is sparse and not even necessarily connected, over which communications are secure. We aim to develop distributed computing protocols such that the three considered computation tasks are
solved with any prescribed accuracy level, while the differential privacy is preserved with arbitrary privacy budgets, i.e., removing the
trade-off between computing accuracy and privacy.

%%%%%%%%%%%%%%%%%%%%%%%%%%

\subsection{Contributions}

%%Significance of Multi-Gossiping PPSC mechanism
 Over the private network, a Multi-Gossiping PPSC mechanism is proposed for injecting noises to node states to establish privacy protection, in a way that the summation of the node  stats is maintained. Given any privacy budgets, we prove that the Multi-Gossiping PPSC mechanism can achieve the desired differential privacy under an arbitrary probability.  The key idea for this strong privacy guarantee is enough amount of noise injected in the mechanism, and a sufficient number of recursions. In the meantime,  the mechanism itself maintains that the summation of the input and the output across the network stay consistent, a property that may be explored for accuracy satisfaction in distributed computation.

%%the proposed protocols and achieved results
Next, we embed the Multi-Gossiping PPSC in the distributed computing processes, and establish several privacy-guarantee protocols for the three considered  computation tasks. The following results are established.
 \begin{itemize}
  \item For  average consensus, we develop a PPSC-Gossip averaging consensus algorithm. The PPSC multi-gossiping mechanism is employed over the private network for privacy encryption, followed by an averaging consensus algorithm  over the public network for local computations. We prove an explicit  lower bound for the local computation depth, under which  the proposed algorithm can achieve the  differential privacy at any privacy level and with any prescribed  probability, while computing the exact network average at any accuracy level.

\item For network linear equations and distributed convex optimization, we develop two respective distributed computing protocols, where within each step of computation, the PPSC-Gossip averaging consensus algorithm is implemented with an additional  projection or gradient descent step. Given any prescribed levels for privacy and accuracy requirements, we prove that the two  protocols offer    both  privacy and   accuracy guarantees.
\end{itemize}

To the best of our knowledge, these results are the first kind in the literature with both proven differential privacy and computation accuracy guarantees. Of course, compared to existing privacy-preserving algorithms for these distributed computation problems, our privacy-preserving protocols   rely critically on secure communications over the private network and a generally more lengthy computation process. Also, we note that the private network is not necessarily connected, and thus it is not a straightforward conclusion that such a sparse private communication structure can deliver a global differential privacy assurance. In fact, a key idea in the protocols  lies in  an adaptive selection of the computation depth for accuracy guarantee,  according to the given privacy level.

\subsection{Related Work}

%The existing distributed computation;
Our work builds upon the existing literature of the differential privacy and three basic computation problems: average consensus, network linear algebraic equation and distributed convex optimization.  In the area of distributed computation, extensive research results have been reported to design and analyze algorithms for these three computation tasks both in continuous and discrete time as well as in deterministic and stochastic settings, see, e.g., \cite{Shi2015Consensus,shi2017networkTAC,Mou2015,Dekel2012Optimal,scaman2019optimal} and references therein.
Regarding the differential privacy, since the introduction in \cite{dwork2006calibrating}, it has gained a significant developments in many fields, including estimation \cite{Ny-Pappas-TAC-2014}, control \cite{kawano2020design}, learning  \cite{abadi2016deep}, etc.
Of most relevance to this paper are recent works \cite{huang2015differentially,nozari2016differentially,huang2012differentially,nozari2017differentially} on differentially private distributed computing algorithms. In \cite{huang2012differentially}, exponentially decaying noises are added to the communication messages for an  average consensus algorithm with a differential  privacy guarantee on initial values.
This idea is further developed in  \cite{nozari2017differentially,He2020TSP} for better noise adding mechanisms.
In \cite{huang2015differentially,nozari2016differentially},  distributed optimization problems with privacy-sensitive objective functions are addressed and differentially private computing protocols are developed by introducing random perturbations to  objective functions \cite{nozari2016differentially} or node states \cite{huang2015differentially}.
Similar ideas have been explored in \cite{han2016differentially} to preserve privacy of optimization constraints.
In all these results, the introduction of larger random noise can provide a better differential privacy guarantee on the one hand, but on the other hand leads to a larger computation error in the mean square sense.
That is, there is a trade-off between the differential privacy and computation accuracy.
It is worth pointing out that there are some computing algorithms, which can solve computing problems with any prescribed accuracy, but under other privacy notions. For example, in \cite{mo2016privacy}, by adding and subtracting decaying random noises to the averaging consensus process, it is shown that the average can be computed asymptotically in the mean square sense, while preserving the privacy in the sense that the maximum likelihood estimate of initial states has nonzero variance.

%Developments of gossip algorithms and its applications to privacy preservation.
The idea of PPSC protocol and its potential application to distributed optimization were reported in \cite{8619065}. This paper supersedes the work in \cite{8619065} by extending the PPSC framework to multi-gossiping setup, and propose and prove concrete privacy-preserving algorithms.
The proposed computing protocols are established on the multi-gossiping PPSC algorithm, which is an extension of the classic gossip process \cite{david2003gossip,Boyd2006Gossip}.  Instead of averaging the two  selected nodes by exchanging states \cite{david2003gossip,Boyd2006Gossip},  multiple node pairs are  selected at each time in our gossip algorithm and a directional communication of the perturbed state is conducted between each selected node pair, such that their states are shuffled with random noise while maintaining the summation.
Recent advances on gossiping protocols include new privacy-preserving gossip algorithms \cite{hanzely2017privacy}.
On the other hand, the idea of shuffling data for differential privacy is also studied in \cite{Cheu2019Distributed,Balcer2020Connecting}, where each agent randomizes its own local data, and then submits the resulting randomized data to a secure shuffler for random permutation before being public for computation purpose.  In such a protocol, a central shuffler is needed, and as in the previous differentially private computing protocols, it leads to a trade-off between the  privacy and  accuracy. As a comparison, our computing protocols are distributed  and there is no such a trade-off.

Our results are also related to the frameworks of federated learning \cite{konevcny2015federated},  a recent advance in privacy protection for machine learning.
In federated learning schemes, training datasets are distributed over a network of nodes, and a trusted center randomly selects a fraction of local nodes for model training, aggregates their local computations, and sends back the averaged updates as the network level decision.
When the communications between the computing center and decentralized nodes are accessed by an eavesdropper, the local training datasets may face privacy risks since information about such datasets is exposed in the communications \cite{kairouz2019advances}.
To handle such privacy concern, our PPSC framework may be embedded into the federated learning and provide enhanced privacy preservation as well, in the way that local nodes shuffle their decentralized computation results over a PPSC framework before sending them to the trusted center.

\medskip

\noindent{\bf Notation.} Denote $\eb_i$ as a basis vector whose entries are all zero except for the $i$-th being one.  We denote $\pdf(\cdot)$ as the probability density function of a random variable. For any matrix $\Ab\in\mathbb{R}^{m\times n}$, denote by $\sigma(\Ab)$ the set of all singular values of $\Ab$, and $\sigma_M(\Ab),\sigma_m(\Ab)$ the maximum and minimum singular values, respectively. For any set $X\subset\R^n$, we let $1_X(x)$ be a characteristic function, satisfying $1_X(x)=1$ for $x\in X$ and $1_X(x)=0$ for $x\notin X$.
Given any matrix, we denote its column  and row spaces by $\range(\cdot)$ and $\span(\cdot)$, respectively. With a slight abuse of notation, the range of a function is denoted by $\range(\cdot)$.
For any subspace $\Eq\in\R^n$, we denote $\mathpzc{P}_{\Eq}:\mathbb{R}^n\rightarrow\mathbb{R}^n$ as the orthogonal projection onto the subspace $\Eq$.
%For any number $a\in\R$, we use $\lfloor a \rfloor$ to denote the smallest integer that is smaller than or equal to $a$.

%There are three basic distributed computation tasks.

\section{Problem Definition}
\subsection{Distributed Computing for Multi-agent Systems}
We consider a multi-agent network with $n$ agents indexed in the set $\mathrm{V}=\{1,\dots,n\}$. The agents are equipped with   node-to-node communications and local private datasets. The goal of the system is to compute a network-level solution aggregated from all local datasets at each of the agents via distributed protocols.

\begin{itemize}
\item[(i)]  (Average consensus e.g., \cite{mo2016privacy,Nedic2009consensus}) Each agent $i$ holds a local private number $d_i\in\R$.
The computation goal of agents is to   compute the average $d^\ast=\frac{1}{n}\sum_{i=1}^{n}d_i$.

\item[(ii)] (Network linear equations  e.g., \cite{shi2017networkTAC,Mou2015,Vempala2020})  Each agent $i$ holds a private linear equation   $\mathfrak{E}_i: \,\Hb_i^\top\yb=z_i$ with an unknown $\yb\in\R^m$, $\Hb_i \in\mathbb{R}^{m}$ and $z_i\in \mathbb{R}$. The agents aim to solve  the overall linear equation
\begin{equation}\label{eq:LINEAR ALGEBRAIC EQUATION}
\mathfrak{E}: \quad \Hb\yb=\zb\,,
\end{equation}
with $\Hb\in\R^{n\times m}$ and $\zb\in\R^n$, the $i$-th row of which are $\Hb_i^\top$ and  $z_i$, respectively.

\item[(iii)] (Distributed convex optimization  e.g., \cite{nozari2016differentially,nedic2010constrained,konevcny2015federated})
Each agent $i$ holds a private convex function $f_i(\cdot):\R^m \rightarrow \R$. The computation goal of agents is to solve the  optimization problem
\beeq{\ba{rcl}\label{eq:opti}
\textnormal{minimize} && f(\yb):=\sum_{i=1}^n f_i(\yb)\,\\
\textnormal{subject to}&& \yb \in \mathtt{C}\,
\ea}
with a   compact convex set $\mathtt{C}\subset\R^m$.
\end{itemize}

\subsection{Public-Private Communication Networks}

%%%Define G_p
We propose for the multi-agent system to have a public-private network model: the public network $\mathrm{G}=(\mathrm{V}, \mathrm{E})$ and the private network  $\mathrm{G}_{\rm p}=(\mathrm{V}, \mathrm{E}_{\rm p})$, where the latter may or may not be a subgraph of the former (see Figure \ref{fig-GpG}).
We impose the following standing assumption throughout the paper.

\begin{figure}[h]
  \centering
  \includegraphics[width=5cm]{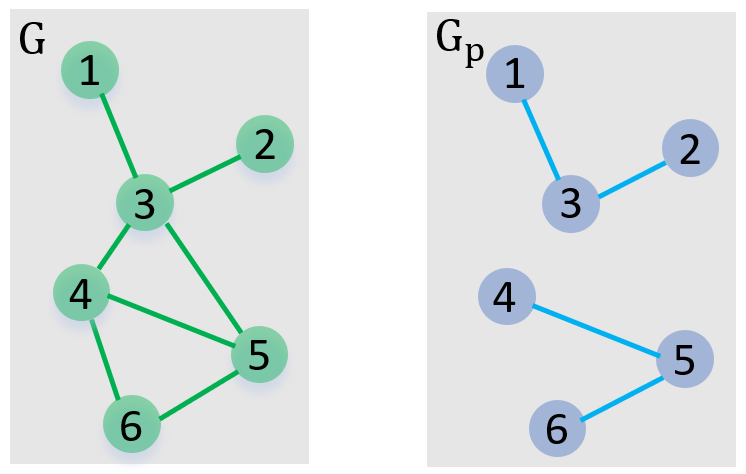}
  \caption{The public network ${\rm G}$  (left) and the private network ${\rm G}_{\rm p}$ (right). }
\label{fig-GpG}
\end{figure}

%\medskip

\noindent{\bf Standing Assumption.}
  (i) The public network  (graph) ${\rm G}$ is undirected and connected; Each node $i\in\mathrm{V}$ over  private network (graph) $\mathrm{G}_{\rm p}$ has a positive degree $r_i>0$. (ii) All node-to-node communications  over $\mathrm{G}_{\rm p}$ are secure and private; all node-to-node communicating messages over $\mathrm{G}$ are public.

\medskip

%%%Define Multi-Gossiping PPSC mechanism
Note that the private network $\mathrm{G}_{\rm p}$ may be very sparse, and it is \emph{not even necessarily connected}.  The  secure communications over the private network $\mathrm{G}_{\rm p}$ can be  realized via data encryption of communicating messages \cite{menezes1996handbook,Bellare2007Multi}.

 \subsection{Differentially Private Distributed Computation}

Denote $\mathscr{M}$ as the mapping that maps the collection of the  local datasets to all the node-to-node communications   over the public graph $\mathrm{G}$.
\medskip

\begin{definition}
(i) A distributed protocol is differentially private if  differential privacy for the mapping  $\mathscr{M}$ can be achieved  under any prescribed privacy budget.

(ii) A distributed protocol is computationally accurate if the protocol can output a solution that is within any prescribed error bound  at each agent of the system.
\end{definition}

We are interested in  distributed computing algorithms for the above three basic distributed computation tasks that offer both such differential privacy  and  computation accuracy.
\medskip

\noindent {\it Secure public network vs Public-private networks.} The proposed public-private network assumes a trusted  private network $\mathrm{G}_{\rm p}$.  If all node-to-node communications over the connected public graph $\mathrm{G}$ are made secure, then  standard distributed protocols will not face any privacy risk against communication eavesdroppers.
The introduction of   this public-private network setting has the following advantages:
\begin{itemize}
\item[(i)] Data encryption and decryption for establishing secure point-to-point communications are often computationally costly. Therefore, compared to having the entire network $\mathrm{G}$ equipped with secure node-to-node communications, a sparse  secure network $\mathrm{G}_{\rm p}$
provides improved scalability.
\item[(ii)] Even if all the links over the public  network $\mathrm{G}$ are made secure, it does not  preclude the possibility that certain node $k\in\mathrm{V}$ is a planted malicious node \cite{mo2016privacy}. Then all communications sent to this node $k$ and often the states of $k$'s neighbors become exposed for privacy risks despite the communications to $k$  being secure. While over the sparse private network $\mathrm{G}_{\rm p}$, each node only needs one secure neighbor. Therefore, the public-private network setting helps counter privacy risks caused by in-network malicious nodes.
\end{itemize}
The usage of a public-private network architecture has been proposed in the context of social computing, where the public social network is visible to everyone and a private social network is only visible to each node locally \cite{mirzasoleiman2016fast,chierichetti2015efficient}.

\section{PPSC Mechanism and PPSC-Gossip Averaging Consensus}
In this section, we   propose a multi-gossiping PPSC mechanism over the private network as a distributed random node states shuffling scheme with summation preservation. We prove that such a PPSC mechanism can be made  differentially private in terms of its input and output at an arbitrary privacy budget level in a probabilistic sense. Next, we show that the PPSC-Gossip mechanism over the private network can be combined with a standard consensus algorithm over the public network, and then an adaptive depth allocation will guarantee both differential privacy and computational accuracy.

%%%%%%%%%%%%%%%%%%%%%%%%%%

\subsection{Multi-Gossiping PPSC Mechanism}

Let  $\mathrm{G}_{\rm p}$ admit $q$ connected components with the $k$-th component denoted by   $\mathrm{G}_{\rm p}^k=(\mathrm{V}^k, \mathrm{E}_{\rm p}^k)$.
We propose the following  Multi-Gossiping Privacy-Preserving/Summation-Consistent (PPSC) mechanism, which consists of $S$ iterations  over the private network $\mathrm{G}_{\rm p}$.

\begin{algorithm}[H]\label{algorithm:PPSC}
\leftline{{\bf Input:} Initial states $\beta_i(0)$, $i\in\mV$.}

\leftline{{\bf For}  $t=1,\ldots,S$, {\bf run} the following iterations over $\mathrm{G}_{\rm p}$.}
\begin{itemize}
\vspace{-2mm}
  \item[1.] At each $\mathrm{G}_{\rm p}^k$, an agent $k_i$ is randomly selected with probability $1\over {n_k}$;
      an agent $k_j$ is then randomly selected from $k_i$'s neighbors with probability $1\over r_{k_i}$.
      Let $\mathrm{e}^k_{{\rm p}}(t)=(k_i,k_j)$ be the selected edge.
      \vspace{-3mm}
  \item[2.] Each agent $k_i$, $k=1,\ldots,q$ randomly and independently generates noise $\gamma_k(t)\sim \mathcal{N}(0,\sigma_\gamma^2)$, and sends $\omega_k(t) = \beta_{k_i}(t) - \gamma_k(t)$ to the agent $k_j$ over the edge $\mathrm{e}^k_{{\rm p}}(t)$.
      \vspace{-3mm}
  \item[3.] Each agent updates its state following
  \vspace{-3mm}
  			\[\ba{l}
            \beta_{k_i}(t) = \beta_{k_i}(t-1) - \omega_k(t)\,,\quad k=1,\ldots,q;\\
			\beta_{k_j}(t) = \beta_{k_j}(t-1) + \omega_k(t)\,,\quad k=1,\ldots,q;\\
            \beta_h(t) = \beta_h(t-1)\,,\quad\qquad\qquad h\in\mV\backslash \bigcup_{k=1}^q\{k_i,k_j\}\,.
			\ea\]
\end{itemize}
\leftline{{\bf Output:} $\beta_i(S)$, $i\in\mV$.}
\caption{Multi-Gossiping PPSC mechanism}
\vspace{-0mm}
\end{algorithm}

\begin{figure}[ht]
\centering
\begin{subfigure}{.2\textwidth}
\centering
		\includegraphics[width=.8\linewidth]{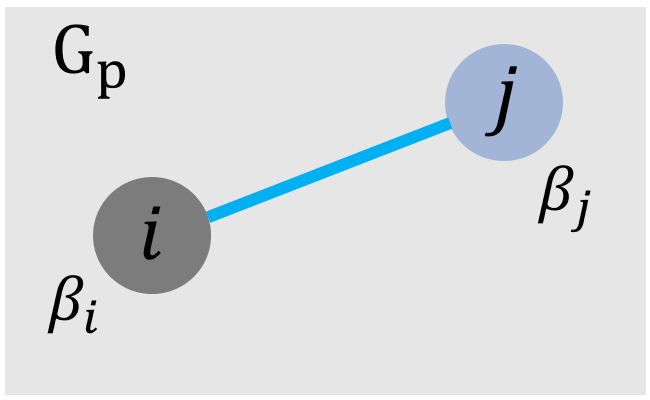}
\caption{ }
\label{fig-a}
\end{subfigure}
\begin{subfigure}{.2\textwidth}
\centering
		\includegraphics[width=.8\linewidth]{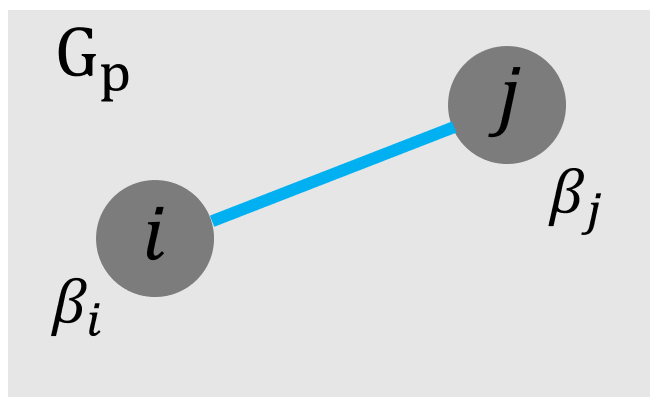}
\caption{ }
\label{fig-b}
\end{subfigure}
\begin{subfigure}{.2\textwidth}
\centering
		\includegraphics[width=.8\linewidth]{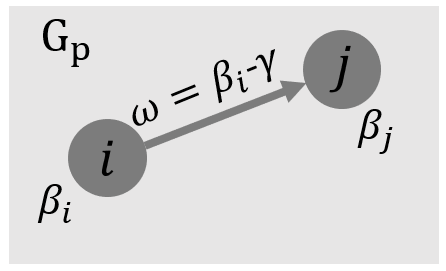}
\caption{ }
\label{fig-c}
\end{subfigure}
\begin{subfigure}{.2\textwidth}
\centering
		\includegraphics[width=.8\linewidth]{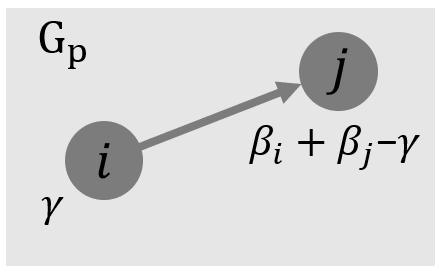}
\caption{ }
\label{fig-d}
\end{subfigure}
\caption{The gossip process between a pair of nodes. The selected agents and edge are highlighted in dark gray.
}
\label{fig-1}
\end{figure}

Note that in each iteration of the multi-gossiping PPSC mechanism, a pair of gossiping nodes essentially ``shuffle"  their states in a specific way: one node holds a noisy  summation of the two node states, and another node holds a noise correlated to the first node's new state.
This is related, but different from the idea of shuffling data    \cite{Cheu2019Distributed,Balcer2020Connecting}, where randomized local data is sent to a central secure shuffler,  and then the shuffler applies   random permutations before being public for the computation purpose. The  multi-gossiping PPSC mechanism does not rely on central secure shufflers; and the outcomes of the nodes states after the multi-gossiping PPSC mechanism and permutation shuffling are not the same.

%%Define DP of the $S$-step Multi-Gossiping PPSC mechanism
Taking a pair of nodes for instance, an overview of this gossip process is illustrated in Figure \ref{fig-1}.
The map from the input $\betab\in\mathbb{R}^n$  to the output $\mathbf{y}_{\rm ppsc}\in \mathbb{R}^n$ along the $S$-step Multi-Gossiping PPSC mechanism can be represented as $\mathscr{M}_{\rm S}$, i.e.,
$\mathbf{y}_{\rm ppsc}=\mathscr{M}_{\rm S}\big(\betab\big):=\mathsf{C}_{\mathsf{E}_{\rm p}} \betab + \mathsf{D}_{\mathsf{E}_{\rm p}} \gammab$,
where $\gammab=[\gammab_1;\ldots;\gammab_q]$ with $\gammab_k=[\gamma_k(1);\ldots;\gamma_k(S)]$, and $\mathsf{C}_{\mathsf{E}_{\rm p}}\in\R^{n\times n}$ and $\mathsf{D}_{\mathsf{E}_{\rm p}}\in\R^{n\times S}$ are some random matrices associated with the selected edge sequence $\mathsf{E}_{\rm p}= \big\{\mathsf{E}_{\rm p}^1,\ldots,\mathsf{E}_{\rm p}^q\big\}$ with $\mathsf{E}_{\rm p}^k=\big\{\mathsf{e}_{\rm p}^k(1),\ldots,\mathsf{e}_{\rm p}^k(S)\big\}$, $k=1,\ldots,q$. Let an eavesdropper with full observations of $(\mathsf{E}_{\rm p},\yb_{\rm ppsc})$  attempt to infer the input $\betab$. Under any observed $\mathsf{E}_{\rm p}$, we define two inputs $\betab^\prime, \betab\in\R^n$  as $\mu$-adjacent if they satisfy $\|\betab-\betab^\prime\|\leq \mu$ and $\mathsf{C}_{\mathsf{E}_{\rm p}}(\betab-\betab^\prime)\in \range\big(\mathsf{D}_{\mathsf{E}_{\rm p}}\big)$, and denote $\mathscr{M}_{\rm S}\big(\betab \,|\, \mathsf{E}_{\rm p}\big)$ as the  $\mathscr{M}_{\rm S}$ conditioned on the observed $\mathsf{E}_{\rm p}$.   We introduce the following differential privacy notion for the Multi-Gossiping PPSC mechanism \cite{dwork2006calibrating}.
\begin{definition}\label{def:DP}
Let  $\mu>0$, $\epsilon>0$ and $0<\delta<1/2$. The $\mathscr{M}_{\rm S}$ is termed to be $(\epsilon,\delta)$-differentially private  under $\mu$-adjacency if for all $R\subseteq \range(\mathscr{M}_{\rm S})$, there holds
  \beeq{\label{eq:def-DP}\ba{l}
  \mathbb{P}\left(\mathscr{M}_{\rm S}\big(\betab \,|\, \mathsf{E}_{\rm p}\big)\in R\right) \leq e^{\epsilon}\mathbb{P}\left(\mathscr{M}_{\rm S}\big(\betab^\prime \,|\, \mathsf{E}_{\rm p}\big)\in R\right) + \delta\,
  \ea}
  for any two $\mu$-adjacent $\betab,\betab^\prime\in\R^n$.
\end{definition}

%%Theorem
Denote $n_k=|\mathrm{V}^k|$, $n_{\max}=\max\{n_1,\ldots,n_q\}$, and $r^\dag = \min\{{r_{1}^{\rm min}}/{r_{1}^{\rm max}},\ldots,{r_{q}^{\rm min}}/{r_{q}^{\rm max}}\}$ with  $r_{k}^{\rm min}=\min\{r_{k_1},\ldots,r_{k_q}\}$ and $r_{k}^{\rm max}=\max\{r_{k_1},\ldots,r_{k_q}\}$, $k=1,\ldots,q$. Let $\kappa(\epsilon,\delta)=\frac{\mathcal{Q}^{-1}(\delta)+\sqrt{(\mathcal{Q}^{-1}(\delta))^2+2\epsilon}}{2\epsilon}$ with $\mathcal{Q}(w)=\frac{1}{\sqrt{2\pi}}\int_{w}^\infty e^{-\frac{v^2}{2}}d v$.
Denote by $\Theta_{\mathsf{E}_{\rm p}}$  the set of all possible edge sequences $\mathsf{E}_{\rm p}$ in the $S$-step Multi-Gossiping PPSC mechanism, and let $\lambda_{\rm ppsc} = \displaystyle\min_{\mathsf{E}_{\rm p}\in\Theta_{\mathsf{E}_{\rm p}}} \sigma_{m}^+\big(\mathsf{D}_{\mathsf{E}_{\rm p}}\big)$ with $\sigma_{m}^+(\cdot)$ denoting the minimal nonzero singular value. Finally, introduce $$
S_{\rho}^\ast = \Big({\log(1-\rho^{1\over q})-\log n_{\max}}\Big)/\Big({\log\big(1-\frac{1+r^\dag}{n_{\max}}\big)}\Big).
$$
Then we have the following result.

\begin{theorem}\label{theorem-PPSC}
Let $\mu>0$, $\epsilon>0$, $0<\delta<1/2$ and $0<\rho<1$. Suppose
$S \geq S_{\rho}^\ast$ and $\sigma_\gammab \geq {\mu \kappa(\epsilon,\delta)}/\lambda_{\rm ppsc}$.
Then  with a probability that is at least $\rho$, the $S$-Step Multi-Gossiping PPSC mechanism  is $(\epsilon,\delta)$-differentially private  under $\mu$-adjacency.
\end{theorem}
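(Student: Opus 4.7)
My plan is to separate the two independent sources of randomness in $\mathscr{M}_{\rm S}$---the edge-selection sequence $\mathsf{E}_{\rm p}$ and the Gaussian noise $\gammab$---and treat them in two stages: first identify a ``good event'' $\mathcal{E}$ over $\mathsf{E}_{\rm p}$ on which the promised DP guarantee can be proved in a conditional, purely Gaussian-mechanism fashion, and then show $\mathbb{P}(\mathcal{E})\geq\rho$ using the coupon-collector type estimate encoded in $S_\rho^\ast$.

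\textbf{Step 1: the good event.} Take $\mathcal{E}$ to be the event that, in every connected component $\mathrm{G}_{\rm p}^k$, each node has been involved---either as the selected first node $k_i$ or as its chosen neighbor $k_j$---in at least one of the $S$ gossip steps. At any given step, a fixed node $l$ in component $k$ is involved with probability at least $\tfrac{1}{n_k}+\tfrac{1}{n_k}\sum_{l'\sim l}\tfrac{1}{r_{l'}}\geq\tfrac{1+r^\dag}{n_{\max}}$, where the second inequality is the definition of $r^\dag$. Hence the probability of missing a fixed node in $S$ steps is at most $(1-(1+r^\dag)/n_{\max})^S$, and a union bound over the $n_k\leq n_{\max}$ nodes gives per-component miss probability at most $n_{\max}(1-(1+r^\dag)/n_{\max})^S$. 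The assumption $S\geq S_\rho^\ast$ is precisely the rearrangement of $n_{\max}(1-(1+r^\dag)/n_{\max})^S\leq 1-\rho^{1/q}$, so each component is covered with probability $\geq\rho^{1/q}$; independence of the selections across the $q$ components then yields $\mathbb{P}(\mathcal{E})\geq(\rho^{1/q})^q=\rho$.

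\textbf{Step 2: conditional Gaussian-mechanism bound.} Fix any $\mathsf{E}_{\rm p}\in\mathcal{E}$. Conditionally, $\yb_{\rm ppsc}(\betab)\sim\mathcal{N}(\mathsf{C}_{\mathsf{E}_{\rm p}}\betab,\sigma_\gammab^2\mathsf{D}_{\mathsf{E}_{\rm p}}\mathsf{D}_{\mathsf{E}_{\rm p}}^\top)$, and $\mu$-adjacency forces the mean shift $\mathsf{C}_{\mathsf{E}_{\rm p}}(\betab-\betab')$ into $\range(\mathsf{D}_{\mathsf{E}_{\rm p}})$, so the two output densities share the same affine support. Letting $\wb$ be the minimum-norm solution to $\mathsf{D}_{\mathsf{E}_{\rm p}}\wb=\mathsf{C}_{\mathsf{E}_{\rm p}}(\betab-\betab')$, a direct Gaussian computation reveals that the log-density ratio $\log(p_\betab/p_{\betab'})(\yb_{\rm ppsc})$ is, under $p_\betab$, a one-dimensional Gaussian with mean $\|\wb\|^2/(2\sigma_\gammab^2)$ and variance $\|\wb\|^2/\sigma_\gammab^2$. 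Feeding these moments into the standard analytical Gaussian-mechanism tail bound (which gives rise to the $\kappa(\epsilon,\delta)$ defined in the statement via $\mathcal{Q}^{-1}$) yields $(\epsilon,\delta)$-DP as soon as $\sigma_\gammab\geq\kappa(\epsilon,\delta)\,\|\wb\|$.

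\textbf{Step 3: sensitivity---the main obstacle.} What is left is the clean sensitivity bound $\|\wb\|\leq\mu/\lambda_{\rm ppsc}$, and I expect this to require the most care. The naive chain $\|\wb\|\leq\|\mathsf{C}_{\mathsf{E}_{\rm p}}(\betab-\betab')\|/\sigma_m^+(\mathsf{D}_{\mathsf{E}_{\rm p}})$ is too loose because $\|\mathsf{C}_{\mathsf{E}_{\rm p}}\|$ can exceed one. The tighter bound should follow from the specific PPSC algebra---each step acts as $T_t=I-\eb_{k_i}(\eb_{k_i}-\eb_{k_j})^\top$ with matched noise column $\eb_{k_i}-\eb_{k_j}$, combined with the summation-consistency identity $\mathbf{1}^\top\mathsf{C}_{\mathsf{E}_{\rm p}}=\mathbf{1}^\top$---which forces cancellations in $\mathsf{D}_{\mathsf{E}_{\rm p}}^+\mathsf{C}_{\mathsf{E}_{\rm p}}$ and, on the adjacency subspace, reduces its operator norm to at most $1/\lambda_{\rm ppsc}$ whenever $\mathcal{E}$ occurs. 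This delivers $\|\wb\|\leq\|\betab-\betab'\|/\lambda_{\rm ppsc}\leq\mu/\lambda_{\rm ppsc}$; plugging into Step 2 gives $(\epsilon,\delta)$-DP conditional on $\mathsf{E}_{\rm p}\in\mathcal{E}$, and Step 1 promotes this to the unconditional probability-$\rho$ guarantee.
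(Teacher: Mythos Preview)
Your Steps 1 and 2 align with the paper almost exactly: Step~1 is the content of the paper's Lemma on the $p_0$-privacy covering time (the Fr\'echet/union-bound coupon-collector estimate you give is the same computation), and Step~2 is the standard analytical Gaussian mechanism applied on $\range(\mathsf D_{\mathsf E_{\rm p}})$, which the paper carries out via the change of coordinates $\mathbf T_{\rm M}=\begin{bmatrix}\mathsf D^{\perp}\cr U_a\end{bmatrix}$ and arrives at the same $\mathcal Q^{-1}$ condition that defines $\kappa(\epsilon,\delta)$.

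The divergence is entirely in Step~3. The paper does \emph{not} pursue any cancellation argument in $\mathsf D_{\mathsf E_{\rm p}}^{+}\mathsf C_{\mathsf E_{\rm p}}$, and it does \emph{not} tie the sensitivity bound to the covering event $\mathcal E$. It simply observes that every one-step factor of $\mathsf C_{\mathsf E_{\rm p}}$ has standard basis vectors as columns, so $\mathsf C_{\mathsf E_{\rm p}}$ is column-stochastic and $\|\mathsf C_{\mathsf E_{\rm p}}\|_1=1$ for \emph{every} edge sequence; combining this with $\sigma_m^{+}(\mathsf D_{\mathsf E_{\rm p}})\geq\lambda_{\rm ppsc}$ gives the one-line bound $\|\Sigma^{-1/2}U_a\mathsf C_{\mathsf E_{\rm p}}\tilde{\betab}\|\leq \|\mathsf C_{\mathsf E_{\rm p}}\|_1\mu/\lambda_{\rm ppsc}=\mu/\lambda_{\rm ppsc}$. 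In other words, the paper takes precisely the ``naive'' route you dismissed---and the identity $\mathbf 1^\top\mathsf C_{\mathsf E_{\rm p}}=\mathbf 1^\top$ you already wrote down \emph{is} column-stochasticity, so you had the key ingredient in hand. Your proposed replacement (``should follow,'' ``forces cancellations'') is only a sketch and would need real work to make rigorous, whereas the paper's route is immediate. One honest caveat: the paper's final inequality implicitly uses $\|\mathsf C_{\mathsf E_{\rm p}}\tilde\betab\|\leq\|\mathsf C_{\mathsf E_{\rm p}}\|_1\|\tilde\betab\|$, and since, as you correctly note, $\|\mathsf C_{\mathsf E_{\rm p}}\|_2$ can exceed one, the step is clean only under the intended norm convention; your instinct that something subtle happens here is not unfounded, but the paper's resolution is a norm choice rather than an algebraic cancellation.
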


\subsection{Discussion: Privacy within the Private Graph}

Theorem \ref{theorem-PPSC} establishes the fact that the input-output mapping for an $S$-step PPSC multi-gossiping can be made differentially private with any privacy budget with an arbitrarily high probability. We have assumed that the node-to-node communications for the $S$-step PPSC multi-gossiping are secure in the standing assumption. Now, what if a malicious node, saying $m_\ast$, is within the multi-agent system, so that all communications associated with node $m_\ast$ become known to an eavesdropper (or equivalently, node $m_\ast$ is the eavesdropper)?

First of all, nodes never  send their true states to other nodes in the multi-gossiping PPSC procedure. Consequently, the presence  of such a malicious node does not impose {\em immediate} privacy concerns. Moreover,  since $\mathrm{G}_{\rm p}$ is not connected, only the nodes within the same connected component as $m_\ast$ are subjected to this {\em additional} privacy risk. Therefore, the sparsity  of $\mathrm{G}_{\rm p}$ becomes quite  useful.

Next,   consider the case where a  node $j$ is the sole neighbor of this   malicious node $m_\ast$ over $\mathrm{G}_{\rm p}$. Combing  the communication from node $j$, $\beta_j(0)-\gamma$, and the outcome of the PPSC mechanism $\gamma$ at node $j$, node $m_\ast$ will be able to infer the exact input  $\beta_j(0)$ of the node $j$. As a result, the privacy of node $j$ in terms of $\beta_j(0)$ will be fully lost to this malicious node $m_\ast$. To overcome this, it suffices for the node $j$ to have at least one  trustful  neighbor $k$, so that the state of node $j$ may have been shuffled between $j$ and $k$, which is not known to $m_\ast$. In general, if each node over $\mathrm{G}_{\rm p}$ has at least one trustful neighbor, the privacy of the nodes will have further guarantee  in terms of identifiability of $\beta_i(0),i\in \mathrm{V}$ from $\beta_i(S), i\in V$ in the presence of malicious nodes. Therefore, the structure of $\mathrm{G}_{\rm p}$ would enable stronger internal privacy preservation that goes beyond Theorem \ref{theorem-PPSC}, due to the shuffling effect \cite{Cheu2019Distributed,Balcer2020Connecting} that comes along the PPSC procedure. We leave a quantitive analysis for this PPSC enabled internal privacy protection  in future works since it is not fully aligning with the scope of the current paper.

\subsection{The PPSC-Gossip Averaging Consensus Algorithm}

%%Introduce PPSC-Gossip averaging consensus algorithm
Denote each {\em iteration} of the $S$-step Multi-Gossiping PPSC mechanism as $\mathsf{MultiGossipPPSC}$. Let each edge in $\rm G$ have the same weight $a\in(0,{1}/{n}]$, and denote ${\rm N}_i$ as the neighbor set of node $i\in{\rm V}$.
In the following,  the PPSC-Gossip averaging consensus (PPSC-Gossip-AC) algorithm is presented.
\begin{algorithm}[H]\label{algorithm:PPSC-consensus}
\leftline{{\bf Input:} The local private datasets $d_i$, $i\in\mV$, and parameters $S\in\mathbb{N}_+$ and $T\in\mathbb{N}_+$.}

\leftline{{\bf Initialize:} Set $s\gets0$ and $x_i(s)\gets  d_i$ for $i\in\mV$.}

\leftline{\mbox{$\quad$}{\bf For}  $s=1,\ldots,S$, over $\mathrm{G}_{\rm p}$ {\bf run}}

\leftline{\mbox{$\quad\qquad$} $\xb_s=\mathsf{MultiGossipPPSC}\big(\xb_{s-1}\big)$ with $\xb_s=[x_1(s);\ldots;x_n(s)]$}

\leftline{\mbox{$\quad$}{\bf For}  $s=S+1,\ldots,S+T$, over $\mathrm{G}$ {\bf run}}

\leftline{\mbox{$\quad\qquad$} $x_i(s) = x_i(s-1) + a\sum\limits_{j\in {\rm N}_i} (x_j(s-1)-x_i(s-1))$ for $i\in\mV$}

\leftline{{\bf Output:} $x_i(S+T)$, $i\in\mV$.}
\caption{PPSC-Gossip Averaging Consensus (PPSC-Gossip-AC) Algorithm }
\end{algorithm}

%%Insights of the algorithm
The PPSC-Gossip-AC algorithm is comprised of two stages: PPSC-Gossip stage and Average-Consensus stage. The former is to run  the multi-gossiping  PPSC mechanism with input $\xb_{0}$ and output $\xb_{S}$ over the private network $\mathrm{G}_{\rm p}$  for privacy encryption.
At the Average-Consensus stage, for $s=S+1,S+2,\ldots,S+T$, the standard averaging consensus algorithm is carried out over the public network $\mathrm{G}$ for local computations, where the node-to-node communications are $\xb_{S},\xb_{S+1},\ldots,\xb_{S+T-1}$.

Denote ${\bf d}=[d_1;d_2;\ldots;d_n]$ and $\chi_{\rm ea}=[\xb_{S};\xb_{S+1};\ldots;\xb_{S+T-1}]$, and define the map from ${\bf d}$ to $\chi_{\rm ea}$ as
$
\chi_{\rm ea} = \mathscr{M}_{\rm av}\big({\bf d}\big)\,.
$
Let an eavesdropper with full observations of the selected edges (i.e., $\mathsf{E}_{\rm p}$) during the PPSC-Gossip stage and the node-to-node communications (i.e., $\chi_{\rm ea}$) over the public graph $\rm G$.
In the following, similar to in  Definition \ref{def:DP}, we specify the differential privacy of the PPSC-Gossip-AC algorithm.
\begin{definition}\label{def:DP-av}
Let  $\mu>0$, $\epsilon>0$ and $0<\delta<1/2$. The $\mathscr{M}_{\rm av}$ is termed to be $(\epsilon,\delta)$-differentially private  under $\mu$-adjacency    if for all $R\subseteq \range(\mathscr{M}_{\rm av})$, there holds
  \beeq{\ba{l}
  \mathbb{P}\left(\mathscr{M}_{\rm av}\big({\bf d} \,|\, \mathsf{E}_{\rm p}\big)\in R\right) \leq e^{\epsilon}\mathbb{P}\left(\mathscr{M}_{\rm av}\big({\bf d}^\prime \,|\, \mathsf{E}_{\rm p}\big)\in R\right) + \delta
  \ea}
  for any two $\mu$-adjacent $\db, \db^\prime\in\R^n$.
\end{definition}

%Theorem
Denote by $\Ab\in\R^{n\times n}$ the Laplacian matrix of graph $\rm G$, and $\lambda_{{\rm G}}$ the algebraic connectivity  of ${\rm G}$, i.e., the second smallest eigenvalue of the Laplacian matrix $\Ab$. Clearly, $0\leq \lambda_{{\rm G}}<1$ by the standing assumption. One of our main results is summarized below.
\begin{theorem}\label{theorem-Ave}
For any $\mu>0$, $\epsilon>0$, $0<\delta<1/2$, $0<\rho<1$ and $\nu>0$, let $S \geq S_{\rho}^\ast$, $\sigma_\gammab \geq {\mu \kappa(\epsilon,\delta)}/\lambda_{\rm ppsc}$, and
\begin{equation}\label{eq:theo2-T}
T \geq \Big({\log\nu-\log(n\|\db\|^2 + 2q^2S^2\sigma_{\gammab}^2)}\Big)/\Big({2\log (1-\lambda_{\rm G})}\Big) \,.
\end{equation}
Then, the PPSC-Gossip-AC algorithm
  \begin{itemize}
    \item[(i)]preserves  $(\epsilon,\delta)$-differential privacy under $\mu$-adjacency with a probability that is at least $\rho$, and
    \item[(ii)]computes the average with a $\nu$-accuracy, i.e., $\mathbb{E}\,\|\xb_{S+T}- \frac{1}{n}{\bf 1}_n{\bf 1}_n^\top{\bf d}\|^2 \leq \nu \,.$
  \end{itemize}

\end{theorem}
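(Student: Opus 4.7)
The plan is to prove the two claims separately, using Theorem \ref{theorem-PPSC} and the post-processing invariance of differential privacy for part (i), and a standard decomposition of the consensus iterate into a mean component and a disagreement component for part (ii).

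\textbf{Part (i): Differential privacy.} My approach is to reduce the $(\epsilon,\delta)$-DP of $\mathscr{M}_{\rm av}$ to the $(\epsilon,\delta)$-DP of $\mathscr{M}_{\rm S}$. The PPSC-Gossip stage outputs $\xb_S$ via the $S$-step Multi-Gossiping PPSC mechanism from input ${\bf d}$; by Theorem \ref{theorem-PPSC} with $S\geq S_\rho^\ast$ and $\sigma_\gammab\geq \mu\kappa(\epsilon,\delta)/\lambda_{\rm ppsc}$, this step is $(\epsilon,\delta)$-DP under $\mu$-adjacency with probability at least $\rho$. The subsequent iterates $\xb_{S+1},\ldots,\xb_{S+T-1}$ are generated by the deterministic linear recursion $\xb_s = (\Ib - a\Ab)\xb_{s-1}$ over the fixed public graph $\rm G$, which does not touch ${\bf d}$ again once $\xb_S$ is formed. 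Therefore the public communication record $\chi_{\rm ea}$ is a deterministic post-processing of $\xb_S$, and the standard post-processing property of differential privacy transfers the guarantee of Theorem \ref{theorem-PPSC} directly to $\mathscr{M}_{\rm av}$.

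\textbf{Part (ii): Accuracy.} First, I would verify that each PPSC update preserves $\mathbf{1}^\top\xb_s$: on every selected pair $(k_i,k_j)$ we have $\beta_{k_i}(t)+\beta_{k_j}(t)=\beta_{k_i}(t-1)+\beta_{k_j}(t-1)$, while every other node is unchanged; iterating this yields $\mathbf{1}^\top\xb_S=\mathbf{1}^\top{\bf d}$. Next, writing the consensus update as $\xb_{s} = (\Ib - a\Ab)\xb_{s-1}$ and decomposing $\xb_S = \frac{1}{n}{\bf 1}_n{\bf 1}_n^\top\xb_S + \xb_S^\perp$, I exploit $(\Ib-a\Ab){\bf 1}_n = {\bf 1}_n$ together with the fact that on ${\bf 1}_n^\perp$ the operator norm of $\Ib-a\Ab$ is at most $1-\lambda_{\rm G}$, to obtain
\[
\bigl\|\xb_{S+T}-\tfrac{1}{n}{\bf 1}_n{\bf 1}_n^\top{\bf d}\bigr\|^2 \leq (1-\lambda_{\rm G})^{2T}\,\|\xb_S^\perp\|^2 \leq (1-\lambda_{\rm G})^{2T}\,\|\xb_S\|^2.
\]
Taking expectations and using $\xb_S=\mathsf{C}_{\mathsf{E}_{\rm p}}{\bf d}+\mathsf{D}_{\mathsf{E}_{\rm p}}\gammab$ with independent zero-mean Gaussian $\gammab$, the cross term vanishes and
\[
\mathbb{E}\|\xb_S\|^2 = \mathbb{E}\|\mathsf{C}_{\mathsf{E}_{\rm p}}{\bf d}\|^2 + \sigma_\gammab^2\,\mathbb{E}\|\mathsf{D}_{\mathsf{E}_{\rm p}}\|_F^2.
\]
Accounting for the structure of each PPSC step (at most $2q$ entries change per iteration, each by a bounded linear combination of the current state and one fresh noise) should yield $\mathbb{E}\|\mathsf{C}_{\mathsf{E}_{\rm p}}{\bf d}\|^2\leq n\|\db\|^2$ and $\mathbb{E}\|\mathsf{D}_{\mathsf{E}_{\rm p}}\|_F^2\leq 2q^2S^2$, so that $\mathbb{E}\|\xb_S\|^2\leq n\|\db\|^2+2q^2S^2\sigma_\gammab^2$. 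Finally, plugging this into the contraction bound and using the lower bound \eqref{eq:theo2-T} on $T$ (noting $\log(1-\lambda_{\rm G})<0$) gives $\mathbb{E}\|\xb_{S+T}-\frac{1}{n}{\bf 1}_n{\bf 1}_n^\top{\bf d}\|^2\leq\nu$.

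\textbf{Main obstacle.} The non-trivial step is the second-moment bound on $\xb_S$, specifically controlling $\mathbb{E}\|\mathsf{C}_{\mathsf{E}_{\rm p}}{\bf d}\|^2$ and $\mathbb{E}\|\mathsf{D}_{\mathsf{E}_{\rm p}}\|_F^2$ uniformly over all possible random edge sequences $\mathsf{E}_{\rm p}\in\Theta_{\mathsf{E}_{\rm p}}$. The updates are only summation-preserving, not norm-preserving, so one must track how each coordinate of the signal and each injected noise can propagate across the $S$ iterations within a single connected component before reaching the claimed $n$ and $2q^2S^2$ scalings. Once this bookkeeping is done, the rest of the accuracy argument is the standard geometric decay of linear averaging on ${\bf 1}_n^\perp$, and part (i) is essentially immediate from Theorem \ref{theorem-PPSC} plus post-processing.
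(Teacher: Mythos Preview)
Your proposal is correct and follows essentially the same route as the paper: part (i) is reduced to Theorem \ref{theorem-PPSC} by the observation that the averaging stage introduces no new randomness (i.e., post-processing), and part (ii) is obtained by combining summation preservation, the contraction $\|(\Ib-a\Ab-\tfrac{1}{n}\1_n\1_n^\top)\|\le 1-\lambda_{\rm G}$ on $\1_n^\perp$, and the second-moment bound $\mathbb{E}\|\xb_S\|^2\le n\|\db\|^2+2q^2S^2\sigma_\gammab^2$. The ``main obstacle'' you flag is in fact dispatched immediately in the paper by the structural observation that $\mathsf{C}_{\mathsf{E}_{\rm p}}$ has exactly $n$ nonzero entries (all equal to $1$) and $\mathsf{D}_{\mathsf{E}_{\rm p}}$ has at most $2qS$ nonzero entries (all $\pm 1$), yielding $\|\mathsf{C}_{\mathsf{E}_{\rm p}}\|_F^2=n$ and $\|\mathsf{D}_{\mathsf{E}_{\rm p}}\|_F^2\le 2qS$ deterministically; combined with $\mathbb{E}\|\gammab\|^2=qS\sigma_\gammab^2$ this gives the claimed bound without any further bookkeeping.
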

By Markov's inequality, the condition $\mathbb{E}\,\|\xb_{S+T}- \frac{1}{n}{\bf 1}_n{\bf 1}_n^\top{\bf d}\|^2 \leq \nu $ guarantees that
$$
\mathbb{P} \Big( \|\xb_{S+T}- \frac{1}{n}{\bf 1}_n{\bf 1}_n^\top{\bf d}\| \geq \varsigma \Big)\leq \frac{\nu}{\varsigma^2}
$$
for any $\varsigma>0$. In other words, Theorem \ref{theorem-Ave} establishes that $\xb_{S+T}$ may get arbitrarily close to  $\frac{1}{n}{\bf 1}_n{\bf 1}_n^\top{\bf d}$ for arbitrarily high probability when $\nu$ is chosen to be small enough.

%\begin{remark}
%The $(\epsilon,\delta)$-differential privacy stated in Theorem \ref{theorem-Ave} indeed occurs with a probability $p$, which, termed as probabilistic $(\epsilon,\delta)$-differential privacy, is a novel extension of the differential privacy metric, to the best of our knowledge. We note that this probabilistic $(\epsilon,\delta)$-differential privacy is meaningful in practice since the most of events is practically unobservable. Besides, it is worth noting that this type of differential privacy is different from the  $(\epsilon,\delta)$-probabilistic differential privacy \cite{machanavajjhala2008privacy}, which indicates that the $\epsilon$- differential privacy occurs with a probability larger than $1-\delta$.
%
%\end{remark}

%
%\begin{remark}
%  We remark that there might be some eavesdroppers monitoring node-to-node communications only. In this case, it is highlighted that the same differential privacy can be realized even if the eavesdroppers have all node-to-node communications of the PPSC-Gossip stage.
%\end{remark}

%%%%%%%%%
\section{PPSC-Gossip Linear-Equation Solver}
\label{sec-eq}

In this section, the PPSC-Gossip-AC algorithm is explored to solve the network linear equation (\ref{eq:LINEAR ALGEBRAIC EQUATION}). Recall that in such a network linear equation, each agent $i$ holds   a linear algebraic equation $\mathfrak{E}_i: \,\Hb_i^\top\yb=z_i$ with an unknown $\yb\in\R^m$, $\Hb_i \in\mathbb{R}^{m}$ and $z_i\in \mathbb{R}$, and all agents aim to solve  the overall linear equation $\Hb\yb=\zb$, where   the $i$-th row of $\Hb\in\R^{n\times m}$ and $\zb\in\R^n$ are $\Hb_i^\top$ and  $z_i$, respectively.

Regarding solutions of the equation (\ref{eq:LINEAR ALGEBRAIC EQUATION}), there are three cases: (i) a unique exact solution; (ii) a unique least-squares solution; (iii) infinite number of solutions. In the following, we focus on the case (i), and suppose $\rank(\Hb)=m$ and $\zb\in\span(\Hb)$, which guarantees the unique exact solution $\yb^\ast:=(\Hb^\top\Hb)^{-1}\Hb^\top\zb$. We note that the case (ii) can be handled by adapting the algorithm in the next section.

%%%
\subsection{Adjacency of Linear Equations}

%%Adjacency of Linear Equations is specified by the distance of affine subspaces
Note that solutions of each linear equation $\mathfrak{E}_i$ specify a unique affine solution subspace $\Eq_i:=\big\{\yb\in\R^m:\Hb_i^\top\yb=z_i\big\}$. In the following, the adjacency of two linear equations is characterized by the distance of their corresponding affine solution subspaces.

%%Define two distances
For any two affine subspaces $\Eq_i:=\big\{\yb\in\R^m:\Hb_i^\top\yb=z_i\big\}$ and $\Eq_i^\prime:=\big\{\yb\in\R^m:{\Hb_i^\prime}^\top\yb=z_i^\prime\big\}$, we define two distances between them by
\begin{equation}\label{eq:r_d}
  \mbox{d}_{\textnormal{rotational}}(\mathscr{E}_i,\mathscr{E}_i^\prime)
  := \bigg\|\frac{\Hb_i\Hb_i^\top}{\Hb_i^\top\Hb_i}-\frac{\Hb_i^\prime\Hb_i^{\prime\top}}{\Hb_i^{\prime\top}\Hb_i^\prime}\bigg\|\,
\end{equation}
\begin{equation}\label{eq:t_d}
\mbox{d}_{\textnormal{translational}}(\Eq_i,\Eq_i^\prime) :=\bigg\|\frac{z_i\Hb_i}{\Hb_i^\top\Hb_i}-\frac{z_i^\prime\Hb_i^\prime}{\Hb_i^{\prime\top}\Hb_i^\prime}\bigg\|\,.
\end{equation}

%%Insight of two distances
For the affine subspace $\Eq_i$, we say that $\mathscr{L}_i:=\big\{\yb\in\R^m:{\Hb_i}^\top\yb=0\big\}$ is the subspace associated to  $\Eq_i$, and $\frac{z_i\Hb_i}{\Hb_i^\top\Hb_i}$ is the translational vector from  $\mathscr{L}_i$ to  $\Eq_i$.  Similarly, $\mathscr{L}_i^\prime$ can be identified for $\Eq_i^\prime$. The intuition and rational in introducing these two distances (\ref{eq:r_d}) and (\ref{eq:t_d}) are the following:
\begin{itemize}
  \item[(i)] The $\mbox{d}_{\textnormal{rotational}}$ is the gap of the two respective orthogonal projection operators onto $\mathscr{L}_i$ and $\mathscr{L}_i^\prime$, i.e.,
      \[\ba{rcl}
\mbox{d}_{\textnormal{rotational}} = \sup\limits_{\|\yb\|=1}\| \mathpzc{P}_{\mathscr{L}_i}(\yb)-\mathpzc{P}_{\mathscr{L}_i^\prime}(\yb)\|\,;
\ea\]
  \item[(ii)] The $\mbox{d}_{\textnormal{translational}}$ is the distance  between the two translational vectors of $\Eq_i,\Eq_i^\prime$.
\end{itemize}
A geometric illustration of both distances is shown in Figures \ref{fig-adj-a} and \ref{fig-adj-b}, respectively.

\begin{figure}[ht]
\centering
\begin{subfigure}{.4\textwidth}
\centering
		\includegraphics[width=1\linewidth]{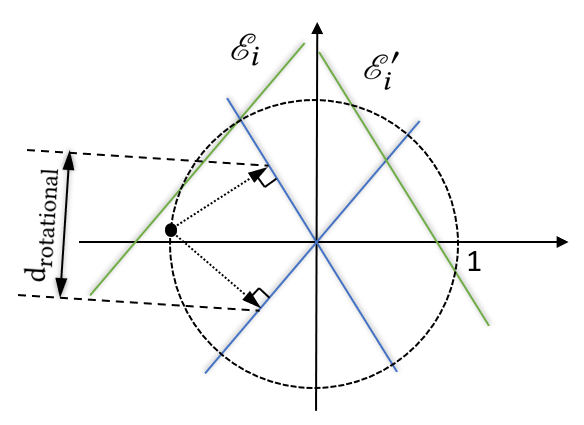}
\caption{The distance $\mbox{d}_{\textnormal{rotational}}$}
\label{fig-adj-a}
\end{subfigure}
\quad
\begin{subfigure}{.4\textwidth}
\centering
		\includegraphics[width=0.92\linewidth]{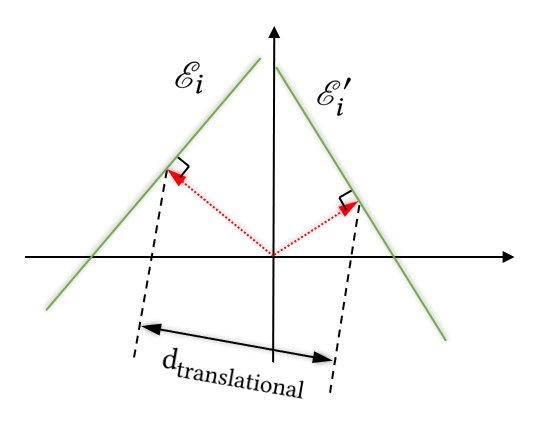}
\caption{The distance $\mbox{d}_{\textnormal{translational}}$}
\label{fig-adj-b}
\end{subfigure}
\caption{An illustration of the ``distance" between two affine subspaces $\Eq_i$ and $\Eq_i^\prime$. The affine subspaces are green lines, and their associated subspaces and translational vectors are denoted by blue lines and red dashed lines with arrows, respectively.
}
\label{fig-adj}
\end{figure}

In view of the previous analysis, we quantify the ``distance" between two affine subspaces $\Eq_i,\Eq_i^\prime$ by the sum of two quantities: $\mbox{d}_{\textnormal{rotational}}(\Eq_i,\Eq_i^\prime)$ defined in (\ref{eq:r_d}) and $\mbox{d}_{\textnormal{translational}}(\Eq_i,\Eq_i^\prime)$ defined in (\ref{eq:t_d}).
This further inspires the following definition.
\begin{definition}
For two linear equations $\mathfrak{E}:\Hb\yb=\zb,\ \mathfrak{E}^\prime:\Hb^\prime\yb=\zb^\prime$, we call them to be $\mu$-adjacent if
\[
\textnormal{d}_{\textnormal{rotational}}(\mathscr{E}_i,\mathscr{E}_i^\prime)+\textnormal{d}_{\textnormal{translational}}(\mathscr{E}_i,\mathscr{E}_i^\prime)\le\mu\,,\quad
\quad
i\in\mathrm{V}.
\]
\end{definition}

%%%%
\subsection{Distributed Computing Protocol}

In the following, the PPSC-Gossip network linear-equation (PPSC-Gossip-NLE) solver is presented.

	\begin{algorithm}[H]
\leftline{{\bf Input:} Local data $(\Hb_k,\zb_k)$, $k\in\mV$, initial value $\zeta_0\in\R^m$, and parameters $S,T,L\in\mathbb{N}_+$.}

\leftline{{\bf Initialize:} $s\gets0$, $l\gets0$ and $x_k(s)\gets \mathpzc{P}_{\Eq_k}(\zeta_0)$ for $k\in\mV$.}
	
\leftline{\mbox{$\quad$} {\bf For} $l = 0,1,\ldots, L-1$,  {\bf run}}

\leftline{\mbox{$\quad\qquad$} {\bf For} $s = l(S+T+1)+1,\ldots,l(S+T+1)+S$,  over $\mathrm{G}_{\rm p}$ {\bf run}}

\leftline{\mbox{$\quad\qquad \qquad$} $\xb_s=\mathsf{MultiGossipPPSC}\big(\xb_{s-1}\big)$}

\leftline{\mbox{$\quad\qquad$} {\bf For} $s= l(S+T+1)+S+1,\ldots,l(S+T+1)+S+T$, over $\mathrm{G}$ {\bf run}}

\leftline{\mbox{$\quad\qquad\qquad$} $x_k(s) = x_i(s-1)+a\sum\limits_{j\in {\rm N}_i} (x_j(s-1)-x_i(s-1))$ for $k\in\mV$.}

\leftline{\mbox{$\quad\qquad$} {\bf For} $s= l(S+T+1)+S+T+1$, {\bf run}}

\leftline{\mbox{$\quad\qquad\qquad$} $x_k(s) = \mathpzc{P}_{\Eq_k}(x_k(s-1))$  for $k\in\mV$.}

\leftline{{\bf Output:} $x_k(L(S+T+1))$, for $k\in\mV$.}
		\caption{PPSC-Gossip Network Linear-Equation (PPSC-Gossip-NLE) Solver}
	\end{algorithm}

%%Insight of Algorithm 4
The PPSC-Gossip-NLE solver is  comprised of $L$ recursions, each of which needs $S+T+1$ steps and consists of three different procedures in order: (i) PPSC-Gossip procedure with $S$ steps; (ii) Average-Consensus procedure with $T$ steps; (iii) one-step orthogonal projection. The procedures (i) and (ii) together are exactly the PPSC-Gossip-AC algorithm. At the procedure (iii), the computation is self performed by each agent to orthogonally project its state to its own affine subspace, where the local private datasets are encoded.

\subsection{Computation Accuracy and Differential Privacy}

Let  $\chi_l=[\xb_{l(S+T+1)+S};\xb_{l(S+T+1)+S+1};\ldots;\xb_{l(S+T+1)+S+T}]$ and $\chi_{\rm ea} = [\chi_0;\ldots;\chi_{L-1}]$. Define the sequence of all selected communication edges at the PPSC-Gossip procedures as $\mathcal{E}_{\rm p}=\{\mathcal{E}_{{\rm p},0},\ldots,\mathcal{E}_{{\rm p},L-1}\}$ with $\mathcal{E}_{{\rm p},l}=\big(\mathsf{E}_{{\rm p},l}^1,\ldots,\mathsf{E}_{{\rm p},l}^q\big)$ and
$
\mathsf{E}_{{\rm p},l}^{k}=\left(\mathsf{e}_{{\rm p}}^k(l(S+T+1)+1),\ldots,\mathsf{e}_{{\rm p}}^k(l(S+T+1)+S)\right).
$
Define the map from the network linear equations $\mathfrak{E}$ to $\chi_{\rm ea}$  as $\chi_{\rm ea} = \mathscr{M}_{\rm eq}\big(\mathfrak{E}\big)$.
As in the previous sections, let an eavesdropper with  full observations of the selected edges (i.e., $\mathcal{E}_{\rm p}$) at the PPSC-Gossip procedure and the node-to-node communications (i.e.,$\chi_{\rm ea}$) over the public network $\mathrm{G}$. In the following, we specify the differential privacy of the PPSC-Gossip-NLE solver.

\begin{definition}
   Let $\mu>0$, $\epsilon>0$ and $0<\delta<1/2$. The $\mathscr{M}_{\rm eq}$ is termed to be \emph{$(\epsilon,\delta)$-differentially private under $\mu$-adjacency} if for all $R\subseteq \range(\mathscr{M}_{\rm eq})$, there holds
  \beeq{\ba{l}
  \mathbb{P}\left(\mathscr{M}_{\rm eq}\big(\mathfrak{E} \,|\, \mathcal{E}_{\rm p}\,\big)\in R\right) \leq e^{\epsilon}\mathbb{P}\left(\mathscr{M}_{\rm eq}\big(\mathfrak{E}^\prime \,|\, \mathcal{E}_{\rm p}\big)\in R\right) + \delta
  \ea}
  for any two $\mu$-adjacent equations $\mathfrak{E}^\prime, \mathfrak{E}$.
\end{definition}

%%Theorem
Let $\lambda_H=\sigma_M\big(\Ib_m -\frac{1}{n}\sum_{i=1}^n \frac{\Hb_i\Hb_i^\top}{\Hb_i^\top\Hb_i}\big)$.
As $\rank(\Hb)=m$ by assumption, it can be seen that matrix $\frac{1}{n}\sum_{i=1}^n \frac{\Hb_i\Hb_i^\top}{\Hb_i^\top\Hb_i}$ is strictly positive definite with all eigenvalues within the unit circle. Thus, $1>\lambda_H\geq 0$.
Let
\[\ba{l}
L^\ast(\nu) = {\left(\log\nu - \log(2n\|\zeta_0-\yb^\ast\|^2)\right)}/{\big(\log(\varepsilon_0+\lambda_H^2)-\log(1+ \varepsilon_0)\big)}\\
S^\ast(\rho,L) = \Big({\log(1-\rho^{\frac{1}{2qL}})-\log n_{\max}}\Big)/\left({\log(1-\frac{1+r^\dag}{n_{\max}})}\right)\,\\
\sigma_\gammab^\ast(\epsilon,\nu,L,S) = {\mu\kappa(\frac{\epsilon}{L},\delta_{\sharp})(\sqrt{\nu}+\phi_{\nu}+\sqrt{n})}/ {\lambda_{\rm ppsc}}
\ea\]
with $1>\varepsilon_0>0$, $\phi^\ast(\nu)=2\sqrt{n}\|\yb^\ast\|+ \sqrt{n}\|\zeta_0\|  + \frac{2-\lambda_H}{1-\lambda_H}\sqrt{\nu}$ and $\delta_{\sharp}(\epsilon,\delta,L)
=(\delta+e^{\epsilon})^{1\over L}-\exp(\frac{\epsilon}{L})$, and
\[\ba{rcl}
  T^\ast(\rho,\nu,L,S,\sigma_{\gammab}) = \frac{1}{\log(1-\lambda_{\rm G})}\min\left\{\frac{1}{2}\log\frac{(1-\rho^{1\over 2L})\nu}{{\phi^\ast}^2 + 2q^2S^2\sigma_{\gammab}^2}, \log\frac{1-\lambda_H^2}{5n(1+{1\over \varepsilon_0})},  \log\frac{\nu(1-\lambda_H^2)}{16(n\|\yb^\ast\|^2+S^2q^2\sigma_{\gammab}^2)}\right\}\,.
\ea\]
Next, we show that the PPSC-Gossip-NLE solver can solve the linear equations with any accuracy level, while achieving an arbitrary $(\epsilon,\delta)$-differential privacy with any prescribed probability.

\begin{theorem}\label{theorem-Equ}
For any $\mu>0$, $\epsilon>0$, $0<\delta<1/2$, $\rho>0$ and $\nu>0$, let $L \geq L^\ast$, $S\geq S^\ast$, $\sigma_\gammab\geq \sigma_\gammab^\ast$,  and $T\geq T^\ast$. Then, the PPSC-Gossip-NLE solver
  \begin{itemize}
    \item[(i)] preserves $(\epsilon,\delta)$-differential privacy under $\mu$-adjacency with a probability that is at least $\rho$, and
    \item[(ii)] computes the solution with a $\nu$-accuracy, i.e., $\mathbb{E}\left(\|\xb_{L(S+T+1)}- ({\bf 1}_n\otimes \yb^\ast)\|^2\right) \leq \nu$.
  \end{itemize}

\end{theorem}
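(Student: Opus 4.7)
The plan is to prove the two claims by exploiting the structural separation of the solver into $L$ outer recursions, each comprising a PPSC-Gossip sub-stage, an Average-Consensus sub-stage, and a local projection; the two claims are then tied together through the parameter choices of $L^\ast$, $S^\ast$, $T^\ast$, and $\sigma_\gammab^\ast$.

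For claim (ii), I would track the expected error $e_l := \mathbb{E}\|\xb_{l(S+T+1)} - \mathbf{1}_n \otimes \yb^\ast\|^2$ across outer recursions. The PPSC sub-stage is summation-preserving and injects zero-mean noise of total variance at most $q^2 S^2 \sigma_\gammab^2$; the $T$-step consensus sub-stage leaves the network average invariant while contracting the disagreement component by $(1-\lambda_{\rm G})^{2T}$; and the projection $\mathpzc{P}_{\Eq_k}$ pulls each node toward its local solution set. In the noise-free, perfect-consensus idealization, the per-recursion update of the network average reduces to $\bar{\yb}_{l+1} = \frac{1}{n}\sum_i \mathpzc{P}_{\Eq_i}(\bar{\yb}_l)$, a Kaczmarz-type contraction toward $\yb^\ast$ with rate $\lambda_H$ (since $\sigma_M(\Ib_m - \frac{1}{n}\sum_i \frac{\Hb_i\Hb_i^\top}{\Hb_i^\top \Hb_i})=\lambda_H$). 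Choosing $T \geq T^\ast$ makes the consensus residual and propagated PPSC noise contribute at most an $\varepsilon_0$-perturbation per recursion, yielding a bound of the form $e_{l+1} \leq (\lambda_H^2 + \varepsilon_0)\, e_l + c_0$ for a controllable additive constant; iterating $L \geq L^\ast$ times then collapses the resulting geometric series below $\nu$.

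For claim (i), I would reduce the differential privacy of the full protocol to the $L$-fold adaptive composition of Theorem \ref{theorem-PPSC}. Because the Average-Consensus updates and the projection step are deterministic functions of the PPSC outputs, the eavesdropper's view $\chi_{\rm ea}$ within each recursion is a post-processing of that recursion's PPSC output, so the full mechanism $\mathscr{M}_{\rm eq}$ is the composition of $L$ PPSC mechanisms. The key technical step is transferring the $\mu$-adjacency of two equations $\mathfrak{E}, \mathfrak{E}^\prime$ to an $L^2$ adjacency of the inputs fed into each PPSC sub-stage. Using the rotational distance to bound the Lipschitz sensitivity of $\mathpzc{P}_{\Eq_k}$ in the state and the translational distance to bound its offset, and using the accuracy bound from (ii) to confine the iterates to within a $\phi^\ast(\nu)$-ball of $\mathbf{1}_n\otimes \yb^\ast$, one obtains $\|\xb_{l(S+T+1)} - \xb^{\prime}_{l(S+T+1)}\| \leq \mu(\sqrt{\nu}+\phi^\ast(\nu)+\sqrt{n})$ with probability at least $\rho^{1/(2L)}$. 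With $S\geq S^\ast$ and $\sigma_\gammab \geq \sigma_\gammab^\ast$, Theorem \ref{theorem-PPSC} then guarantees each round is $(\epsilon/L, \delta_\sharp)$-differentially private with probability at least $\rho^{1/(2L)}$. Basic composition, $(e^{\epsilon/L}+\delta_\sharp)^L \leq e^\epsilon + \delta$, which is exactly the definition of $\delta_\sharp$, yields $(\epsilon,\delta)$-DP for $\mathscr{M}_{\rm eq}$, and a union bound aggregates the $2L$ success events into the probability floor $\rho$.

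The two halves are tied together by the parameter choices: the accuracy guarantee of (ii) is exactly what enters $\sigma_\gammab^\ast$ as the adjacency-transfer sensitivity needed by (i), closing the loop. The main obstacle I anticipate is the adjacency-transfer analysis itself: cleanly showing that the rotational and translational distances yield the Lipschitz-type bound $\mu(\sqrt{\nu}+\phi^\ast(\nu)+\sqrt{n})$ requires coupling the randomized Kaczmarz recursion under $\mathfrak{E}$ and $\mathfrak{E}^\prime$ through shared PPSC randomness and edge sequences, and carefully splitting the error into a ``rotational'' part that amplifies like the projection operators differ and a ``translational'' part that behaves like an additive drift. Everything else---the contraction by $\lambda_H$, bookkeeping of noise variance through the consensus stage, and the approximate-DP composition---reduces to standard computations once the sensitivity bound is in hand.
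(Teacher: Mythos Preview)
Your proposal is essentially correct and mirrors the paper's proof closely: the paper also uses the $\lambda_H$-contraction of the projected-average map (Lemma~\ref{lemma-E}), controls the consensus residual $\Delta_l$ via Lemma~\ref{lemma-B}, obtains the per-round sensitivity $\mu(\sqrt{\nu}+\phi^\ast+\sqrt{n})$, and closes with the composition $(e^{\epsilon/L}+\delta_\sharp)^L \leq e^\epsilon + \delta$ together with a union over the $2L$ success events.

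One imprecision is worth flagging: you write that you will ``use the accuracy bound from (ii) to confine the iterates,'' but (ii) is an \emph{expectation} bound at the \emph{terminal} time and cannot by itself deliver the pathwise confinement $\|\phib_l\|\leq\phi^\ast$ that the sensitivity calculation needs at every intermediate round $l$. The paper instead conditions on the joint event $\bigcap_{l}\big(\mcalQ_S^l\cap\{\|\Delta_l(\phib_l)\|^2\leq\nu\}\big)$, uses Markov's inequality on Lemma~\ref{lemma-B} to show this event has probability at least $\rho$, and then invokes a deterministic confinement lemma (Lemma~\ref{lemma-E-nu}: if $\|\Delta_i\|^2\leq\nu$ for all $i\leq l$ then $\|\phib_{l+1}\|\leq\phi^\ast$). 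Thus the proofs of (i) and (ii) are logically parallel---both resting on the same $\lambda_H$-contraction and the residual bound on $\Delta_l$---rather than (i) depending on the conclusion of (ii). Your mention of the $2L$ events and the $\rho^{1/(2L)}$ floor suggests you already have this structure in mind; just be careful not to invoke the expectation statement (ii) where the pathwise high-probability version is what is actually required.
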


\section{PPSC-Gossip Distributed Convex Optimization}
\label{sec-5}

In this section, the PPSC-Gossip-AC algorithm is explored to solve the distributed convex optimization problem (\ref{eq:opti}), i.e., each agent $i$ holds a private convex function $f_i(\cdot):\R^m \rightarrow \R$, and all agents aim to solve the  optimization problem
$\min_{\yb\in \mathtt{C}} f(\yb):=\sum_{i=1}^n f_i(\yb)$, where
$\mathtt{C}\subset\R^m$ is a   compact convex set.

In order to facilitate a convenient discussion, we assume all functions $f_i$, $i=1,\ldots,n$ are parameterized in the form of $f_i(\yb)=g_i(\tau_i,\yb)$, where parameter vectors $\tau_i\in\mathcal{T}$ are privacy-sensitive with a public bounded set $\mathcal{T}\subset\R^{d}$, and  functions $g_i:\R^{d}\times \R^m\rightarrow\R$ are continuously differentiable. Without loss of generality we assume that the  structure (form) of $g_i$ is public and the privacy sensitivity takes place at the parameters $\tau_i$.
The adjacency of two functions is  characterized as below.
\begin{definition}
Let $f_i(\yb)=g_i(\tau_i,\yb)$ and $f_i^\prime(\yb)=g_i(\tau_i^\prime,\yb)$ with $\tau_i, \tau_i^\prime\in\mathcal{T}$ for $i\in\mathrm{V}$. We say $\mathbf{F}=[f_1(\yb);\ldots;f_n(\yb)]$ and $\mathbf{F}^\prime =[f_1^\prime(\yb);\ldots;f_n^\prime(\yb)]$ to be $\mu$-adjacent if $\|\tau_i-\tau_i^\prime\| \leq \mu$ holds for $i\in\mathrm{V}$.
\end{definition}

\subsection{Distributed Computing Protocol}

Let  $\mathpzc{P}_{\mathtt{C}}(\yb)$ be the projection of $\yb\in\R^m$ on the set $\mathtt{C}$, i.e.,
$
\mathpzc{P}_{\mathtt{C}}(\yb) = \arg \min_{\yb^\prime\in \mathtt{C}}\|\yb-\yb^\prime\|\,.
$
In the following, the PPSC-Gossip distributed convex optimization (PPSC-Gossip-DCO) algorithm is presented.
\begin{algorithm}[H]
\leftline{{\bf Input:} The subgradient $\nabla f_k(\cdot)$,  stepsize $\alpha_l>0$,  initial value $\zeta_0\in \mathtt{C}$ and parameters $S,T,L\in\mathbb{N}_+$.}

\leftline{{\bf Initialize:} $s\gets0$, $l\gets0$ and $x_k(s)\gets \mathpzc{P}_{\mathtt{C}}\big(\zeta_0-\alpha_0\nabla f_k(\zeta_0)\big)$.}
	
\leftline{\mbox{$\quad$} {\bf For} $l = 0,1,\ldots, L-1$,   over $\mathrm{G}_{\rm p}$ {\bf run}}

\leftline{\mbox{$\quad\qquad$} {\bf For} $s = l(S+T+1)+1,\ldots,l(S+T+1)+S$,  {\bf run}}

\leftline{\mbox{$\quad\qquad \qquad$} $\xb_s=\mathsf{MultiGossipPPSC}\big(\xb_{s-1}\big)$}

\leftline{\mbox{$\quad\qquad$} {\bf For} $s = l(S+T+1)+S+1,\ldots,l(S+T+1)+S+T$, over $\mathrm{G}$ {\bf run}}

\leftline{\mbox{$\quad\qquad\qquad$} $x_k(s) = x_i(s-1)+a\sum\limits_{j\in {\rm N}_i} (x_j(s-1)-x_i(s-1))$ for $k\in\mV$.}

\leftline{\mbox{$\quad\qquad$} {\bf For} $s = l(S+T+1)+S+T+1$, at each node $k\in\mV$ {\bf run}}

\leftline{\mbox{$\quad\qquad\qquad$} $x_k(s) = \mathpzc{P}_{\mathtt{C}}\big(x_k(s-1)-\alpha_{l+1}\nabla f_k(x_k(s-1))\big)$   for $k\in\mV$.}

\leftline{{\bf Output:} $x_k(L(S+T+1))$, for $k\in\mV$.}
		\caption{PPSC-Gossip Distributed Convex Optimization (PPSC-Gossip-DCO) Algorithm}
	\end{algorithm}

As in the PPSC-Gossip-NLE algorithm, the above PPSC-Gossip-DCO algorithm is also comprised of $L$ recursions, each of which needs $S+T+1$ steps and consists of three different procedures in order: (i) PPSC-Gossip procedures with $S$ steps; (ii) Average-Consensus procedures with $T$ steps; (iii) one-step projected subgradient descent. The procedures of (i) and (ii) together are exactly the PPSC-Gossip-AC algorithm, while the third procedure is self performed by each agent to take the subgradient descent and then projection on the convex set $\mathtt{C}$. In this way, the local private functions $f_k$ are encoded in the update rules in the form of the subgradients.

\subsection{Computation Accuracy and Differential Privacy}

We follow the definitions of $\mathcal{E}_{\rm p}$ and $\chi_{\rm ea}$ in Section \ref{sec-eq}, and define the map from the functions $\mathbf{F} :=[f_1;f_2;\ldots;f_n]$ to $\chi_{\rm ea}$  as $\chi_{\rm ea} = \mathscr{M}_{\rm op}\big(\mathbf{F} \big)$. As in the previous sections, let an eavesdropper with  full observations of the selected edges (i.e., $\mathcal{E}_{\rm p}$) at the PPSC-Gossip procedure and the node-to-node communications (i.e.,$\chi_{\rm ea}$) over the public network $\mathrm{G}$. We then specify the notion of differential privacy for the PPSC-Gossip-DCO algorithm.
\begin{definition}
Let $\mu>0$, $\epsilon>0$ and $0<\delta<1/2$. The $\mathscr{M}_{\rm op}$ is termed to be $(\epsilon,\delta)$-differentially private under $\mu$-adjacency if for all $R\subseteq \range(\mathscr{M}_{\rm op})$, there holds
  \beeq{\ba{l}
  \mathbb{P}\left(\mathscr{M}_{\rm op}\big(\mathbf{F} \,|\, \mathcal{E}_{\rm p}\big)\in R\right)  \leq e^{\epsilon}\mathbb{P}\left(\mathscr{M}_{\rm op}\big(\mathbf{F}^\prime \,|\, \mathcal{E}_{\rm p}\big)\in R\right) + \delta
  \ea}
  for any two  $\mu$-adjacent $\mathbf{F}^\prime, \mathbf{F}$.
\end{definition}

We denote the optimal solution set of (\ref{eq:opti}) as $\mathtt{C}^\dag\subseteq \mathtt{C}$.
Let $\phi^\dag=\max\limits_{\yb\in \mathtt{C}}\|\yb\|$ and $$
g^\dag =  \max\limits_{(i,\tau_i,\yb)\in\mV\times\mathcal{T}\times\Omega_{\nu}} \left\|\frac{\partial g_i}{\partial \yb}(\tau_i,\yb)\right\|
$$ with $\Omega_\nu:=\{\yb\in\R^m: \|\yb\|^2\leq \max\limits_{\yb^\prime\in \mathtt{C}}\|\yb^\prime\|^2 + \nu\}$.
Let
\[\ba{l}
S^\dag(\rho,L) = \Big({\log(1-\rho^{\frac{1}{qL}})-\log n_{\max}}\Big)/{\log(1-\frac{1+r^\dag}{n_{\max}})}\,\\
\sigma_\gammab^\dag(\mu,\epsilon,\delta,L) = {n\mu g^\dag\mathcal{R}(\frac{\epsilon}{L},\delta_\sharp)} /{\lambda_{\rm ppsc}}\,\\
T^\dag(p,\nu,L,S,\sigma_{\gammab})= \Big({\log((1-p^{1\over L})\nu\alpha_L^4)-\log(n{\phi^\dag}^2 + 2q^2S^2\sigma_{\gammab}^2)}\Big)/{\log(1-\lambda_{\rm G})^2}\,.
\ea\]
Regarding the differential privacy and computation accuracy of the PPSC-Gossip-DCO algorithm, the following result is formulated.
\begin{theorem}\label{theorem-Op}
Let the stepsize $\alpha_l=\frac{1}{l+1}$. For any $\mu>0$, $\epsilon>0$, $0<\delta<1/2$, $0<\rho<1$, $0<p<1$ and $\nu>0$, there exists an $L^\dag>0$ such that for all $L\geq L^\dag$, $S\geq S^\dag$, $\sigma_\gammab\geq \sigma_\gammab^\dag$ and $T\geq T^\dag$, the PPSC-Gossip-DCO algorithm
  \begin{itemize}
    \item[(i)] computes a $\nu$-accuracy  with a probability that is at least $p$, i.e.,
  \beeq{
  \mathbb{P}\left(\|\xb_{L(S+T+1)}-\1_n\otimes\yb^\dag\|^2\leq \nu\right)\geq  p\,,\quad \mbox{for some $\yb^\dag\in \mathtt{C}^\dag$}
  }
    \item[(ii)] preserves $(\epsilon,\delta)$-differential privacy under $\mu$-adjacency with a probability that is at least $\rho$.
  \end{itemize}
\end{theorem}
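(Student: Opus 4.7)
The plan is to follow the two-stage strategy used for Theorems~\ref{theorem-Ave} and~\ref{theorem-Equ}: establish differential privacy via composition of the per-round PPSC guarantee, and establish accuracy via a distributed projected-subgradient analysis whose noise and disagreement terms are controlled by $S, T, L$. For the privacy claim (ii), I would observe that across the $L$ outer recursions the only information leaked to the eavesdropper comes from the $L$ independent executions of the $S$-step multi-gossiping PPSC mechanism, together with the subsequent public averaging-consensus iterations; the final projected subgradient step at each agent depends on the private $f_k$ only through that agent's own state and is therefore post-processing relative to the public node-to-node communications over $\mathrm{G}$. It remains to bound the sensitivity of the input of each PPSC round with respect to $\mu$-adjacent $\mathbf{F},\mathbf{F}'$. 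Using that $\mathpzc{P}_{\mathtt{C}}$ is non-expansive, that $\|\nabla f_k(\yb)-\nabla f_k'(\yb)\|\leq g^\dag\mu$ for $\mu$-adjacent parameters on $\Omega_\nu$, and that the averaging step is a row-stochastic linear map, the per-round state-level sensitivity is of order $n\mu g^\dag$. Applying Theorem~\ref{theorem-PPSC} to each of the $L$ rounds with privacy budget $(\epsilon/L,\delta_\sharp)$ and noise level $\sigma_\gammab\geq\sigma_\gammab^\dag$ yields the per-round guarantee with probability $\rho^{1/L}$, ensured by $S\geq S^\dag$. The calibration $\delta_\sharp=(\delta+e^\epsilon)^{1/L}-\exp(\epsilon/L)$ is designed so that advanced composition across the $L$ rounds recovers the target $(\epsilon,\delta)$ guarantee, while the joint success event over all rounds has probability at least $\rho$.

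For the accuracy claim (i), let $\bar{x}_s=\frac{1}{n}\mathbf{1}_n^\top\xb_s$. The multi-gossiping PPSC mechanism preserves the sum of the node states exactly, and the averaging consensus over $\mathrm{G}$ also preserves it, so $\bar{x}_s$ is updated only at the projected-subgradient steps. If the agents were consensual immediately before a subgradient step, then $\bar{x}_s$ would evolve as a classical centralized projected subgradient iteration for $f(\yb)=\sum_k f_k(\yb)/n$ with stepsize $\alpha_{l+1}=1/(l+1)$. The deviation from this idealized recursion is driven by the agent-level disagreement $\|\xb_{l(S+T+1)+S+T}-\mathbf{1}_n\bar{x}_{l(S+T+1)+S+T}\|$, which, by the same spectral argument underlying Theorems~\ref{theorem-Ave}-\ref{theorem-Equ}, decays like $(1-\lambda_{\rm G})^{2T}$ times a factor of order $n{\phi^\dag}^2+q^2S^2\sigma_\gammab^2$; the choice $T\geq T^\dag$ pushes this disagreement below a prescribed threshold relative to $\alpha_L^2$. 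Compactness of $\mathtt{C}$ together with the gradient bound $g^\dag$ on $\Omega_\nu$ provides the a priori estimate needed to keep realized iterates inside $\Omega_\nu$. A Polyak-type analysis of the noisy projected subgradient recursion, exploiting $\sum_l\alpha_l=\infty$ and $\sum_l\alpha_l^2<\infty$, then yields mean-square convergence of $\bar{x}_{L(S+T+1)}$ to some $\yb^\dag\in\mathtt{C}^\dag$. Choosing $L\geq L^\dag$ drives the mean-square error below $(1-p^{1/L})\nu$, and combining with the disagreement bound via Markov's inequality gives $\mathbb{P}(\|\xb_{L(S+T+1)}-\mathbf{1}_n\otimes\yb^\dag\|^2\leq\nu)\geq p$.

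The hardest step is the accuracy analysis. Unlike the linear-equation setting where the per-round operator is a fixed affine projection, here the projected-subgradient step is nonstationary through $\alpha_l$, and the PPSC noise injected during the gossip stage enters the subgradient recursion in a non-martingale way because it is correlated with the selected edge sequence $\mathcal{E}_{\rm p}$. Carefully decoupling the contribution of this noise (which is absorbed into the consensus-error analysis and bounded uniformly in $L$) from the fluctuations of the subgradient method itself, and showing that $\alpha_l=1/(l+1)$ simultaneously delivers $\sum_l\alpha_l=\infty$ for asymptotic consistency and $\sum_l\alpha_l^2<\infty$ for convergence to an optimum, is the technical crux. The $\alpha_L^{-4}$ factor appearing in $T^\dag$ reflects that the disagreement must decay strictly faster than $\alpha_L^2$ in order not to overwhelm the descent as $l\to L$; this coupling between the depths $T$ and $L$ is what forces the specific forms of $T^\dag$ and $L^\dag$ stated in the theorem.
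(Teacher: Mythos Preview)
Your privacy argument for (ii) is essentially the paper's: bound the per-round sensitivity of the PPSC input by $n\mu g^\dag$ via non-expansiveness of $\mathpzc{P}_{\mathtt{C}}$ and the gradient-perturbation estimate on $\Omega_\nu$, apply the Gaussian-mechanism bound of Theorem~\ref{theorem-PPSC} round by round with budget $(\epsilon/L,\delta_\sharp)$, and compose over the $L$ rounds while intersecting the $L$ privacy-covering events to reach probability $\rho$. That part is correct and matches the paper.

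The accuracy argument has a structural gap. You propose a mean-square analysis of the noisy projected-subgradient recursion followed by a single Markov step, aiming for $\mathbb{E}\|\bar x_{L(S+T+1)}-\yb^\dag\|^2\le(1-p^{1/L})\nu$. Two issues arise. First, the limit $\yb^\dag$ in the statement is sample-path dependent when $\mathtt{C}^\dag$ is not a singleton, so ``mean-square convergence to some $\yb^\dag$'' is not well posed without a deterministic target; the Polyak argument you invoke shows that $\|\theta_l\|$ is a convergent \emph{sequence} for each fixed $\yb^\ddag\in\mathtt{C}^\dag$, which is a pathwise conclusion, not a second-moment one. Second, the factor $(1-p^{1/L})$ does not come from a terminal Markov bound; it is precisely the per-round slack one needs so that $\mathbb{P}(\|\Delta_l\|^2\le\nu\alpha_{l+1}^4)\ge p^{1/L}$ for every $l$, after which the product over $l=0,\dots,L-1$ yields probability at least $p$. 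Your own observation about the $\alpha_L^{-4}$ in $T^\dag$ is exactly this per-round Markov calibration, which is inconsistent with a single Markov step at the end.

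The paper proceeds in the opposite order. It first uses the uniform a priori bound $\phib_l\in\mathtt{C}^n$ together with Lemma~\ref{lemma-B} to get $\mathbb{E}\|\Delta_l(\phib_l)\|^2\le(n{\phi^\dag}^2+2q^2S^2\sigma_\gammab^2)(1-\lambda_{\rm G})^{2T}$ uniformly in $l$; applies Markov to each $\Delta_l$ with threshold $\nu\alpha_{l+1}^4$ and uses $T\ge T^\dag$ to conclude that the event $\mathcal{A}=\bigcap_{l=0}^{L-1}\{\|\Delta_l\|\le\sqrt{\nu}\,\alpha_{l+1}^2\}$ has probability at least $p$; and then, \emph{conditioned on $\mathcal{A}$}, runs a purely deterministic perturbed-subgradient analysis (Lemma~\ref{lemma-F}) showing that $\frac{1}{n}(\1_n^\top\otimes\Ib_m)\phib_l$ converges to some $\yb^\dag\in\mathtt{C}^\dag$ and hence $\|\phib_L-\1_n\otimes\yb^\dag\|^2\le\nu$ for all $L\ge L^\dag$. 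The crucial difference from your plan is that the subgradient analysis is carried out pathwise on the good event, not in mean square; this is what lets $\yb^\dag$ be random and removes any need for martingale structure in the PPSC-induced perturbation $\Delta_l$.
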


  Regarding the design of the stepsize $\alpha_l$, we note that any $\alpha_l$ satisfying (i) $\lim\limits_{l\rightarrow\infty}\alpha_l=0$; (ii) $\lim\limits_{l\rightarrow\infty}\sum\limits_{i=0}^l\alpha_i^2 < \infty$; (iii) $\lim\limits_{l\rightarrow\infty}\sum\limits_{i=0}^l\alpha_i=\infty$ is feasible \cite{nedic2010constrained}.
  We also remark that the distributed computing protocol in Algorithm 5 can be adapted to compute the unique least-squares solution of the network linear equations (\ref{eq:LINEAR ALGEBRAIC EQUATION}). The idea, motivated by \cite{shi2017networkTAC}, lies in modifying the procedure of projected subgradient descent at $s=(l+1)(S+T+1)-1$ to the following
  \[
  x_k(s) = x_k(s-1)+\alpha_{l+1}\big(\mathpzc{P}_{\Eq_k}(x_k(s-1))-x_k(s-1)\big)\,,\quad \mbox{ $\forall\,\,k\in{\rm V}$}\,.
  \]
  Under the resulting distributed computing protocol, all results in Theorem 4 can still be preserved, i.e., the unique least-squares solution can be computed with an arbitrarily given accuracy, while achieving an arbitrary $(\epsilon,\delta)$-differential privacy with any prescribed probability.

\section{Case Studies}
In this section, we provide a series of numerical examples that illustrate the effectiveness of our results.
\subsection{Averaging Consensus}
\label{subsec-av}

In this subsection, numerical simulations are conducted to demonstrate the feasibility of the proposed PPSC-Gossip-AC algorithm.
We consider a system of 10 agents over an public cycle network ${\rm G}$  as in Figure \ref{fig-ring-a}, where each edge is assigned with weight $1\over 10$ and each agent $i$ holds a sensitive number $d_i$ (see Table \ref {tab:data-av}).
%In this setting, we design the private graph ${\rm G}_{\rm p}$, run the PPSC-Gossip averaging consensus algorithm over networks ${\rm G}_{\rm p},{\rm G}$, and report the relation between the computation accuracy $\nu$, the privacy budgets $\epsilon,\delta,\rho$, and the running steps $S$ of the multi-gossiping PPSC mechanism and $T$ of the averaging consensus algorithm.

\begin{table}
  \caption{Private datasets for average consensus}
  \centering
  \begin{tabular}{lllll}
    \hline
    $d_1=10$ & $d_2=100$ & $d_3=20$ & $d_4=-30$ & $d_5=-20$ \\
    $d_6=60$ & $d_7=70$ & $d_8=0$ & $d_9=80$ & $d_{10}=-20$\\
    \hline
\end{tabular}
\label{tab:data-av}
\end{table}

\begin{figure}[ht]
\centering
\begin{subfigure}{.2\textwidth}
\centering
		\includegraphics[width=.8\linewidth]{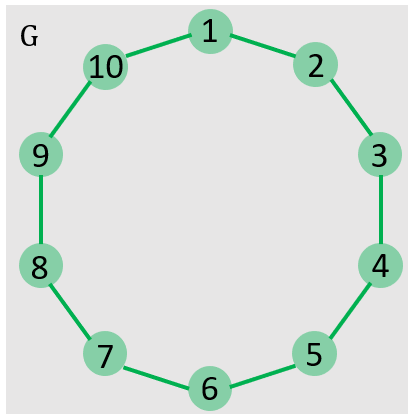}
\caption{ }
\label{fig-ring-a}
\end{subfigure}
%\quad
%\begin{subfigure}{.2\textwidth}
%\centering
%		\includegraphics[width=.8\linewidth]{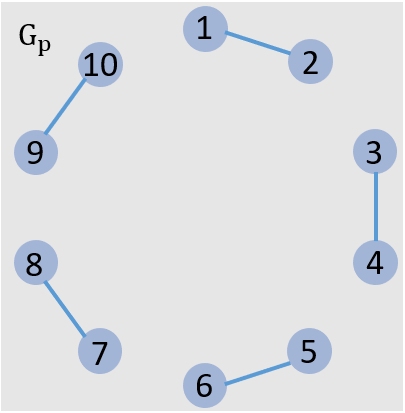}
%\caption{ }
%\label{fig-ring-b}
%\end{subfigure}
\quad
\begin{subfigure}{.2\textwidth}
\centering
		\includegraphics[width=0.8\linewidth]{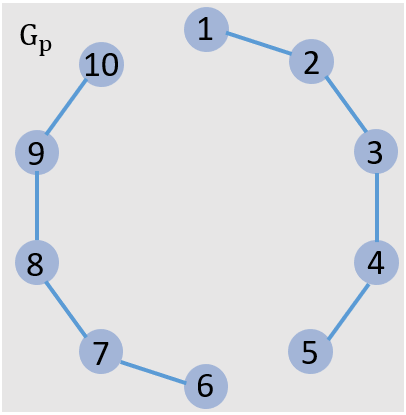}
\caption{}
\label{fig-G-p-0}
\end{subfigure}
\quad
\begin{subfigure}{.2\textwidth}
\centering
		\includegraphics[width=0.8\linewidth]{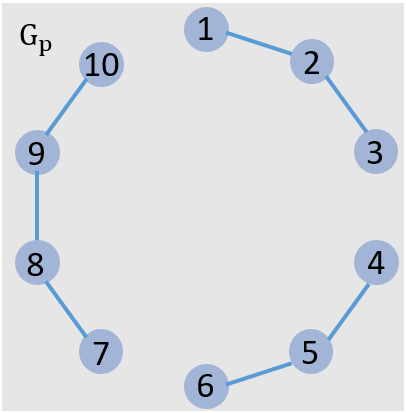}
\caption{}
\label{fig-G-p-1}
\end{subfigure}
\caption{The public  network ${\rm G}$ and the private network ${\rm G}_{\rm p}$ with $n=10$.
}
\label{fig-ring}
\end{figure}

In order to assess the effect of ${\rm G}_{\rm p}$, we run the multi-gossiping PPSC mechanism over three kinds of ${\rm G}_{\rm p}$: Figures \ref{fig-ring-a},  \ref{fig-G-p-0} and \ref{fig-G-p-1}, having one, two and three components, respectively,  and show the relationship between the iteration step $S$ and the differential privacy probability $\rho$, the latter of which is indeed the probability of the event $\mathcal{Q}_S$ that all node states have altered after the $S$-step multi-gossiping PPSC mechanism.
Figure \ref{fig-rho-av} shows that a larger $\rho$ requires a larger $S$ for all three graphs ${\rm G}_{\rm p}$, and given any $\rho$, the smallest $S$ is needed by the ${\rm G}_{\rm p}$ in Figure \ref{fig-G-p-1}.
Thus, in the following simulations ${\rm G}_{\rm p}$ is fixed as in Figure \ref{fig-G-p-1}.
Under this ${\rm G}_{\rm p}$, we then compare the minimal $S$ required practically (i.e., in Figure \ref{fig-rho-av}) and theoretically (i.e., in Theorem \ref{theorem-Ave}) to fulfill the requirement of probability $\rho$.  Table \ref{tab:rho-av} indicates that the theoretic $S$ in Theorem \ref{theorem-Ave} is indeed more conservative under each $\rho$. In the following simulations, we let the desired differential privacy probability $\rho= 0.9998$, which can be guaranteed by $S=25$ as in Table \ref{tab:rho-av}.

Throughout all simulations,  following the standard differential privacy guideline \cite{dwork2014algorithmic}, we fix $\mu=1$ and  let $\delta=10^{-6}$. With these $\mu,\delta$, we consider various privacy levels $\epsilon=10^{-3},10^{-2},10^{-1}$ and run the PPSC-Gossip-AC algorithm  with the noise variance $\sigma_\gammab$ chosen as the minimal value $\sigma_\gammab^\ast$ in Theorem \ref{theorem-Ave}.  Figure \ref{fig-case-av} shows the performance of the algorithm under different privacy requirements. The computation accuracy  strictly decreases as $T$ increases for  all privacy levels $\epsilon$, while a larger averaging step $T$ is required under a smaller privacy level $\epsilon$  to reach the same computation accuracy.  To show whether the theoretic $T$ in Theorem \ref{theorem-Ave} can fulfill the desired computation accuracy, we compare it with the practical one in Figure \ref{fig-case-av} to reach various accuracy $\nu$. Table \ref{tab:T-ave} indicates that under each pair of $(\epsilon,\nu)$, the $T$ in Theorem \ref{theorem-Ave} is  larger and thus guarantees the desired computation accuracy.
We also compare our PPSC-Gossip-AC algorithm with the He2020 algorithm in \cite{He2020TSP}, whose computation accuracy under different privacy requirements $\epsilon$ is presented in Figure \ref{fig-case-av-He}. As seen from Figure \ref{fig-case-av-He}, a higher privacy requirement results in a larger computation error. This indicates a trade-off between the computation accuracy and privacy in \cite{He2020TSP}, while such a trade-off does not exist in PPSC-Gossip-AC algorithm.

%\begin{figure}
%\centering
%\begin{subfigure}{.35\textwidth}
%\centering
%		\includegraphics[width=0.6\linewidth]{G_p-av-1.png}
%\caption{}
%\label{fig-G-p-0}
%\end{subfigure}
%\quad
%\begin{subfigure}{.35\textwidth}
%\centering
%		\includegraphics[width=0.58\linewidth]{G_p-av-2.png}
%\caption{}
%\label{fig-G-p-1}
%\end{subfigure}
%
%\caption{Two structures of ${\rm G}_{\rm p}$. }
%\label{fig-G-p}
%\end{figure}
%

\begin{figure}
  \centering
  \includegraphics[width=7cm]{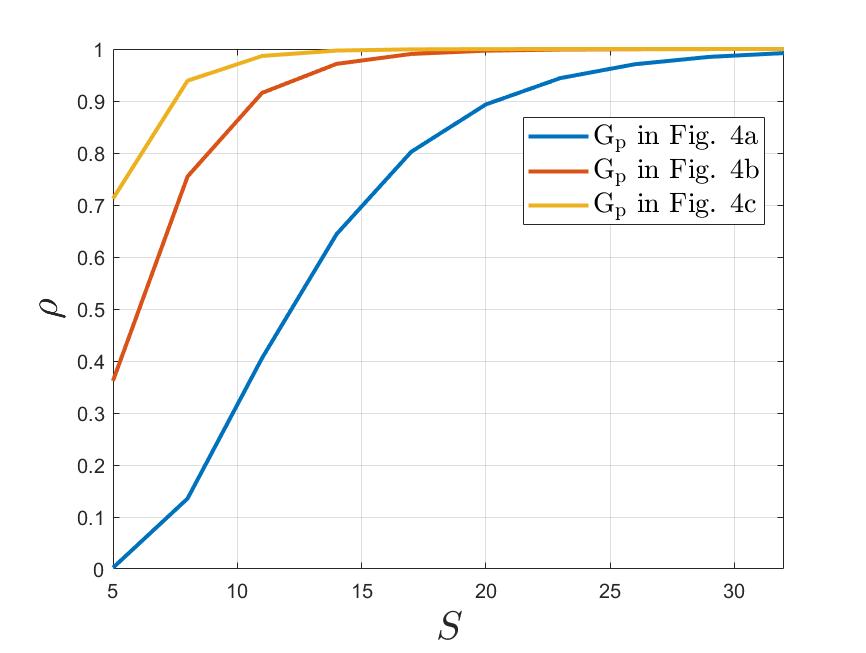}
  \caption{Relation between $S$ and $\rho$ under different ${\rm G}_{\rm p}$ ($10^6$ samples).}
%  \Description{Private and public Graphs.}
\label{fig-rho-av}
\end{figure}

\begin{table}
    \centering
    \caption{A comparison of minimal $T$ to reach various accuracy $\nu$ in Theorem \ref{theorem-Ave} and Figure \ref{fig-case-av}}
    \begin{subtable}[t]{2in}
  \begin{tabular}{lcc}
    \hline
     $\rho$ &Theo. \ref{theorem-Ave} & Fig. \ref{fig-rho-av}\\
    \hline
    $0.7126$ & 8 & 5\\
    $0.9393$ & 12 & 8\\
    $0.9867$ & 15 & 11\\
  \hline
\end{tabular}
    \end{subtable}
  \qquad
    \begin{subtable}[t]{2in}
  \begin{tabular}{lcc}
    \hline
     $\rho$ &Theo. \ref{theorem-Ave} & Fig. \ref{fig-rho-av}\\
    \hline
    $0.9970$ & 18 & 14\\
    $0.9993$ & 21 & 17\\
    $0.9998$ & 25 & 20\\
  \hline
\end{tabular}

    \end{subtable}

\label{tab:rho-av}
\end{table}

%\begin{table}
%  \caption{Minimal $S$ between Theorem \ref{theorem-Ave} and Figure \ref{fig-rho-av}}
%  \centering
%  \begin{tabular}{lcc}
%    \hline
%     $\rho$ &Theorem \ref{theorem-Ave} & Figure \ref{fig-rho-av}\\
%    \hline
%    $0.7126$ & 8 & 5\\
%    $0.9393$ & 12 & 8\\
%    $0.9867$ & 15 & 11\\
%    $0.9970$ & 18 & 14\\
%    $0.9993$ & 21 & 17\\
%    $0.9998$ & 25 & 20\\
%  \hline
%\end{tabular}
%\label{tab:rho-av}
%\end{table}

%This in fact is consistent with Theorem 1 that for arbitrary privacy budgets and computation accuracy, there exists $T$ such that both the $(\epsilon,\delta)$-differential privacy and the desired computation accuracy can be achieved simultaneously.
%
%
%
%\begin{figure}
%  \centering
%  \includegraphics[width=7cm]{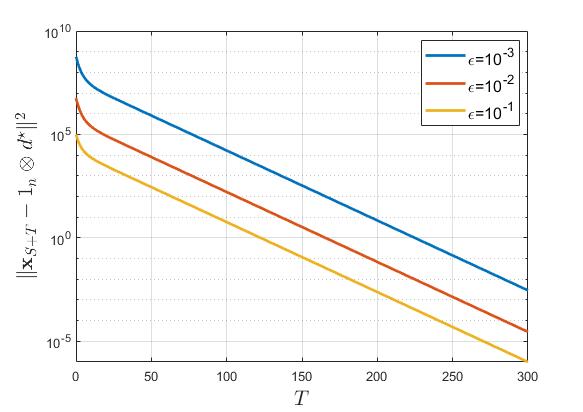}
%  \caption{The computation accuracy $\mathbb{E}\|\xb_{S+T}-1_n\otimes d^\ast\|^2$.}
%%  \Description{Private and public Graphs.}
%\label{fig-case-av}
%\end{figure}
%
%\begin{figure}
%  \centering
%  \includegraphics[width=7cm]{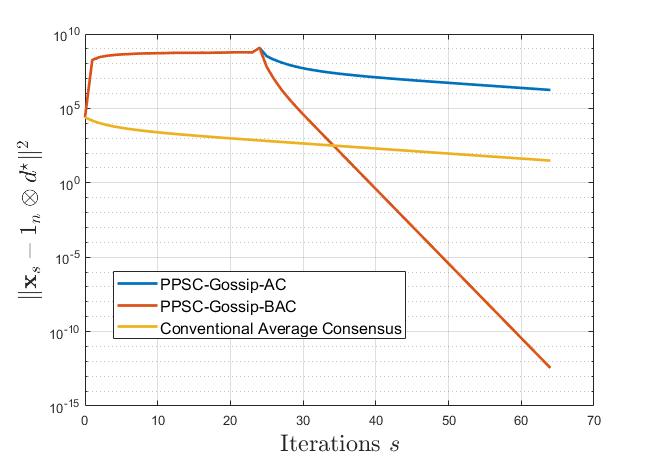}
%  \caption{Evolutions of $\mathbb{E}\|\xb_{s}-1_n\otimes d^\ast\|^2$ under $\epsilon=10^{-3}$.}
%%  \Description{Private and public Graphs.}
%\label{fig-case-av-acc}
%\end{figure}

\begin{figure}[ht]
\begin{subfigure}{.4\textwidth}
\centering
	\includegraphics[width=1.0\linewidth]{fig-error-ave-2.jpg}
\caption{The PPSC-Gossip-AC algorithm}
\label{fig-case-av}
\end{subfigure}
\qquad
\begin{subfigure}{.4\textwidth}
\centering
		\includegraphics[width=1.0\linewidth]{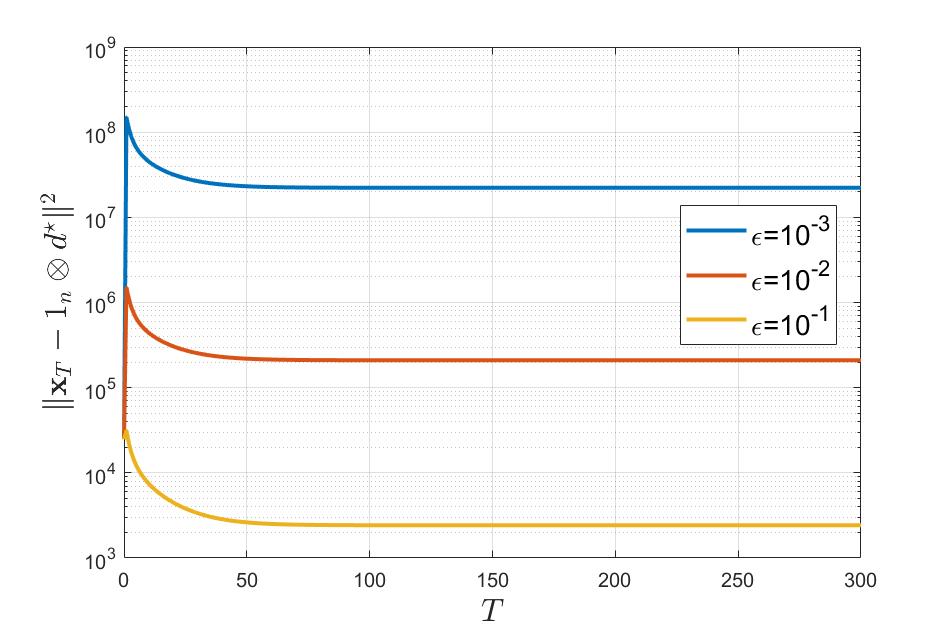}
\caption{The algorithm of He2020 \cite{He2020TSP}}
\label{fig-case-av-He}
\end{subfigure}
\caption{The computation accuracy in the mean square sense}
\end{figure}

\begin{table}
    \centering
    \caption{A comparison of minimal $T$ to reach various accuracy $\nu$ in Theorem \ref{theorem-Ave} and Figure \ref{fig-case-av}}
    \begin{subtable}[t]{2in}
        \begin{tabular}{ccc}
        \hline
        $\nu$ & Theo. \ref{theorem-Ave}& Fig. \ref{fig-case-av}\\
        \hline
        $10^0$ & 341 & 226 \\
        $10^{-1}$ & 371 & 255\\
        $10^{-2}$ & 400 & 285 \\
        \hline
        \end{tabular}
        \caption{$\epsilon=10^{-3}$}
    \end{subtable}
  \qquad
    \begin{subtable}[t]{2in}
        \begin{tabular}{ccc}
        \hline
        $\nu$ & Theo. \ref{theorem-Ave}& Fig. \ref{fig-case-av}\\
        \hline
        $10^0$ & 282 & 167  \\
        $10^{-1}$ & 312 & 197\\
        $10^{-2}$ & 341 & 227 \\
        \hline
        \end{tabular}
        \caption{$\epsilon=10^{-2}$}
    \end{subtable}
    \qquad
    \begin{subtable}[t]{2in}
        \begin{tabular}{ccc}
        \hline
        $\nu$ & Theo. \ref{theorem-Ave}& Fig. \ref{fig-case-av}\\
        \hline
        $10^0$ & 223 & 123\\
        $10^{-1}$ & 253 &  153\\
        $10^{-2}$ & 282 &  183\\
        \hline
        \end{tabular}
        \caption{$\epsilon=10^{-1}$}
    \end{subtable}
   \label{tab:T-ave}
\end{table}

%%\begin{table}
%%    \centering
%%    \caption{Minimal $T$ to reach various accuracies $\nu$ under $\epsilon=0.001$ in Theorem \ref{theorem-Ave} and Figure \ref{fig-case-av}}
%%  \begin{tabular}{ccc}
%%    \hline
%%     $\nu$ & Theorem \ref{theorem-Ave}& Figure \ref{fig-case-av}\\
%%    \hline
%%   $10^0$ & 133 & 103 \\
%%   $10^{-1}$ & 144 & 114\\
%%   $10^{-2}$ & 156 & 126 \\
%%  \hline
%%\end{tabular}
%%\end{table}
%%
%%\begin{table}
%%
%%  \centering
%%  \caption{Minimal $T$ to reach various accuracies under $\epsilon=0.01$ in Theorem \ref{theorem-Ave} and Figure \ref{fig-case-av}}
%%  \begin{tabular}{ccc}
%%    \hline
%%     $\nu$ & Theorem \ref{theorem-Ave}& Figure \ref{fig-case-av}\\
%%    \hline
%%   $10^0$ & 110 & 80  \\
%%   $10^{-1}$ & 121 & 91\\
%%   $10^{-2}$ & 133 & 103 \\
%%  \hline
%%\end{tabular}
%%\end{table}
%%
%%\begin{table}
%%
%%  \centering
%%  \caption{Minimal $T$ to reach various accuracies under $\epsilon=0.1$ in Theorem \ref{theorem-Ave} and Figure \ref{fig-case-av}}
%%  \begin{tabular}{ccc}
%%    \hline
%%     $\nu$ & Theorem \ref{theorem-Ave}& Figure \ref{fig-case-av}\\
%%    \hline
%%   $10^0$ & 87 & 63\\
%%   $10^{-1}$ & 98 &  74\\
%%   $10^{-2}$ & 110 &  86\\
%%  \hline
%%\end{tabular}
%%
%%\end{table}
%
%

\subsection{Network Linear Equations}

In this subsection, we apply the PPSC-Gossip-NLE solver to solve the network linear algebraic equation over the public network ${\rm G}$ in Figure \ref{fig-ring}, where each edge is assigned with weight $1\over 4$ and each agent $i$ holds a sensitive equation $\mathfrak{E}_i$ (see Table \ref{tab:data-LAE}).

%We let $\|\yb^\ast\|\leq 20$ and $\varepsilon_0=0.01$.
We consider three computation accuracy requirements $\nu=10^0, 10^{-1}, 10^{-2}$, and thus select $L=207$, $L=260$ and $L=313$, respectively according to Theorem \ref{theorem-Equ}.
For these cases, we fix the differential privacy probability as $\rho=0.95$ and thus let $S=26$ for all simulations.
For each accuracy requirement $\nu=10^0, 10^{-1}, 10^{-2}$ and privacy requirement $\epsilon=10^{-3},10^{-2},10^{-1}$, we run the PPSC-Gossip-NLE solver with the private network ${\rm G}_{\rm p}$ in Figure \ref{fig-G-p-1} and the noise variance $\sigma_\gammab$ as the minimal value $\sigma_\gammab^\ast$ in Theorem \ref{theorem-Equ}. The simulation results are presented in Figure \ref{fig-case-eq} and Tables \ref{tab:T-eq}-\ref{tab:prob-eq}. Figures \ref{fig-case-eq-0}, \ref{fig-case-eq-1} and \ref{fig-case-eq-2} demonstrate the relationship between the resulting computation accuracy and the averaging step $T$ at various pairs of accuracy and privacy  requirements, i.e., $(\nu,\epsilon)$.
In all these figures, it can be seen that as $T$ increases, the resulting computation accuracy decreases until a lower bound that is smaller than the expected accuracy $\nu$.
Under each pair of $(\nu,\epsilon)$, from Table \ref{tab:T-eq} the theoretic $T$ in Theorem \ref{theorem-Equ} is larger than that in Figures \ref{fig-case-eq-0}, \ref{fig-case-eq-1} and \ref{fig-case-eq-2}, and from Table \ref{tab:prob-eq} the resulting differential privacy probability is larger than the expected $\rho=0.95$. Therefore, the proposed solver following Theorem \ref{theorem-Equ} achieves  the desired computation accuracy and  differential privacy, simultaneously.
%Taking the probability of the differential privacy into account, as shown in Table 7, the resulting probabilities using the MCM are smaller than the corresponding expected probabilities $\rho$, respectively, i.e., the desired differential privacy is achieved following the design in Theorem \ref{theorem-Equ}.

\begin{table}
  \centering
  \caption{Private datasets for solving network linear equations with solution $\yb^\ast=[5;-10;10;-5;1;5]$}
  \begin{tabular}{ll}
    \hline
    $\Hb_1 = [1;2;0;0;0;0]$; $z_1 = -15$ & $\Hb_6 = [2;0;1;0;2;1]$; $z_6 = 27$\\
    $\Hb_2 = [1;1;1;0;0;0]$; $z_2 = 5$ & $\Hb_7 = [1;1;1;2;0;1]$; $z_7 = 0$\\
    $\Hb_3 = [0;1;1;0;0;3]$; $z_3 = 15$ & $\Hb_8 = [3;1;5;6;8;-2]$; $z_8 = 23$ \\
    $\Hb_4 = [0;-1;1;2;5;-2]$; $z_4 = 5$ & $\Hb_9 = [0;-2;0;1;5;0]$; $z_9 = 20$\\
    $\Hb_5 = [5;-2;0;2;0;1]$; $z_5 = 40$ & $\Hb_{10} = [0;0;0;0;2;-1]$; $z_{10} = -3$ \\
  \hline
\end{tabular}
  \label{tab:data-LAE}
\end{table}

\begin{figure}[ht]
\centering
 \qquad \qquad \qquad
\begin{subfigure}{.4\textwidth}
\centering
		\includegraphics[width=1.15\linewidth]{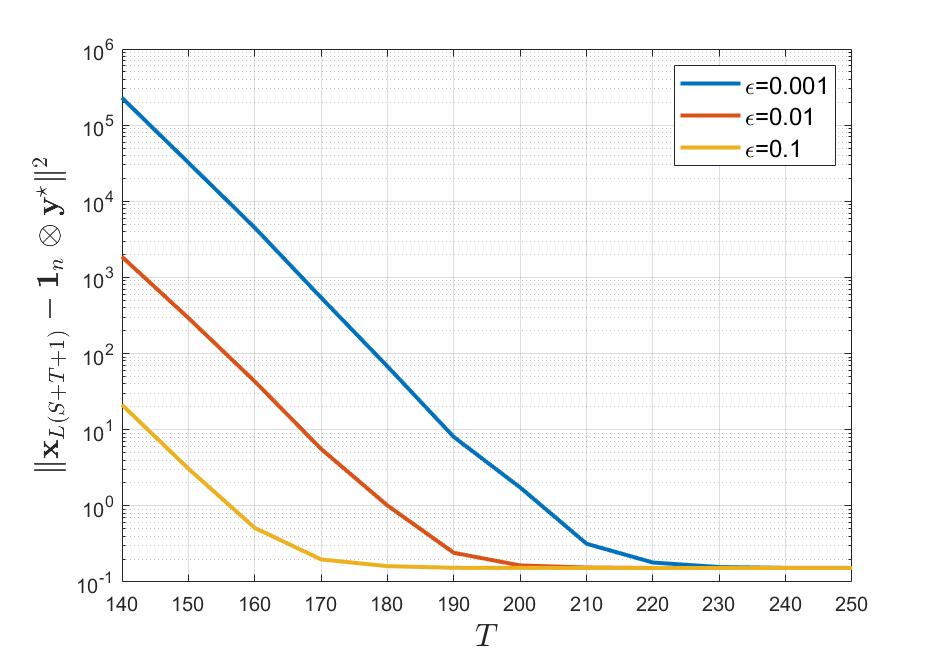}
\caption{Expected accuracy $\nu=10^0$}
\label{fig-case-eq-0}
\end{subfigure}
\newline
\begin{subfigure}{.4\textwidth}
\centering
		\includegraphics[width=1.15\linewidth]{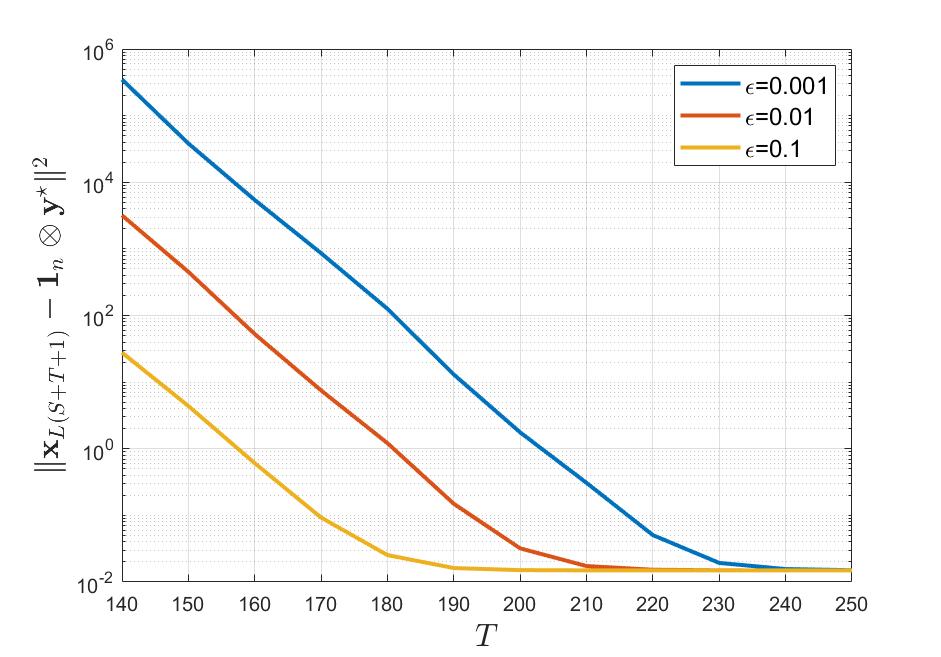}
\caption{Expected accuracy $\nu=10^{-1}$}
\label{fig-case-eq-1}
\end{subfigure}
\qquad \qquad
\begin{subfigure}{.4\textwidth}
\centering
		\includegraphics[width=1.15\linewidth]{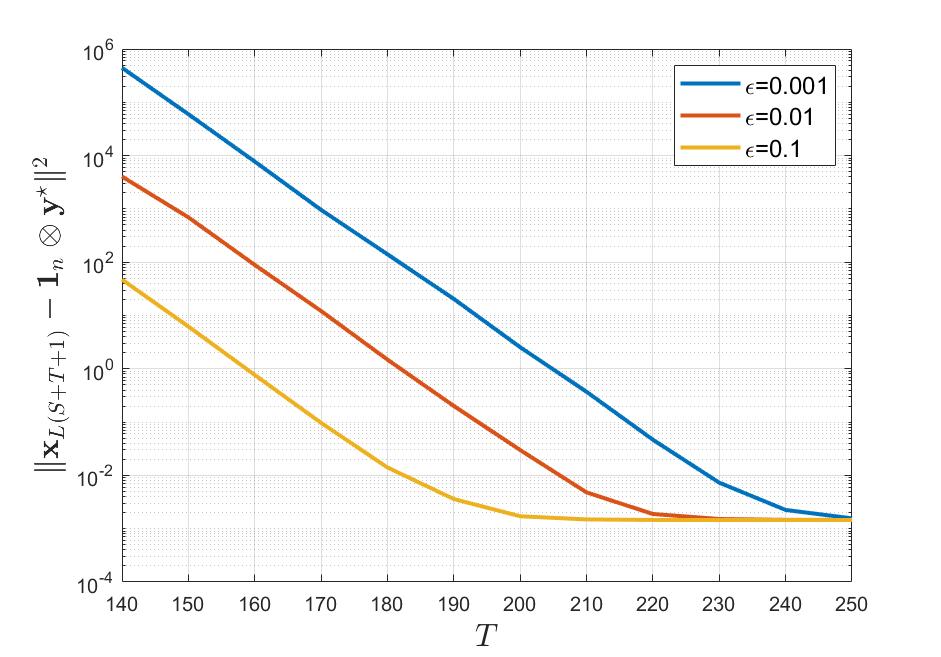}
\caption{Expected accuracy $\nu=10^{-2}$}
\label{fig-case-eq-2}
\end{subfigure}
\caption{The actual accuracy $\mathbb{E}\|\xb_{L(S+T+1)}-\1_n\otimes\yb^\ast\|^2$ under various expected privacy $\nu$ and accuracy $\epsilon$.}
\label{fig-case-eq}
\end{figure}

%According to (\ref{eq:eq-T}), the minimal $T$ that guarantees the desired computation accuracy and the differential privacy with probability larger than $p$ are computed at various $L,\epsilon$, as shown in Table 7. Moreover, with these minimal $T$, using the MCM (with $10^5$ samples) the corresponding probabilities of the differential privacy are also simulated in Table 7. From this table, it is clear that for all $L=205, 258, 310$ and $\epsilon=0.001,0.01,0.1$, the derived minimal $T$ according to (\ref{eq:eq-T}) are  larger than the values in Figures \ref{fig-case-eq-0}, \ref{fig-case-eq-1} and \ref{fig-case-eq-2} to reach the corresponding expected accuracy, respectively. This in turn shows that the parameters $S,L,T$ according to (\ref{eq:eq-L})-(\ref{eq:eq-T}) can achieve the desired computation accuracy. Taking the failure probability of the differential privacy into account, as shown in Table 7, the resulting probabilities using the MCM are smaller than the corresponding expected probabilities $\rho$, respectively, i.e., the desired differential privacy is achieved following the design in (\ref{eq:eq-L})-(\ref{eq:eq-T}).
%

\begin{table}
    \centering
    \caption{A comparison of minimal $T$ to reach various accuracy $\nu$ in Theorem \ref{theorem-Equ} and Figure \ref{fig-case-eq}}
        \begin{subtable}[t]{2in}
        \centering
          \begin{tabular}{ccc}
    \hline
     $\nu$ &  Theo. \ref{theorem-Equ} &  Fig. \ref{fig-case-eq}\\
    \hline
  $10^{0}$ & 552 & 204 \\
  $10^{-1}$ & 579 & 217 \\
  $10^{-2}$ & 605 & 229 \\
  \hline
\end{tabular}
        \caption{$\epsilon=10^{-3}$}
    \end{subtable}
  \qquad
    \begin{subtable}[t]{2in}
    \centering
    \begin{tabular}{ccc}
    \hline
      $\nu$ &  Theo. \ref{theorem-Equ} &  Fig. \ref{fig-case-eq}\\
    \hline
  $10^{0}$ & 507 & 180\\
  $10^{-1}$ & 533 & 192 \\
  $10^{-2}$ & 559 & 207 \\
  \hline
\end{tabular}
        \caption{$\epsilon=10^{-2}$}
    \end{subtable}
    \qquad
    \begin{subtable}[t]{2in}
    \centering
        \begin{tabular}{ccc}
            \hline
      $\nu$ &  Theo. \ref{theorem-Equ} &  Fig. \ref{fig-case-eq}\\
    \hline
  $10^{0}$ & 461 & 155\\
  $10^{-1}$ & 487 & 170\\
  $10^{-2}$ & 513 & 182\\
  \hline
        \end{tabular}
        \caption{$\epsilon=10^{-1}$}
    \end{subtable}
    \label{tab:T-eq}
\end{table}

\begin{table}
    \centering
    \caption{Resulting differential privacy probability under various accuracy and privacy levels}
        \begin{subtable}[t]{2in}
        \centering
          \begin{tabular}{cc}
    \hline
     $\nu$ &    probability\\
    \hline
  $10^{0}$ &  0.9975 \\
  $10^{-1}$ &  0.9975 \\
  $10^{-2}$ &  0.9970 \\
  \hline
\end{tabular}
        \caption{$\epsilon=10^{-3}$}
    \end{subtable}
  \qquad
    \begin{subtable}[t]{2in}
    \centering
    \begin{tabular}{cc}
    \hline
      $\nu$ &   probability\\
    \hline
  $10^{0}$ & 0.9983\\
  $10^{-1}$ & 0.9970 \\
  $10^{-2}$ & 0.9971 \\
  \hline
\end{tabular}
        \caption{$\epsilon=10^{-2}$}
    \end{subtable}
    \qquad
    \begin{subtable}[t]{2in}
    \centering
        \begin{tabular}{cc}
            \hline
      $\nu$ &   probability\\
    \hline
  $10^{0}$ & 0.9977\\
  $10^{-1}$ & 0.9970\\
  $10^{-2}$ & 0.9967\\
  \hline
        \end{tabular}
        \caption{$\epsilon=10^{-1}$}
    \end{subtable}
    \label{tab:prob-eq}
\end{table}

\subsection{Classification with Logistic Loss Function}

Finally, we investigate the practical performance of the PPSC-Gossip-DCO algorithm in contrast to existing privacy-preserving distributed optimization protocols \cite{huang2015differentially,nozari2016differentially} for a linear classification problem. We assume each node $i$ of the public network $\mathrm{G}$ holds a database that consists of samples $(a_{i,j},b_{i,j})\in\R^m\times\{0,1\}$ with $j=1,\dots,N_i$ and $i=1,\ldots,n$. Note that $a_{i,j},b_{i,j}$ represent a sample's features and label, respectively. Then we specify the objective in (\ref{eq:opti}) as the training goal of the well-known logistic regression classifier, i.e.,
\begin{equation}\notag
f_i(\yb) = \sum\limits_{j=1}^{N_i} b_{ij}a_{ij}^\top\yb - \log(1+\exp(a_{ij}^\top\yb)) + \frac{\lambda}{2n}\|\yb\|^2\,,\quad i=1,\ldots,n.
\end{equation}

To conduct numerical experiments, we specify the feasible space $\mathtt{C}$ in (\ref{eq:opti}) as a unit ball and adopt the MNIST\footnote{See \url{http://yann.lecun.com/exdb/mnist/}} dataset as the overall database. MNIST is a database of handwritten digits consisting of $60000$ samples and $784$ features in the training set, and $10000$ samples in the test set. We artificially and evenly allocate all training samples to all nodes of $\mathrm{G}$ in Figure \ref{fig-ring-a}, so that all three protocols, including the PPSC-Gossip-DCO algorithm with the private network $\mathrm{G}_{\mathrm{p}}$ in Figure \ref{fig-G-p-1}, Huang2015 \cite{huang2015differentially}, Nozari2018 \cite{nozari2016differentially}, can be executed over $\mathrm{G}$ with uniform privacy budgets $\epsilon=0.001,0.01,0.1$ for $L=3000$ steps. At each $l$, we let $\bar{\yb}(l)=\sum\limits_{i=1}^n \yb_i(l) / n$ denote the network's estimate towards the global logistic model, based on which prediction is performed over both the training set and the test set. The Area under the Curve (AUC), calculated based on the true and the predicted labels, is a metric ranging from zero to one for model evaluation. It represents the probability that a random positive sample whose $b_{ij}=1$ has a larger predicted label than that of a random negative sample whose $b_{ij}=0$. Note that the larger the AUC is, the better prediction can be made under the model. A model with AUC 0.5 is no better than a random guess. In this example, we will evaluate the instantaneous models at each $l$ trained under the mentioned three algorithms by calculating their AUC, and plot the AUC trajectory for  various $\epsilon$ in Figure \ref{fig-auc}.

We observe in Figure \ref{fig-auc-1-0} - \ref{fig-auc-3-0}, although the AUC of Nozari2018 \cite{nozari2016differentially} promptly converges to $0.78,0.80,0.82$ for $\epsilon=0.001,0.01,0.1$, it soon gets overtaken by that of the PPSC-Gossip-DCO algorithm between the $1200$- and $1400$-th iteration. From the $1500$-th iteration, the AUC of the PPSC-Gossip-DCO algorithm stays at a high and steady level of $0.85$, implying that the model trained under the PPSC-Gossip-DCO algorithm has a much better predictive ability than that of Nozari2018 \cite{nozari2016differentially}, especially for the smallest $\epsilon=0.001$. In fact, thank to the ``shuffle and average" mechanism of the PPSC-Gossip-DCO algorithm, it manages to train almost the same and accurate model as the privacy-preserving requirement goes stronger. In addition, Huang2015 \cite{huang2015differentially}, which preserves the data privacy at a high cost of perturbing the shared states, only drives the AUC to a low level below $0.6$.

Based on the observations above, on the one hand, we find that the PPSC-Gossip-DCO algorithm outperforms the objective-perturbing method Nozari2018 \cite{nozari2016differentially} in the long run because of the exact objective function adopted in the PPSC-Gossip-DCO algorithm. This superiority is even more significant under high differential privacy requirements. On the other hand, the state-perturbing strategy Huang2015 \cite{huang2015differentially} gains the model nearly no knowledge, because exponentially decaying step size is implemented for the compliance of differential privacy, which provides no feasibility guarantee. In conclusion, for this experiment the PPSC-Gossip-DCO algorithm has an overwhelming advantage over Huang2015 \cite{huang2015differentially} and Nozari2018 \cite{nozari2016differentially} in terms of accuracy, at the price of, of course, a higher computation overhead.

\begin{figure}[ht]
	\centering
	\qquad \qquad \qquad
	\begin{subfigure}{.4\textwidth}
		\centering
		\includegraphics[width=1.15\linewidth]{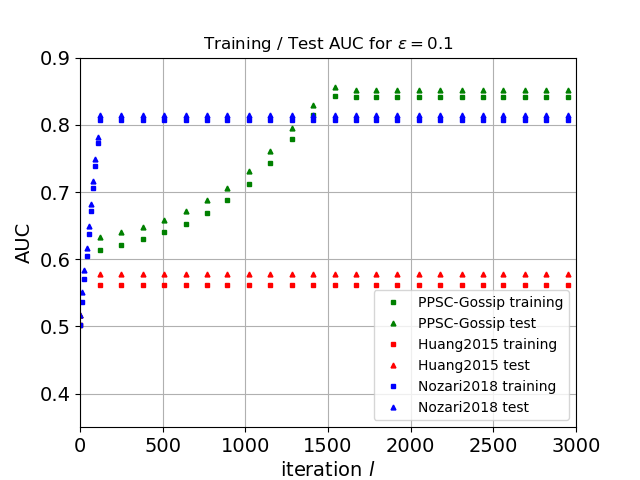}
		\caption{Expected accuracy $\nu=10^0$}
		\label{fig-auc-1-0}
	\end{subfigure}
	\newline
	\begin{subfigure}{.4\textwidth}
		\centering
		\includegraphics[width=1.15\linewidth]{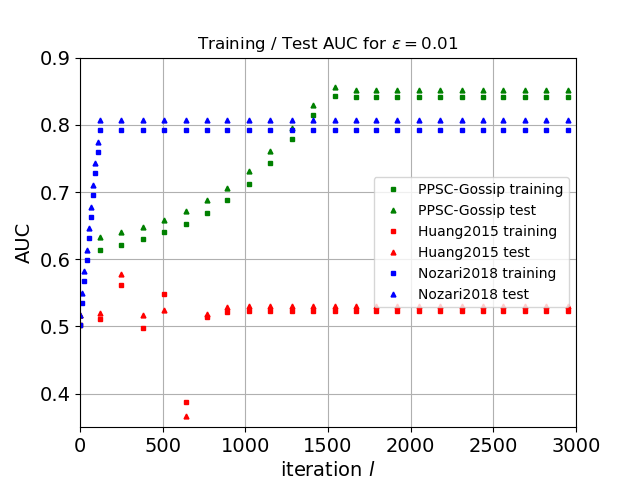}
		\caption{Expected accuracy $\nu=10^{-1}$}
		\label{fig-auc-2-0}
	\end{subfigure}
	\qquad \qquad
	\begin{subfigure}{.4\textwidth}
		\centering
		\includegraphics[width=1.15\linewidth]{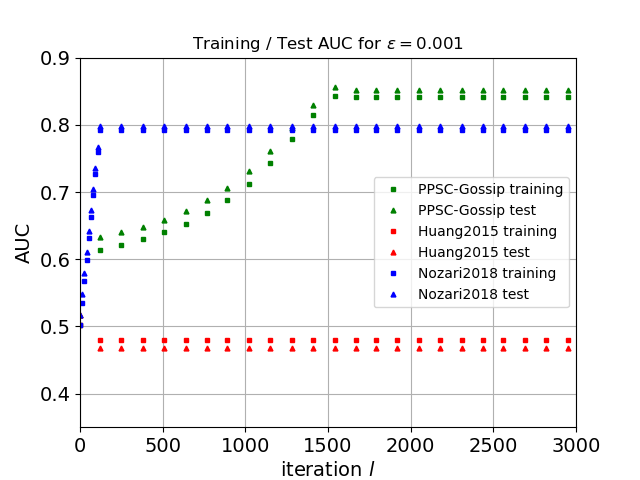}
		\caption{Expected accuracy $\nu=10^{-2}$}
		\label{fig-auc-3-0}
	\end{subfigure}
	\caption{The AUC under various expected privacy $\nu$ and accuracy $\epsilon$.}
	\label{fig-auc}
\end{figure}

\section{Conclusions}

In this paper, differentially private distributed computing protocols were developed for computation tasks of average consensus, network linear algebraic equation and distributed convex optimization via public-private networks. Such protocols were established on the Multi-Gossiping PPSC mechanism over a private communication graph, where randomly selected node pairs update their states in such a way that they are shuffled with random noise while maintaining summation consistency. By embedding the multi-gossip  PPSC mechanism for privacy encryption before an averaging consensus algorithm over the public network for local computations, we developed a PPSC-Gossip averaging consensus algorithm, which  can compute the average with any desired accuracy, while achieving any desired differential privacy with any prescribed probability. This PPSC-Gossip averaging consensus algorithm was then employed to develop two respective distributed computing protocols for network linear equations and distributed convex optimization. It was shown that both protocols can offer both differential privacy and computation accuracy guarantees.
In future works, it is of interest to investigate the possibility of integrating the PPSC framework or its extensions to distributed optimization problems without gradient information or problems with online settings where data arrives at the nodes sequentially.

%we plan to extend the idea of this paper to  more general distributed optimization problems, e.g., problems with non-convex constraints and cost functions, and to other distributed computing architectures such as distributed ADMM.

%This paper introduced a new PPSC-Gossip-based distributed computing framework with differential privacy guarantee. In our framework, the key idea is  to run the PPSC Gossip mechanism over the private network such that the node states are reshuffled with random noise while maintaining the summation of node states. By implementing this PPSC gossip mechanism as a basic privacy encryption subroutine, we concretely show that the proposed framework can be applied to solve the computation tasks, such as average consensus, network linear algebraic equations and distributed convex optimization, providing arbitrary both differential privacy and computation accuracy guarantees. Concrete case studies show that the proposed framework indeed provided improved computational  accuracy under the same privacy budget compared to existing work, and more importantly, such accuracy guarantee does not depend on the specific  level of privacy budget.

\appendix
\section{Proof of Theorem \ref{theorem-PPSC}}
\label{app:sec-ppsc}

\subsection{Preliminaries}

Without loss of generality, we let all nodes be indexed such that $i<j$ for any nodes $i\in\V^k$ and $j\in\V^{k+1}$.

Let $\V^k=\{k_1,\ldots,k_{n_k}\}$, and denote by $\mathsf{e}_{\rm p}^k(s)=\big(k_{s_{\rm out}},k_{s_{\rm in}}\big)$  the randomly selected edge  during gossiping procedure in component $\mathrm{G}_{\rm p}^k$ at $s\in[1,S]$. Let  $\mathsf{C}_{\mathsf{E}_{\rm p}^k}=\displaystyle\prod_{s=1}^{S}\mathsf{C}_{S-s+1}^k$ with $\mathsf{C}_{s}^k\in\R^{n_k\times n_k}$ be a matrix each column $h$ of which is $\eb_h$ except the $s_{\rm out}$-th column that is $\eb_{s_{\rm in}}$, and $\mathsf{D}_{\mathsf{E}_{\rm p}^k}\in\R^{n_k\times S}$ be a matrix the $h$-th column of which is $\prod_{s=1}^{S-h}\mathsf{C}_{S-s+1}^k(\eb_{h_{\rm out}}-\eb_{h_{\rm in}})$ for $h=1,2,\ldots,S-1$ and $\eb_{S_{\rm out}}-\eb_{S_{\rm in}}$ for $h=S$.
We note that both random matrices $\mathsf{C}_{\mathsf{E}_{\rm p}^k}$ and $\mathsf{D}_{\mathsf{E}_{\rm p}^k}$ are  determined by the randomly selected edge sequence $\mathsf{E}_{\rm p}^k$, and given any positive integer $S$, there is a finite-number set $\Theta_a^k$ of $\mathsf{E}_{\rm p}^k$. Thus, the map $\mathcal{M}_{\rm S}$ can be explicitly expressed by
\beeq{\label{eq:ppsc_dynamics}
\yb_{\rm ppsc} = \mathcal{M}_{\rm S}(\betab):=\mathsf{C}_{\mathsf{E}_{\rm p}} \betab + \mathsf{D}_{\mathsf{E}_{\rm p}} \gammab\,
}
with $\mathsf{C}_{\mathsf{E}_{\rm p}}=\blkdiag\big(\mathsf{C}_{\mathsf{E}_{\rm p}^1},\ldots,\mathsf{C}_{\mathsf{E}_{\rm p}^q}\big)$ and $\mathsf{D}_{\mathsf{E}_{\rm p}}=\blkdiag\big(\mathsf{D}_{\mathsf{E}_{\rm p}^1},\ldots,\mathsf{D}_{\mathsf{E}_{\rm p}^q}\big)$. Moreover, the expression of $\lambda_{\rm ppsc}$ can be simplified as
\[
\lambda_{\rm ppsc} = \min\{\lambda_{\rm ppsc}^1,\ldots,\lambda_{\rm ppsc}^q\} \,,\qquad \mbox{with $\lambda_{\rm ppsc}^k=\displaystyle\min_{\mathsf{E}_{\rm p}^k\in\Theta_a^k} \sigma_{m}^+\big(\mathsf{D}_{\mathsf{E}_{\rm p}^k}\big)$, $k=1,\ldots,q$.}
\]
%We remark that when each component $\mathsf{E}_{\rm p}^k$ of $\mathrm{G}_{\rm p}$  is comprised of two or three nodes, there holds $\lambda_a\geq 1$ for all $S$.

\subsection{A technical lemma}

Next, we present a lemma on $p_0$-privacy covering time for the multi-gossiping PPSC mechanism.
	
\begin{definition} \label{def:rand_encryption_time}
Introduce $\mathcal{Q}_s$ as the event that all nodes have altered their states at least once during the time $s\in[1,t]$.
For any $p_0\in(0,1)$, the $p_0$-privacy covering time for the multi-gossiping PPSC mechanism is defined by
\begin{equation}\notag
S_{p_0}(\mG_{\rm p})=\inf\{t: 1-\mathbb{P}(\mathcal{Q}_t)\le{p_0}\}.
\end{equation}
\end{definition}
	
In this definition, the $S_{p_0}(\mG_{\rm p})$ denotes the time needed to guarantee the event $\mathcal{Q}_s$ with probability $1-p_0$. Regarding this time $S_{p_0}(\mG_{\rm p})$, we present the following result.

\begin{lemma}\label{lemma-2}
For any ${p_0}\in(0,1)$, the ${p_0}$-privacy covering time associated with graph $\mG_{\rm p}$ for the multi-gossiping PPSC mechanism satisfies
\begin{equation}\notag
S_{p_0}(\mG_{\rm p}) \geq \frac{\log(1-(1-p_0)^{1\over q})-\log n_{\max}}{\log(1-\frac{1+r^\dag}{n_{\max}})}.
\end{equation}
\end{lemma}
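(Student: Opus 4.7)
The plan is to lower-bound the covering probability $\mathbb{P}(\mathcal{Q}_t)$ at each $t$ and then invert the bound to obtain a threshold on the number of steps. First I would exploit the structural fact that the multi-gossip PPSC mechanism runs an \emph{independent} gossip process inside each connected component $\mathrm{G}_{\rm p}^k$, so that, writing $\mathcal{Q}_t^k$ for coverage of component $k$, $\mathbb{P}(\mathcal{Q}_t)=\prod_{k=1}^{q}\mathbb{P}(\mathcal{Q}_t^k)$. For any node $v\in\mathrm{V}^k$ the one-step touching probability is
\[
p_v \;=\; \frac{1}{n_k}\Big(1+\sum_{i\sim v}\frac{1}{r_i}\Big),
\]
accounting for the two ways $v$ can be incident to the selected edge (chosen as the source or as the sampled neighbor).

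Next I would lower bound $p_v$ uniformly from the degree structure. Because every $r_i\le r_k^{\max}$ and $v$ has $r_v\ge r_k^{\min}$ neighbors, $\sum_{i\sim v}1/r_i\ge r_k^{\min}/r_k^{\max}$, giving $p_v\ge(1+r_k^{\min}/r_k^{\max})/n_k\ge(1+r^\dag)/n_{\max}$. Independence of the edge selections across steps then yields $\mathbb{P}(v\text{ untouched after }t\text{ steps})\le(1-(1+r^\dag)/n_{\max})^t$, and a union bound over $v\in\mathrm{V}^k$ produces $\mathbb{P}(\mathcal{Q}_t^k)\ge 1 - n_{\max}(1-(1+r^\dag)/n_{\max})^t$. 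Multiplying across the $q$ components,
\[
\mathbb{P}(\mathcal{Q}_t)\;\ge\;\big(1 - n_{\max}(1-(1+r^\dag)/n_{\max})^t\big)^q.
\]

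To finish, I would impose the coverage requirement $\mathbb{P}(\mathcal{Q}_t)\ge 1-p_0$ on this lower bound, take the $q$-th root, rearrange into a constraint on $(1-(1+r^\dag)/n_{\max})^t$, and take logarithms; because $\log(1-(1+r^\dag)/n_{\max})<0$, the resulting division flips the inequality and reproduces exactly the expression in the lemma. The main obstacle I anticipate is a direction mismatch: the chain above naturally certifies the displayed right-hand side as a \emph{sufficient} number of steps to reach $(1-p_0)$-coverage, which literally reads as $S_{p_0}\le(\text{expression})$ rather than the $S_{p_0}\ge(\text{expression})$ written in the lemma. I would read the conclusion in the operational sense actually invoked in Theorem~\ref{theorem-PPSC} — any $S$ exceeding the right-hand side suffices for the subsequent differential-privacy guarantee — since a genuine converse lower bound on $S_{p_0}$ would instead require upper bounds on $\mathbb{P}(\mathcal{Q}_t)$ via a single ``hardest-to-touch'' vertex, and the resulting constants would involve $r_k^{\max}/r_k^{\min}$ and $n_{\min}$ rather than the $r^\dag$ and $n_{\max}$ that the stated formula contains.
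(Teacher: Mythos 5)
Your proof is correct and follows essentially the same route as the paper's: the per-node one-step touching probability $\tfrac{1}{n_k}\bigl(1+\sum_{i\sim v}1/r_i\bigr)\ge (1+r^\dag)/n_{\max}$, independence across steps, a union bound over the nodes of each component (which the paper phrases as a Fr\'echet inequality), a product over the $q$ independent components, and inversion of the resulting bound. Your remark about the inequality direction is also accurate --- the argument certifies the displayed quantity as an \emph{upper} bound on $S_{p_0}$ (a sufficient number of steps), which is precisely how the lemma is invoked in the proof of Theorem \ref{theorem-PPSC}, so the ``$\geq$'' in the statement should indeed be read as ``$\leq$''.
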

\begin{proof}
For any $k=1,\ldots,q$, we denote $\mathcal{F}_{k_i}^t$  the event that node $k_i\in{\rm G}_{\rm p}^k$ has its state changed at least once during the time interval $[1, t]$, i.e., $\mathcal{F}_{k_i}^t=\{x_{k_i}(t) \neq x_{k_i}(0)\}$. Then, it is seen that node $k_i$ has its state changed at any $s\in\{1,\dots,t\}$ with $\frac{1}{n_k}+\frac{1}{n_k}\sum_{j=1}^{r_{k_i}}\frac{1}{r_{k_j}}$ probability.
%%	The event $\mcalQ_t$ holding true implies that the entire network states have been encoded by the R-PPSC-Gossip  algorithm.
	Then it follows
	\begin{equation}\label{eq:rand_xi_neq}
	\mathbb{P}(\mathcal{F}_{k_i}^{t}) = 1-\big(1-\frac{1}{n_k}-\frac{1}{n_k}\sum_{j=1}^{r_{k_i}}\frac{1}{r_{k_j}}\big)^t\geq 1-\big(1-\frac{1+r_k^\dag}{n_k}\big)^t
	\end{equation}
with $r_k^\dag=\frac{r_{k}^{\rm min}}{r_{k}^{\rm max}}$.
	Clearly, $\mcalQ_s=\bigcap\limits_{i\in\mV} \mathcal{F}_i^t$. By (\ref{eq:rand_xi_neq}) and the Fr{\'e}chet inequalities \cite{frechet1935generalisation}, we have
	\begin{equation}\label{eq:rand_encryption_time_lowerbound}
	\ba{rcl}
	\mathbb{P}(\mcalQ_t) &=& \prod\limits_{k=1}^{q}\mathbb{P}\left(\bigcap\limits_{i\in\V^k} \mathcal{F}_i^t\right) \ge \prod\limits_{k=1}^{q}\left(\sum\limits_{i=1}^{n_k} \mathbb{P}(\mathcal{F}_{k_i}^{t})-(n_k-1)\right)\,\\
&\geq& \prod\limits_{k=1}^{q}\left(1-{n_k}(1-\frac{1+r_k^\dag}{n_k})^t\right)\ge \left(1-{n_{\max}}\big(1-\frac{1+r^\dag}{n_{\max}}\big)^t\right)^q.
	\ea
	\end{equation}
	The proof is thus completed with (\ref{eq:rand_encryption_time_lowerbound}).
\end{proof}

\subsection{Proof of Theorem \ref{theorem-PPSC}}

From the structure of the  multi-gossip algorithm, it follows that for all possible edge sequences $\mathsf{E}_{\rm p}=\big\{\mathsf{E}_{\rm p}^1,\ldots,\mathsf{E}_{\rm p}^q\big\}$, we have $\|\mathsf{C}_{\mathsf{E}_{\rm p}}\|_1 =1$ and $\rank(\mathsf{D}_{\mathsf{E}_{\rm p}}) \leq n-q$.
Let $\mathsf{E}_{\rm p}^\ast$ be the sequence of communication edges observed by  eavesdroppers, and $\Cb\in\R^{n\times n}$ and $\Db\in\R^{n\times qS}$ be the resulting values of $\mathsf{C}_{\mathsf{E}_{\rm p}^\ast}$ and $\mathsf{D}_{\mathsf{E}_{\rm p}^\ast}$, respectively. Thus, the $\mathscr{M}_{\rm S}$ conditioned on $\mathsf{E}_{\rm p}^\ast$ takes the form
\beeq{\label{eq:Mech-ave}
\yb_{\rm ppsc} = \mathscr{M}_{\rm S}\big({\betab} \,|\, \mathsf{E}_{\rm p}^\ast\big) = \Cb \betab + \Db\gammab\,
}
with $\|\Cb\|_1 =1$ and $\rank(\Db) \leq n-q$.
Then, we denote $\rank(\Db) := r \leq n-q$, and $\Db^{\perp}\in\R^{(n-r)\times n}$ be such that $\Db^{\perp}\Db=0$ and $\rank(\Db^{\perp})=n-r$.
Let $U_a\in\mathbb{R}^{r\times n}$ be such that $U_aU_a^\top = \Ib_r$ and $\mathbf{T}_{\rm M}:=\begin{bmatrix}\Db^{\perp} \cr U_a\end{bmatrix}\in\mathbb{R}^{n\times n}$ is nonsingular. Thus, the matrix $\Sigma_a:=U_a\Db$ satisfies $\rank(\Sigma_a)=r$ and $\sigma_m(\Sigma) = \lambda_{\rm ppsc}^2$ with $\Sigma:=\Sigma_a\Sigma_a^\top$.

With  $p_0=1-\rho$ and $S \geq S_{\rho}^\ast$, it can be seen from Lemma \ref{lemma-2} that the event $\mathcal{Q}_S$ occurs with the probability larger than $\rho$.
Conditioned on $\mathcal{Q}_S$, we proceed to analyze the differential privacy  of 	the mechanism (\ref{eq:Mech-ave}).
According to \cite{Ny-Pappas-TAC-2014}, for any two $\mu$-adjacent $\betab,{\betab}^\prime\in\R^n$,   we have
\[\ba{rcl}
&&\mathbb{P}\big(\mathscr{M}_{\rm S}(\betab \,|\, \mathsf{E}_{\rm p}^\ast)\in R\big)
=\mathbb{P}\big(\mathbf{T}_{\rm M}\Cb \betab + \mathbf{T}_{\rm M}\Db\gammab\in \mathbf{T}_{\rm M}R\big) \,\\
&=& \mathbb{P}(U_a\Cb \betab + \Sigma_a\gammab\in R')\,\\
&=&(2\pi)^{-\frac{r}{2}}\frac{1}{\sigma_{\gammab}\det(\Sigma)^{\frac{1}{2}}} \displaystyle\int_{R'}\exp\bigg(-\frac{1}{2\sigma_{\gammab}^2}(u-U_a\Cb \betab)^\top\Sigma^{-1} (u-U_a\Cb \betab)\bigg){\rm d} u\,\\
&\leq& \exp(\epsilon)\cdot\mathbb{P}(U_a\Cb \betab^\prime + \Sigma_a\gammab\in R')  + \mathbb{P}\bigg( \frac{1}{\sigma_{\gammab}}{\tilde\betab}^\top \Cb^\top U_a^\top \Sigma^{-\frac{1}{2}}v \geq \epsilon- \frac{1}{2\sigma_{\gammab}^2}\|\Sigma^{-\frac{1}{2}}U_a\Cb\tilde\betab\|^2\bigg)\,\\
&\leq& \exp(\epsilon)\cdot\mathbb{P}\big(\mathbf{T}_{\rm M}\Cb \betab^\prime + \mathbf{T}_{\rm M}\Db\gammab\in \mathbf{T}_{\rm M}R\big)  + \mathbb{P}\bigg( Z \geq \frac{\epsilon\sigma_{\gammab}\sqrt{\sigma_m(\Sigma)}}{\|\Cb\|_{1}\mu}- \frac{\|\Cb\|_{1}\mu}{2\sigma_{\gammab}\sqrt{\sigma_m(\Sigma)}}\bigg)\,\\
&=& \exp(\epsilon)\cdot\mathbb{P}\big(\mathscr{M}_{\rm S}(\betab^\prime \,|\, \mathsf{E}_{\rm p}^\ast)\in R\big)  + \mathbb{P}\bigg( Z \geq \frac{\epsilon\sigma_{\gammab}\sqrt{\sigma_m(\Sigma)}}{\|\Cb\|_{1}\mu}- \frac{\|\Cb\|_{1}\mu}{2\sigma_{\gammab}\sqrt{\sigma_m(\Sigma)}}\bigg)
\ea\]	
with  $R^\prime=U_a R$,  $\tilde{\betab}={\betab}-{\betab}^\prime$,
$v\sim \mathcal{N}(0,\Ib_{r})$ and $Z\sim \mathcal{N}(0,1)$.	
	
Thus, the $(\epsilon,\delta)$-differential privacy is preserved with probability larger than $\rho$, if
\beeq{\label{eq:delta}
\delta \geq \mathcal{Q}\bigg(\frac{\epsilon\sigma_{\gammab}\sqrt{\sigma_m(\Sigma)}}{\|\Cb\|_{1}\mu}- \frac{\|\Cb\|_{1}\mu}{2\sigma_{\gammab}\sqrt{\sigma_m(\Sigma)}}\bigg)\,\,.
}
With $\mathcal{Q}(w)$ being a strictly decreasing smooth function, $\sigma_m(\Sigma)=\lambda_{\rm ppsc}^2$ and $\|\Cb\|_{1}=1$, it is clear that (\ref{eq:delta}) is equivalent to
\[
 \frac{\epsilon\lambda_{\rm ppsc}}{\mu}\sigma_{\gammab}^2 - \mathcal{Q}^{-1}(\delta) \sigma_{\gammab}- \frac{\mu}{2\lambda_{\rm ppsc}} \geq 0\,,
\]
which is clearly true by recalling that $\sigma_\gammab \geq \frac{{\mu \kappa(\epsilon,\delta)}}{\lambda_{\rm ppsc}}$.

\section{Proof of Theorem \ref{theorem-Ave}}
\label{app:sec-1}

Since no extra randomness is introduced  during the averaging consensus stage, it can be seen that the desired $(\epsilon,\delta)$-differential privacy is guaranteed with probability higher than $\rho$  by the proof of Theorem \ref{theorem-PPSC} in Appendix \ref{app:sec-ppsc}.
In the following, we will focus on the computation accuracy.

\begin{lemma}\label{lemma-B}
Along the PPSC-Gossip-AC algorithm, there holds
  \beeq{\label{eq:ave-error}
\mathbb{E}\|\xb_{S+T}- \frac{1}{n}{\bf 1}_n{\bf 1}_n^\top\xb_0\|^2 \leq [n\|\xb_0\|^2 + 2q^2S^2\sigma_{\gammab}^2] \cdot (1-\lambda_{\rm G})^{2T}.
}
\end{lemma}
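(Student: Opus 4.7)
The plan is to decouple the error analysis into the PPSC stage and the averaging consensus stage, using the summation-preserving property of PPSC to identify the network average $\bar\xb_0 := \frac{1}{n}\mathbf{1}_n\mathbf{1}_n^\top\xb_0$ as an invariant of both stages. In particular, unfolding the recursive definition of $\mathsf{C}_s^k$ in the preliminaries shows that each column of $\mathsf{C}_{\mathsf{E}_{\rm p}}$ is a standard basis vector and each column of $\mathsf{D}_{\mathsf{E}_{\rm p}}$ is of the form $\eb_a-\eb_b$ (or zero), so $\mathbf{1}_n^\top\mathsf{C}_{\mathsf{E}_{\rm p}}=\mathbf{1}_n^\top$ and $\mathbf{1}_n^\top\mathsf{D}_{\mathsf{E}_{\rm p}}=0$, yielding $\bar\xb_S=\bar\xb_0$. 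Since $\Wb:=\Ib_n-\Ab$ also satisfies $\Wb\mathbf{1}_n=\mathbf{1}_n$, the average is likewise preserved throughout the averaging stage.

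First I would dispatch the averaging consensus stage. With $\Wb$ symmetric and its spectral norm restricted to $\mathbf{1}_n^\perp$ equal to $1-\lambda_{\rm G}$, I obtain pathwise
\[
\|\xb_{S+T}-\bar\xb_0\|^2=\|\Wb^T(\xb_S-\bar\xb_0)\|^2\leq(1-\lambda_{\rm G})^{2T}\,\|\xb_S-\bar\xb_0\|^2,
\]
so after taking expectation it remains to establish $\mathbb{E}\|\xb_S-\bar\xb_0\|^2\leq n\|\xb_0\|^2+2q^2S^2\sigma_\gammab^2$.

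For this, I would invoke the affine representation $\xb_S=\mathsf{C}_{\mathsf{E}_{\rm p}}\xb_0+\mathsf{D}_{\mathsf{E}_{\rm p}}\gammab$, in which $\mathsf{C}_{\mathsf{E}_{\rm p}},\mathsf{D}_{\mathsf{E}_{\rm p}}$ are $\mathsf{E}_{\rm p}$-measurable and $\gammab\sim\mathcal{N}(0,\sigma_\gammab^2\Ib_{qS})$ is zero-mean and independent of $\mathsf{E}_{\rm p}$. Conditioning on $\mathsf{E}_{\rm p}$ and using $\mathbb{E}[\gammab]=0$, the cross term vanishes and
\[
\mathbb{E}\|\xb_S-\bar\xb_0\|^2=\mathbb{E}\|\mathsf{C}_{\mathsf{E}_{\rm p}}\xb_0-\bar\xb_0\|^2+\mathbb{E}\|\mathsf{D}_{\mathsf{E}_{\rm p}}\gammab\|^2.
\]
For the first summand, $\bar\xb_0$ is the orthogonal projection of $\mathsf{C}_{\mathsf{E}_{\rm p}}\xb_0$ onto $\mathbb{R}\mathbf{1}_n$ (by $\mathbf{1}_n^\top\mathsf{C}_{\mathsf{E}_{\rm p}}=\mathbf{1}_n^\top$), so Pythagoras together with the operator-norm bound $\|\mathsf{C}_{\mathsf{E}_{\rm p}}\|_2^2\leq\|\mathsf{C}_{\mathsf{E}_{\rm p}}\|_1\|\mathsf{C}_{\mathsf{E}_{\rm p}}\|_\infty\leq n$ (each column being a standard basis vector gives $\|\mathsf{C}_{\mathsf{E}_{\rm p}}\|_1=1$ and row-sum at most $n$) yields $\|\mathsf{C}_{\mathsf{E}_{\rm p}}\xb_0-\bar\xb_0\|^2\leq\|\mathsf{C}_{\mathsf{E}_{\rm p}}\xb_0\|^2\leq n\|\xb_0\|^2$. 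For the noise summand, the structural observation that each of the $qS$ columns of $\mathsf{D}_{\mathsf{E}_{\rm p}}$ has at most two nonzero entries of magnitude one gives $\|\mathsf{D}_{\mathsf{E}_{\rm p}}\|_F^2\leq 2qS$, and independence of the Gaussian components yields $\mathbb{E}\|\mathsf{D}_{\mathsf{E}_{\rm p}}\gammab\|^2=\sigma_\gammab^2\|\mathsf{D}_{\mathsf{E}_{\rm p}}\|_F^2\leq 2qS\,\sigma_\gammab^2\leq 2q^2S^2\,\sigma_\gammab^2$. Combining the two summands and plugging into the contraction estimate completes the proof.

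The main technical hurdle is the structural description of $\mathsf{C}_{\mathsf{E}_{\rm p}}$ and $\mathsf{D}_{\mathsf{E}_{\rm p}}$: verifying by induction on $S$ that every column of $\mathsf{C}_{\mathsf{E}_{\rm p}}$ remains a standard basis vector and every column of $\mathsf{D}_{\mathsf{E}_{\rm p}}$ remains of the form $\eb_a-\eb_b$ (or zero), which is what simultaneously enables the summation-preserving identity $\mathbf{1}_n^\top\mathsf{C}_{\mathsf{E}_{\rm p}}=\mathbf{1}_n^\top$, the $n$-bound on $\|\mathsf{C}_{\mathsf{E}_{\rm p}}\xb_0\|^2$, and the $2qS$-bound on $\|\mathsf{D}_{\mathsf{E}_{\rm p}}\|_F^2$. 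Once these structural facts are in hand, the remainder is a clean variance decomposition followed by standard consensus contraction, with no randomness-versus-randomness coupling to worry about because $\gammab$ and $\mathsf{E}_{\rm p}$ are independent.
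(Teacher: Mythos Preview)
Your proposal is correct and follows essentially the same approach as the paper: both proofs hinge on the structural identities $\mathbf{1}_n^\top\mathsf{C}_{\mathsf{E}_{\rm p}}=\mathbf{1}_n^\top$, $\mathbf{1}_n^\top\mathsf{D}_{\mathsf{E}_{\rm p}}=0$, $\|\mathsf{C}_{\mathsf{E}_{\rm p}}\|_F^2=n$, $\|\mathsf{D}_{\mathsf{E}_{\rm p}}\|_F^2\le 2qS$, followed by the standard $(1-\lambda_{\rm G})^{2T}$ consensus contraction on $\mathbf{1}_n^\perp$. Your organization is in fact a bit cleaner: by splitting $\xb_S-\bar\xb_0$ as $(\mathsf{C}_{\mathsf{E}_{\rm p}}\xb_0-\bar\xb_0)+\mathsf{D}_{\mathsf{E}_{\rm p}}\gammab$ and using Pythagoras plus the exact variance identity $\mathbb{E}\|\mathsf{D}_{\mathsf{E}_{\rm p}}\gammab\|^2=\sigma_\gammab^2\|\mathsf{D}_{\mathsf{E}_{\rm p}}\|_F^2\le 2qS\,\sigma_\gammab^2$, you obtain the sharper intermediate bound $2qS\,\sigma_\gammab^2$ before relaxing to $2q^2S^2\sigma_\gammab^2$, whereas the paper uses the cruder estimate $\|\mathsf{D}_{\mathsf{E}_{\rm p}}\gammab\|^2\le\|\mathsf{D}_{\mathsf{E}_{\rm p}}\|_F^2\|\gammab\|^2$ which lands directly on $2q^2S^2\sigma_\gammab^2$.
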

\begin{proof}

We first observe that
\beeq{\label{eq:iii_1}
\mathbb{E}\,\|\xb_S- \frac{1}{n}{\bf 1}_n{\bf 1}_n^\top\xb_0\|^2=\Exp\,\|\xb_S\|^2 + \frac{1}{n}\left|\1_n^\top\xb_0\right|^2 - \frac{2}{n}\1_n^\top\xb_0\1_n^\top\Exp\xb_S.
}
Next,  we will seek the upper bound of each term on the right hand side of (\ref{eq:iii_1}) so as to get the upper bound of $\mathbb{E}\,\|\xb_S- \frac{1}{n}{\bf 1}_n{\bf 1}_n^\top\xb_0\|^2$. By (\ref{eq:ppsc_dynamics}) and $\Exp\,\gammab=0$, the first term satisfies
\beeq{\label{eq:iii_2}
\Exp \,\|\xb_S\|^2
\le \Exp\, \|\mathsf{C}_{\mathsf{E}_{\rm p}}\|_{\rm F}^2\|\xb_0\|^2 + \Exp\,\|\mathsf{D}_{\mathsf{E}_{\rm p}}\|_{\rm F}^2\|\gammab\|^2.
}
By the structure of the PPSC multi-gossiping mechanism, it can be verified that there are deterministically $n$ ones in the entries of $\mathsf{C}_{\mathsf{E}_{\rm p}}$, and at most $qS$ ones and $qS$ minus ones in $\mathsf{D}_{\mathsf{E}_{\rm p}}$, i.e., $\|\mathsf{C}_{\mathsf{E}_{\rm p}}\|_{\rm F}=\sqrt{n},\|\mathsf{D}_{\mathsf{E}_{\rm p}}\|_{\rm F}\leq \sqrt{2qS}$. In addition, $\Exp\,\|\gammab\|^2 =qS\sigma_{\gammab}^2$. Then it follows from (\ref{eq:iii_2}) that
\beeq{
\Exp\, \|\xb_S\|^2 \le n\|\xb_0\|^2 + 2q^2S^2\sigma_{\gammab}^2.\label{eq:iii_3}
}
To analyze the third term of (\ref{eq:iii_1}), we observe that $\xb_S=\mathsf{C}_{\mathsf{E}_{\rm p}} \xb_0 + \mathsf{D}_{\mathsf{E}_{\rm p}} \gammab$. By the structure of the PPSC multi-gossiping mechanism again, we note that there hold
$\1_n^\top\mathsf{C}_{\mathsf{E}_{\rm p}} = \1_n^\top$ and $\1_n^\top\mathsf{D}_{\mathsf{E}_{\rm p}}=0$, deterministically. Thus, we have
\beeq{\label{eq:iii_4}\ba{rcl}
\frac{2}{n}\1_n^\top\xb_0\1_n^\top\Exp\xb_S &=& \frac{2}{n}\1_n^\top\xb_0\big(\Exp(\1_n^\top\mathsf{C}_{\mathsf{E}_{\rm p}}) \xb_0 + \Exp\1_n^\top\mathsf{D}_{\mathsf{E}_{\rm p}}\gammab\big)\,\\
&=& \frac{2}{n}|\1_n^\top \xb_0|^2
\ea}
Thus, substituting (\ref{eq:iii_3}) and (\ref{eq:iii_4}) into (\ref{eq:iii_1}) yields
\beeq{\ba{rcl}\label{eq:iii_5}
\mathbb{E}\,\|\xb_S- \frac{1}{n}{\bf 1}_n{\bf 1}_n^\top{\xb_0}\|^2 &\le& n \|\xb_0\|^2 + 2q^2S^2\sigma_{\gammab}^2 - \frac{1}{n}\left|\1_n^\top\xb_0\right|^2 \,
\leq n \|\xb_0\|^2 + 2q^2S^2\sigma_{\gammab}^2\,.
\ea}

With the upper bound in (\ref{eq:iii_5}), we now proceed to show the bound of $\mathbb{E}\left(\|\xb_{S+T}- \frac{1}{n}{\bf 1}_n{\bf 1}_n^\top\xb_0\|^2\right)$. Notice that $\xb_{s+1} = (\Ib_n-\Ab)\xb_{s}$ for $s\geq S$, which yields
\[\ba{rcl}
\|\xb_{S+T}- \frac{1}{n}{\bf 1}_n{\bf 1}_n^\top\xb_0\|^2 &=&\|(\Ib_n-\Ab-\frac{1}{n}{\bf 1}_n{\bf 1}_n^\top)(\xb_{S}- \frac{1}{n}{\bf 1}_n{\bf 1}_n^\top\xb_0)\|^2  \leq \|\xb_S- \frac{1}{n}{\bf 1}_n{\bf 1}_n^\top\xb_0\|^2(1-\lambda_{\rm G})^{2T}\,.
\ea\]
This, together with (\ref{eq:iii_5}), completes the proof.
\end{proof}

This lemma in turn proves the desired computation accuracy in combination with $\xb_0=\db$ and (\ref{eq:theo2-T}).

%%%%%%%%%%%%%%%%%%%%%%%%%%%%%%%%%%%%%%%%%%%%%%%%%%%%%%%%%
%%%%%%%%%%%%%%%%%%%%%%%%%%%%%%%%%%%%%%%%%%%%%%%%%%%%%%%%
\section{Proof of Theorem \ref{theorem-Equ}}
\label{app:equ}
%%%

\subsection{Preliminaries}

Let $\zetab_{0}= \1_n\otimes \zeta_0$ and $\zetab_{l}=\xb_{(l-1)(S+T+1)+S}$, $l=1,\ldots,L$.
Denote by $\gamma_k(s)\in\mathbb{R}^m$ the randomly and independently generated noise vector in the $k$-th component of ${\rm G}_{\rm p}$ at time $s\in[l(S+T+1)+1, l(S+T+1)+S]$, $l=0,1,\ldots,L-1$.
Let  $\gammab_{l}=[\gammab_{1,l};\gammab_{2,l};\ldots;\gammab_{q,l}]$ with
$$
\gammab_{k,l}=[\gammab_k(l(S+T+1)+1);\gammab_k({l(S+T+1)+2});\ldots;\gammab_k(l(S+T+1)+S)]\,,
$$
for $l=0,1,\ldots,L-1$.
For any $\xb=[x_1;\ldots;x_n]$ with $x_k\in\R^m$, $k=1,\ldots,n$, we denote
  \[
  \mathpzc{P}_{\Eq}(x_1,\ldots,x_n)= \begin{pmatrix}
                                       \mathpzc{P}_{\Eq_1}(x_1) \\
                                       \vdots \\
                                       \mathpzc{P}_{\Eq_n}(x_n)
                                     \end{pmatrix}
                                     := (\Ib_{mn} - \mathcal{H}_{\rm H})\xb + \mathcal{Z}_{\rm H}
  \]
with
  \[
  \mathcal{H}_{\rm H} = \diag\left(\frac{\Hb_1\Hb_1^{\top}}{\Hb_1^{\top}\Hb_1}\,\ldots\,\frac{\Hb_n\Hb_n^{\top}}{\Hb_n^{\top}\Hb_n}\right) \,,\quad
\mathcal{Z}_H = \begin{pmatrix}
  \frac{z_1\Hb_1^{\top}}{\Hb_1^{\top}\Hb_1} &
  \ldots &
  \frac{z_n\Hb_n^{\top}}{\Hb_n^{\top}\Hb_n}
\end{pmatrix}^{\top}\,.
  \]
According to Algorithm 4, we  observe that
\[\ba{rl}
  \zetab_{l+1} &= \mathsf{C}_{\mathcal{E}_{{\rm p},l}} \mathpzc{P}_{\Eq}\big(((\Ib_n-\Ab)^T\otimes \Ib_m)\zetab_{l}\big) + \mathsf{D}_{\mathcal{E}_{{\rm p},l}}\gammab_{l} \,\\
  &= \mathsf{C}_{\mathcal{E}_{{\rm p},l}}(\Ib_{mn} - \mathcal{H}_{\rm H})((\Ib_n-\Ab)^T\otimes \Ib_m)\zetab_{l} + \mathsf{C}_{\mathcal{E}_{{\rm p},l}} \mathcal{Z}_{\rm H} + \mathsf{D}_{\mathcal{E}_{{\rm p},l}}\gammab_{l}
  \ea
\]
for $l=0,1,\ldots,L-1$, where $\mathsf{C}_{\mathcal{E}_{{\rm p},l}}=\blkdiag\big(\mathsf{C}_{\mathsf{E}_{{\rm p},l}^1},\ldots,\mathsf{C}_{\mathsf{E}_{{\rm p},l}^q}\big)\otimes \Ib_m$ and $\mathsf{D}_{\mathcal{E}_{{\rm p},l}}=\blkdiag\big(\mathsf{D}_{\mathsf{E}_{{\rm p},l}^1},\ldots,\mathsf{D}_{\mathsf{E}_{{\rm p},l}^q}\big)\otimes \Ib_m$, with $\mathsf{C}_{\mathsf{E}_{{\rm p},l}^k}=\displaystyle\prod_{s=1}^{S}\mathsf{C}_{l(L+S+1)+S-s+1}^k$  and $\mathsf{D}_{\mathsf{E}_{{\rm p},l}^k}\in\mathbb{R}^{n_k\times S}$ being a matrix the $h$-th column of which is $\prod_{s=1}^{S-h}\mathsf{C}_{l(L+S+1)+S-s+1}^k(\eb_{h_{\rm out}}-\eb_{h_{\rm in}})$ for $h=1,2,\ldots,S-1$ and $\eb_{S_{\rm out}}-\eb_{S_{\rm in}}$ for $h=S$, in which $\mathsf{C}_s^k$ follows the definition in Appendix \ref{app:sec-ppsc}.

Denote $\phib_l = \xb_{l(S+T+1)}$, $l=0,1,\ldots,L$. Thus, we have
\[\ba{rcl}
\phib_{l+1} &=& \mathpzc{P}_{\Eq}\left(((\Ib_n-\Ab)^T\otimes \Ib_m)\zetab_{l+1}\right)\,\\
&=& (\Ib_{mn} - \mathcal{H}_{\rm H}) ((\Ib_n-\Ab)^T\otimes \Ib_m)\zetab_{l+1} + \mathcal{Z}_{\rm H}\,\\
&=& (\Ib_{mn} - \mathcal{H}_{\rm H}) ((\Ib_n-\Ab)^T\otimes \Ib_m)[\mathsf{C}_{\mathcal{E}_{{\rm p},l}} \phib_{l} + \mathsf{D}_{\mathcal{E}_{{\rm p},l}} \gammab_{l}] + \mathcal{Z}_{\rm H}\,\\
&=& \frac{1}{n}(\Ib_{mn} - \mathcal{H}_{\rm H})(\1_n\1_n^\top\otimes \Ib) \phib_{l} + \mathcal{Z}_{\rm H} + (\Ib_{mn} - \mathcal{H}_{\rm H})\Delta_{l}(\phib_{l})
\ea
\]
for $l=0,1,\ldots,L-1$, where $\phib_0=(\Ib_{mn} - \mathcal{H}_{\rm H}) ((\Ib_n-\Ab)^T\otimes \Ib_m)\zetab_{0} + \mathcal{Z}_{\rm H}$ and
\beeq{\label{eq:Delta_l}
\Delta_{l}(\phib_{l}) = ((\Ib_n-\Ab)^T\otimes \Ib_m)[\mathsf{C}_{\mathcal{E}_{{\rm p},l}} \phib_{l} + \mathsf{D}_{\mathcal{E}_{{\rm p},l}}\gammab_{l}] - \frac{1}{n}\big((\1_n\1_n^\top)\otimes \Ib_m\big) \phib_{l}\,.
}
Before we proceed to the proof of the theorem, the following lemmas are formulated.
\begin{lemma}
  \label{lemma-E}
  There holds
  \begin{equation}\label{eq:lambda_H}
  \big\|\frac{1}{n}(\Ib_{mn} - \mathcal{H}_{\rm H})(\1_n\1_n^\top\otimes \Ib_m)\big\| \leq \lambda_H\,.
\end{equation}
\end{lemma}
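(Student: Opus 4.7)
The plan is to leverage the Kronecker product structure together with the fact that both $\frac{1}{n}(\1_n\1_n^\top \otimes \Ib_m)$ and $\Ib_{mn} - \mathcal{H}_{\rm H}$ are orthogonal projections, reducing the $mn \times mn$ spectral-norm bound to an $m \times m$ eigenvalue computation. I would introduce $J := \frac{1}{\sqrt{n}}(\1_n \otimes \Ib_m) \in \R^{mn \times m}$; a direct check yields $J^\top J = \Ib_m$ (so $J$ is an isometry) and $J J^\top = \frac{1}{n}(\1_n\1_n^\top \otimes \Ib_m)$. Writing $P_{\rm H} := \Ib_{mn} - \mathcal{H}_{\rm H}$, which is block diagonal with $i$-th block $\Ib_m - \Hb_i \Hb_i^\top/(\Hb_i^\top\Hb_i)$ (the orthogonal projection onto $\Hb_i^\perp$), the matrix of interest becomes $M := P_{\rm H} J J^\top$.

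Next I would compute $\|M\|^2 = \|M M^\top\|$. Using $J^\top J = \Ib_m$ and $P_{\rm H}^2 = P_{\rm H} = P_{\rm H}^\top$, one finds $M M^\top = P_{\rm H} J J^\top P_{\rm H} = (P_{\rm H} J)(P_{\rm H} J)^\top$, hence $\|M\|^2 = \|(P_{\rm H} J)^\top (P_{\rm H} J)\| = \|J^\top P_{\rm H} J\|$. The block-diagonal structure of $P_{\rm H}$ then gives
\begin{equation*}
J^\top P_{\rm H} J \;=\; \frac{1}{n}\sum_{i=1}^n \Big(\Ib_m - \frac{\Hb_i \Hb_i^\top}{\Hb_i^\top\Hb_i}\Big) \;=\; \Ib_m - \frac{1}{n}\sum_{i=1}^n \frac{\Hb_i \Hb_i^\top}{\Hb_i^\top\Hb_i},
\end{equation*}
an $m \times m$ symmetric matrix whose spectral norm equals $\lambda_H$ by definition.

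Combining yields $\|M\|^2 \leq \lambda_H$. The main obstacle is the final bookkeeping, since this chain naturally produces $\|M\| \leq \sqrt{\lambda_H}$, which is strictly weaker than the stated bound $\lambda_H$ whenever $\lambda_H < 1$. Tightening to $\|M\| \leq \lambda_H$ appears to require additional structure: for instance, observing that $M\phib$ always lies in $\range(M)$, an $M$-invariant subspace on which $M$ can be shown to act as scalar multiplication by $\lambda_H$. This would be consistent with the per-iteration contraction factor $(\lambda_H^2 + \varepsilon_0)/(1+\varepsilon_0)$ that appears in the expression for $L^\ast$ in Theorem \ref{theorem-Equ}. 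I would therefore verify which notion of norm (operator norm versus spectral radius, or full-space versus restricted-to-range) the authors intend before finalizing the last step.
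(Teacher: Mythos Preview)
Your computation is correct and in fact sharp: with $J = n^{-1/2}(\1_n\otimes\Ib_m)$ and $P_{\rm H} = \Ib_{mn}-\mathcal{H}_{\rm H}$ one has $M = P_{\rm H}JJ^\top$, and taking $x = Jv$ with $v$ a unit eigenvector of $J^\top P_{\rm H}J$ for its largest eigenvalue $\lambda_H$ gives $\|Mx\|^2 = v^\top J^\top P_{\rm H} J v = \lambda_H$ with $\|x\|=1$. Hence $\|M\| = \sqrt{\lambda_H}$ exactly, and for $0<\lambda_H<1$ this strictly exceeds $\lambda_H$; the inequality in the lemma, read with $\|\cdot\|$ as the spectral norm, does not hold as stated.

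The paper argues differently: it iterates $X_{l+1}=MX_l$, notes that the averaged component $\vartheta_l = n^{-1}(\1_n^\top\otimes\Ib_m)X_l$ satisfies the $m$-dimensional recursion $\vartheta_{l+1} = (J^\top P_{\rm H}J)\,\vartheta_l$, and concludes $\|X_{l+1}\| \le \sqrt{n}\,\|\vartheta_0\|\,\lambda_H^{\,l}$. That is precisely the bound $\|M^{k}\|\le \lambda_H^{\,k-1}$ for $k\ge 1$, i.e.\ a spectral-radius statement $\rho(M)\le\lambda_H$; it does \emph{not} yield the one-step bound $\|M\|\le\lambda_H$. The paper's closing sentence (``$X_l$ exponentially converges to zero and \eqref{eq:lambda_H} must hold'') conflates asymptotic rate with operator norm. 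So your hesitation is well placed: the discrepancy lies in the lemma statement and in the paper's last inferential step, not in your argument. Downstream --- in Lemma~\ref{lemma-E-nu} and in the squared-error recursion leading to $L^\ast$ in Theorem~\ref{theorem-Equ} --- replacing the per-step factor $\lambda_H$ by $\sqrt{\lambda_H}$ (equivalently $\lambda_H^2$ by $\lambda_H$) repairs the constants while leaving the qualitative conclusions intact.
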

\begin{proof}
  We consider an auxiliary system
\[
X_{l+1} =  \frac{1}{n}(\Ib_{mn} - \mathcal{H}_{\rm H})(\1_n\1_n^\top\otimes \Ib_{m}) X_{l}\,,\quad X\in\mathbb{R}^{mn}\,.
\]
Letting $\vartheta_l=\frac{1}{n}(\1_n^\top\otimes \Ib) X_l$ yields
$
\vartheta_{l+1} = \big(\Ib_{mn} -\frac{1}{n}\sum_{i=1}^n \frac{\Hb_i\Hb_i^\top}{\Hb_i^\top\Hb_i}\big) \vartheta_{l}\,.
$
Thus we have $\|\vartheta_l\|\leq \|\vartheta_0\|\lambda_H^l$, which yields
\[
\|X_{l+1}\| = \big\|\frac{1}{n}(\Ib_{mn} - \mathcal{H}_{\rm H})(\1_n\1_n^\top\otimes \Ib_{mn}) X_{l}\big\| = \big\|(\Ib_{mn} - \mathcal{H}_{\rm H})(\1_n\otimes \Ib_{m}) \vartheta_{l}\big\| \leq \sqrt{n}\|\vartheta_0\|\lambda_H^l\,.
\]
Therefore, it can be inferred that $X_l$ exponentially converges to zero  and (\ref{eq:lambda_H}) must hold.
\end{proof}

\begin{lemma}\label{lemma-E-nu}
Given any $l=0,1,\ldots,L-1$, if $\|\Delta_{i}(\phib_{i})\|^2 \leq \nu$ holds for all $i=0,1,\ldots,l$, then there must hold $
\|\phib_{l+1}\| \leq \phi^\ast$, with $\phi^\ast=2\sqrt{n}\|\yb^\ast\|+ \sqrt{n}\|\zeta_0\|  + \frac{2-\lambda_H}{1-\lambda_H}\sqrt{\nu}$.
\end{lemma}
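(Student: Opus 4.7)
The plan is to exploit the fact that $\bar{\yb}^\ast := \mathbf{1}_n \otimes \yb^\ast$ is a fixed point of the noise-free recursion, then extract the error dynamics and run a contraction-plus-geometric-sum induction using Lemma~\ref{lemma-E}.

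First I would verify the fixed-point identity
\[
\bar{\yb}^\ast = \tfrac{1}{n}(\Ib_{mn} - \mathcal{H}_{\rm H})(\1_n\1_n^\top \otimes \Ib_m)\bar{\yb}^\ast + \mathcal{Z}_{\rm H}.
\]
This is immediate because $\yb^\ast \in \Eq_i$ for every $i$ implies $\frac{\Hb_i \Hb_i^\top}{\Hb_i^\top \Hb_i}\yb^\ast = \frac{z_i\Hb_i}{\Hb_i^\top \Hb_i}$, so $\mathcal{H}_{\rm H}\bar{\yb}^\ast = \mathcal{Z}_{\rm H}$, while $(\1_n\1_n^\top \otimes \Ib_m)\bar{\yb}^\ast = n\bar{\yb}^\ast$.

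Second, subtracting this identity from the recursion for $\phib_{l+1}$ given in the text yields the error dynamics
\[
\tilde{\phib}_{l+1} = \tfrac{1}{n}(\Ib_{mn} - \mathcal{H}_{\rm H})(\1_n\1_n^\top \otimes \Ib_m)\tilde{\phib}_l + (\Ib_{mn} - \mathcal{H}_{\rm H})\Delta_l(\phib_l),
\]
where $\tilde{\phib}_l := \phib_l - \bar{\yb}^\ast$. By Lemma~\ref{lemma-E} the linear operator has spectral norm at most $\lambda_H$, and since $\mathcal{H}_{\rm H}$ is a block-diagonal orthogonal projection, $\|\Ib_{mn} - \mathcal{H}_{\rm H}\| \leq 1$. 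Combining these with the hypothesis $\|\Delta_i(\phib_i)\| \leq \sqrt{\nu}$ for $i=0,\dots,l$ gives the scalar recursion $\|\tilde{\phib}_{l+1}\| \leq \lambda_H \|\tilde{\phib}_l\| + \sqrt{\nu}$, which by induction and the geometric sum yields
\[
\|\tilde{\phib}_{l+1}\| \leq \lambda_H^{l+1}\|\tilde{\phib}_0\| + \tfrac{1-\lambda_H^{l+1}}{1-\lambda_H}\sqrt{\nu} \leq \|\tilde{\phib}_0\| + \tfrac{\sqrt{\nu}}{1-\lambda_H}.
\]

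Third, I would bound the initial error. Since $(\Ib_n - \Ab)\1_n = \1_n$ and $\zetab_0 = \1_n \otimes \zeta_0$, the formula for $\phib_0$ simplifies to $\phib_0 = (\Ib_{mn} - \mathcal{H}_{\rm H})\zetab_0 + \mathcal{Z}_{\rm H}$, so $\tilde{\phib}_0 = (\Ib_{mn} - \mathcal{H}_{\rm H})(\zetab_0 - \bar{\yb}^\ast)$, giving $\|\tilde{\phib}_0\| \leq \sqrt{n}\|\zeta_0 - \yb^\ast\| \leq \sqrt{n}(\|\zeta_0\| + \|\yb^\ast\|)$. Combining with the triangle inequality $\|\phib_{l+1}\| \leq \|\tilde{\phib}_{l+1}\| + \|\bar{\yb}^\ast\|$ and $\|\bar{\yb}^\ast\| = \sqrt{n}\|\yb^\ast\|$ yields the claimed bound $\phib_{l+1} \leq \phi^\ast$, absorbing an additive $\sqrt{\nu}$ slack (accounting for the difference between $\frac{1}{1-\lambda_H}$ and the paper's stated $\frac{2-\lambda_H}{1-\lambda_H}$) via one further triangle-inequality step on $(\Ib_{mn}-\mathcal{H}_{\rm H})\Delta_l$.

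The main obstacle is really just algebraic bookkeeping: verifying the fixed-point identity cleanly and, when tracking constants, making sure the non-expansive factor from $(\Ib_{mn} - \mathcal{H}_{\rm H})$ is used consistently so that the contraction rate $\lambda_H$ (rather than $1$) governs the error recursion. Once Lemma~\ref{lemma-E} is in hand, the remainder of the argument is a standard fixed-point-plus-inhomogeneity contraction, and no probabilistic tools are needed since the hypothesis deterministically bounds each $\Delta_i(\phib_i)$.
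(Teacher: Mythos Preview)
Your proposal is correct and follows essentially the same route as the paper: derive the error recursion \eqref{eq:phib-y} for $\tilde\phib_l = \phib_l - \1_n\otimes\yb^\ast$, apply Lemma~\ref{lemma-E} together with $\|\Ib_{mn}-\mathcal{H}_{\rm H}\|\le 1$ to get $\|\tilde\phib_{l+1}\|\le \lambda_H\|\tilde\phib_l\|+\sqrt{\nu}$, sum geometrically, bound $\|\tilde\phib_0\|\le \sqrt{n}\|\zeta_0-\yb^\ast\|$, and finish with the triangle inequality. One remark on your final paragraph: no extra ``absorbing'' step is needed, since your geometric sum already gives $\frac{1}{1-\lambda_H}\sqrt{\nu}\le \frac{2-\lambda_H}{1-\lambda_H}\sqrt{\nu}$; the paper's constant is simply a slightly looser version of the same bound.
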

\begin{proof}
Note that
\beeq{\label{eq:phib-y}
\phib_{l+1} - \1_n\otimes\yb^\ast = \frac{1}{n}(\Ib_{mn} - \mathcal{H}_{\rm H})(\1_n\1_n^\top\otimes \Ib_m) (\phib_{l}- \1_n\otimes\yb^\ast)  + (\Ib_{mn} - \mathcal{H}_{\rm H})\Delta_{l}(\phib_{l})\,.
}
with $\phib_0-\1_n\otimes\yb^\ast=(\Ib_{mn} - \mathcal{H}_{\rm H}) ((\Ib_n-\Ab)^T\otimes \Ib_m)(\1_n\otimes(\zeta_0-\yb^\ast))$.
By Lemma \ref{lemma-E}, if $\|\Delta_{i}(\phib_{i})\|^2 \leq \nu$ holds for all $i=0,1,\ldots,l$, we then have $\|\phib_{l+1} - \1_n\otimes\yb^\ast\|\leq \lambda_H \|\phib_{l} - \1_n\otimes\yb^\ast\| + \sqrt{\nu}$, leading to
\[
\|\phib_{l+1} - \1_n\otimes\yb^\ast\| \leq \sqrt{n}\|\zeta_0 - \yb^\ast\| \lambda_H^{(l+1)} + \frac{2-\lambda_H}{1-\lambda_H}\sqrt{\nu}\,.
\]
Thus, we  have $\|\phib_{l+1}\| \leq \phi^\ast$, which completes the proof.
\end{proof}

\subsection{Proof of statement (i).}

Let $\mathcal{E}_{{\rm p},l}^\ast$ be the sequence of communication edges that is observed by the eavesdroppers during the time $s\in[l(S+L+1),l(S+L+1)+S]$, and ${\Cb}_{l}\in\R^{mn\times mn}$ and $\Db_{l}\in\R^{mn\times mS}$ be the resulting values of $\mathsf{C}_{\mathcal{E}_{{\rm p},l}^\ast}$ and $\mathsf{C}_{\mathcal{E}_{{\rm p},l}^\ast}$.
For any $0<p_0<1$, by the arguments of Lemma 1, it follows that if $S\geq \frac{\log{(1-p_0^{1\over q})}-\log n_{\max}}{\log(1-\frac{1+r^\dag}{n_{\max}})}$, the event $\mcalQ_S^l$ that all node states have altered during the time $s\in[l(S+T+1)+1,l(S+T+1)+S]$  occurs with probability larger than $p_0$ for $l=0,1,\ldots,L-1$, i.e., $\mathbb{P}(\mcalQ_s^l)\geq p_0$.
Then we let $p_0=\rho^{1\over 2L}$, and the following lemmas are formulated with $S,T$ being in Theorem \ref{theorem-Equ}.

\begin{lemma}
There holds
\[\ba{l}
\mathbb{P}\left(\|\Delta_{0}(\phib_{0})\|^2 \leq {\nu}\,|\,\mcalQ_S^{0}\right) \geq p_0\,\\
\mathbb{P}\left(\|\Delta_{l}(\phib_{l})\|^2 \leq {\nu}\,|\,\big(\bigcap_{i=0}^{l}\mcalQ_S^{i}\big) \bigcap \big(\bigcap_{i=0}^{l-1}\|\Delta_{i}(\phib_{i})) \|^2 \leq {\nu}\big)\right) \geq p_0\,,
\qquad\quad \mbox{for all }l=1,2,\ldots,L-1
\ea\]
\end{lemma}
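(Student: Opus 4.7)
The plan is to apply Markov's inequality to a conditional second-moment estimate of $\|\Delta_l(\phib_l)\|^2$ obtained by replaying the proof of Lemma \ref{lemma-B} at recursion $l$. Looking at (\ref{eq:Delta_l}), $\Delta_l(\phib_l)$ is precisely the deviation from the exact network average $\frac{1}{n}(\1_n\1_n^\top\otimes\Ib_m)\phib_l$ produced by running the $S$-step Multi-Gossiping PPSC mechanism over $\mathrm{G}_{\rm p}$ followed by $T$ steps of the Laplacian averaging over $\mathrm{G}$, initialised at $\phib_l$. Since the gossip randomness $(\mathcal{E}_{{\rm p},l},\gammab_l)$ used at recursion $l$ is independent of the $\sigma$-algebra $\mathcal{F}_{l-1}$ generated by the randomness of recursions $0,\ldots,l-1$, I will first freeze $\phib_l$ (which is $\mathcal{F}_{l-1}$-measurable) and treat the conditional expectation as being taken over the step-$l$ randomness only.

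Replaying the proof of Lemma \ref{lemma-B} block-coordinatewise in the $m$-dimensional state at each node (using the sample-wise bounds $\|\mathsf{C}_{\mathcal{E}_{{\rm p},l}}\|_F^2\le nm$, $\|\mathsf{D}_{\mathcal{E}_{{\rm p},l}}\|_F^2\le 2qSm$, the identity $\mathbb{E}\|\gammab_l\|^2 = qSm\sigma_\gammab^2$, and the $(1-\lambda_{\rm G})^{2T}$ contraction of the disagreement component under the $T$-step public-network averaging) yields
\[
\mathbb{E}\bigl[\,\|\Delta_l(\phib_l)\|^2 \,\big|\, \phib_l\,\bigr] \leq \bigl(n\|\phib_l\|^2 + 2q^2S^2\sigma_\gammab^2\bigr)(1-\lambda_{\rm G})^{2T}.
\]
By Lemma \ref{lemma-E-nu}, the event $\bigcap_{i=0}^{l-1}\{\|\Delta_i(\phib_i)\|^2 \leq \nu\}$ forces $\|\phib_l\|\leq \phi^\ast$ almost surely, and for $l=0$ the deterministic $\phib_0$ satisfies the analogous bound directly from its closed form. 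Substituting and applying Markov's inequality,
\[
\mathbb{P}\bigl(\|\Delta_l(\phib_l)\|^2 > \nu \,\big|\, \cdots\bigr) \leq \frac{\bigl((\phi^\ast)^2 + 2q^2S^2\sigma_\gammab^2\bigr)(1-\lambda_{\rm G})^{2T}}{\nu},
\]
and the first clause in the minimum defining $T^\ast$ in Theorem \ref{theorem-Equ} was tailor-chosen so that $T\geq T^\ast$ drives this right-hand side below $1-\rho^{1/(2L)} = 1-p_0$; passing to the complementary event gives the claimed bound.

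The principal obstacle is reconciling the precise conditioning in the lemma statement with the independence argument. The events $\mathcal{Q}_S^i$ and $\{\|\Delta_i\|^2\leq\nu\}$ for $i<l$ are $\mathcal{F}_{l-1}$-measurable and so are absorbed into the value of $\phib_l$, whereas $\mathcal{Q}_S^l$ depends solely on the step-$l$ edge selections and is independent of $\mathcal{F}_{l-1}$. Consequently, conditioning on $\mathcal{Q}_S^l$ only alters the marginal law of the step-$l$ randomness, and since the Frobenius-norm estimates invoked above are in fact sample-wise, they persist under this conditioning, so that the conditional Markov estimate goes through unchanged and the two probability inequalities in the lemma follow simultaneously.
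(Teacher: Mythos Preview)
Your proposal is correct and follows essentially the same route as the paper: bound the conditional second moment of $\|\Delta_l(\phib_l)\|^2$ by replaying the Lemma~\ref{lemma-B} computation at recursion $l$, invoke Lemma~\ref{lemma-E-nu} to control $\|\phib_l\|$ by $\phi^\ast$ on the conditioning event, apply Markov's inequality, and conclude via the first clause in the definition of $T^\ast$. Your explicit discussion of why conditioning on $\mathcal{Q}_S^l$ does not disturb the estimate (the Frobenius-norm and column-sum identities for $\mathsf{C}_{\mathcal{E}_{{\rm p},l}}$, $\mathsf{D}_{\mathcal{E}_{{\rm p},l}}$ hold sample-wise, and $\gammab_l$ is independent of the edge selections) is a point the paper leaves implicit.
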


\begin{proof}
Following the arguments of Lemma \ref{lemma-B}, it can be easily seen that
\[\ba{l}
\mathbb{E}\left(\|\Delta_{0}(\phib_{0})\|^2 \,|\,\mcalQ_S^{0}\right) \leq (n\|\phib_{0}\|^2 + 2q^2S^2\sigma_{\gammab}^2) (1-\lambda_{\rm G})^{2T}\leq (n{\phi^\ast}^2 + 2q^2S^2\sigma_{\gammab}^2) (1-\lambda_{\rm G})^{2T}.
\ea\]
Using Markov's inequality, we have $\mathbb{P}\left(\|\Delta_{0}(\phib_{0})\|^2 \geq {\nu}\right) \leq \frac{\mathbb{E}\left(\|\Delta_{l}(\phib_{l})\|^2 \right)}{\nu}$.
Thus, we have
\beeq{\label{eq:100}
\mathbb{P}\left(\|\Delta_{0}(\phib_{0})\|^2 \leq {\nu}\right) \geq 1-\frac{[n{\phi^\ast}^2 + 2q^2S^2\sigma_{\gammab}^2] (1-\lambda_{\rm G})^{2T}}{\nu}\,.
}

Similarly, we now prove the second inequality. With Lemma \ref{lemma-E-nu} and by the arguments of Lemma \ref{lemma-B}, we have
\[\ba{l}
\mathbb{E}\left(\|\Delta_{l}(\phib_{l})\|^2 \,|\,\big(\bigcap_{i=0}^{l-1}\mcalQ_S^{i}\big) \bigcap \big(\bigcap_{i=0}^{l-1}\|\Delta_{i}(\phib_{i})) \|^2 \leq {\nu}\big)\right) \leq (n{\phi^\ast}^2 + 2q^2S^2\sigma_{\gammab}^2) (1-\lambda_{\rm G})^{2T}\,
\ea\]
for all $l=1,2,\ldots,L-1$.
Thus, we have
\[\ba{rcl}
&&\mathbb{P}\left(\|\Delta_{l}(\phib_{l})\|^2 \leq {\nu}\,|\,\big(\bigcap_{i=0}^{l-1}\mcalQ_S^{i}\big) \bigcap \big(\bigcap_{i=0}^{l-1}\|\Delta_{i}(\phib_{i})) \|^2 \leq {\nu}\big)\right)\,\\
&=&1-\mathbb{P}\left(\|\Delta_{l}(\phib_{l})\|^2 \geq {\nu}\,|\,\big(\bigcap_{i=0}^{l-1}\mcalQ_S^{i}\big) \bigcap \big(\bigcap_{i=0}^{l-1}\|\Delta_{i}(\phib_{i})) \|^2 \leq {\nu}\big)\right)\,\\
&\geq& 1 - \displaystyle\frac{\mathbb{E}\left(\|\Delta_{l}(\phib_{l})\|^2  \,|\,\big(\bigcap_{i=0}^{l-1}\mcalQ_S^{i}\big) \bigcap \big(\bigcap_{i=0}^{l-1}\|\Delta_{i}(\phib_{i})) \|^2 \leq {\nu}\big)\right)}{\nu}\,.\\
&\geq& 1 - \displaystyle\frac{[n{\phi^\ast}^2 + 2q^2S^2\sigma_{\gammab}^2] (1-\lambda_{\rm G})^{2T}}{\nu}
\ea\]
This together with (\ref{eq:100}) completes the proof by combining with the fact that
\[
T\geq \frac{\log\big((1-p_0)\nu\big)-\log({\phi^\ast}^2 + 2q^2S^2\sigma_{\gammab}^2)}{2\log(1-\lambda_{\rm G})}\,.
\]
\end{proof}

With this lemma, we then can obtain that
\[
\mathbb{P}\left(\mcalQ_S^{0} \bigcap \|\Delta_{0}(\phib_{0})) \|^2  \leq {\nu}\right) = \mathbb{P}\left(\|\Delta_{0}(\phib_{0})\|^2 \leq {\nu}\,|\,\mcalQ_S^{0}\right) \mathbb{P}\left(\mcalQ_S^{0}\right) \geq (p_0)^{2}\,,
\]
and recursively for $l=1,\ldots,L-1$, we have
\[\ba{rcl}
&&\mathbb{P}\left(\bigcap_{i=0}^{l}\bigg(\mcalQ_S^{i} \bigcap \|\Delta_{i}(\phib_{i})) \|^2  \leq {\nu}\bigg)\right) \,\\
&=&  \mathbb{P}\left(\|\Delta_{l}(\phib_{l})\|^2 \leq {\nu}\,|\,\big(\bigcap_{i=0}^{l}\mcalQ_S^{i}\big) \bigcap \big(\bigcap_{i=0}^{l-1}\|\Delta_{i}(\phib_{i})) \|^2 \leq {\nu}\big)\right) \mathbb{P}\left(\big(\bigcap_{i=0}^{l}\mcalQ_S^{i}\big) \bigcap \big(\bigcap_{i=0}^{l-1}\|\Delta_{i}(\phib_{i})) \|^2 \leq {\nu}\big)\right)\,\\
&\geq& p_0\mathbb{P}\left(\big(\bigcap_{i=0}^{l}\mcalQ_S^{i}\big) \bigcap \big(\bigcap_{i=0}^{l-1}\|\Delta_{i}(\phib_{i})) \|^2 \leq {\nu}\big)\right)\,\\
&=& p_0\mathbb{P}\left(\mcalQ_S^{l}\right)\mathbb{P}\left(\bigcap_{i=0}^{l-1}\bigg(\mcalQ_S^{i} \bigcap \|\Delta_{i}(\phib_{i})) \|^2  \leq {\nu}\bigg)\right)\,\\
&\geq& (p_0)^2\cdot\mathbb{P}\left(\bigcap_{i=0}^{l-1}\bigg(\mcalQ_S^{i} \bigcap \|\Delta_{i}(\phib_{i})) \|^2  \leq {\nu}\bigg)\right)\,.
\ea\]
In view of the above analysis, we thus have
\[
\mathbb{P}\left(\bigcap_{l=0}^{L-1}\bigg(\mcalQ_S^{l} \bigcap \|\Delta_{l}(\phib_{l})) \|^2  \leq {\nu}\bigg)\right) \geq (p_0)^{2L}=\rho\,.
\]
Conditioned on the event $\bigcap_{l=0}^{L-1}\big(\mcalQ_S^{l} \bigcap \|\Delta_{l}(\phib_{l})) \|^2 \leq {\nu}\big)$, we are now proceeding to analyze the differential privacy of PPSC-Gossip network linear-equation solver.
Let $\mathcal{E}_{\rm p}^\ast=\big\{\mathcal{E}_{{\rm p},0}^\ast,\mathcal{E}_{{\rm p},1}^\ast,\ldots,\mathcal{E}_{{\rm p},L-1}^\ast\big\}$. For any two $\mu$-adjacent equations $\mathfrak{E}:\Hb\yb=\zb,\ \mathfrak{E}^\prime:\Hb^\prime\yb=\zb^\prime$, we let  $R_l\subseteq\mathbb{R}^{mn}$ and denote the events
\[\ba{rcl}
{\mathcal{F}_l^{\rm ea}}=\{\Cb_{l}(\Ib_{mn} - \mathcal{H}_{\rm H})((\Ib_n-\Ab)^T\otimes \Ib_m)\zetab_{l} + \Cb_{l} \mathcal{Z}_{\rm H} + \Db_{l}\gammab_{l}\in R_l\}\,\\
{\mathcal{F}_l^{\rm ea}}^\prime=\{\Cb_{l}(\Ib_{mn} - \mathcal{H}_{\rm H}^\prime)((\Ib_n-\Ab)^T\otimes \Ib_m)\zetab_{l} + \Cb_{l} \mathcal{Z}_{\rm H}^\prime + \Db_{l}\gammab_{l}\in R_l\}
\ea\]
for $l=0,1,\ldots,L-1$, with
  \[
  \mathcal{H}_{\rm H}^\prime = \diag\begin{pmatrix}\frac{\Hb_1^\prime{\Hb_1^\prime}^{\top}}{{\Hb_1^\prime}^{\top}\Hb_1^\prime}&\ldots&\frac{{\Hb_n^\prime}{\Hb_n^\prime}^{\top}}{{\Hb_n^\prime}^{\top}{\Hb_n^\prime}}\end{pmatrix} \,,\quad
\mathcal{Z}_H^\prime = \begin{pmatrix}
  \frac{z_1^\prime{\Hb_1^\prime}^{\top}}{{\Hb_1^\prime}^{\top}{\Hb_1^\prime}} &
  \ldots &
  \frac{z_n^\prime{\Hb_n^\prime}^{\top}}{{\Hb_n^\prime}^{\top}{\Hb_n^\prime}}
\end{pmatrix}^{\top}\,.
  \]

Thus, recalling the fact that  the Average-Consensus procedure in Algorithm 4 is deterministic, we can see that for all $R\subseteq\mbox{range}(\mathscr{M}_{\rm eq})$, there always exist $R_l\subseteq\mathbb{R}^{mn}$, $l=0,1,\ldots,L-1$ such that there holds
\[\ba{l}
\mathbb{P}\big(\mathscr{M}_{\rm eq}\big(\mathfrak{E} \,|\, \mathcal{E}_{\rm p}^\ast\big)\in R\big) = \mathbb{P}\big(\bigcap_{l=0}^{L-1}{\mathcal{F}_l^{\rm ea}}\big) = \mathbb{P}\big({\mathcal{F}_0^{\rm ea}}\big)\cdot\prod_{l=1}^{L-1}\mathbb{P}\big({\mathcal{F}_l^{\rm ea}}|\,\bigcap_{k=0}^{l-1}{\mathcal{F}_k^{\rm ea}}\big)\\
\ea\]	
With this in mind, we observe that
\[\ba{rcl}
&&\|\big(\Cb_{l}(\Ib_{mn} - \mathcal{H}_{\rm H})((\Ib_n-\Ab)^T\otimes \Ib_m)\zetab_{l} + \Cb_{l} \mathcal{Z}_{\rm H}\big) - \big(\Cb_{l}(\Ib_{mn} - \mathcal{H}_{\rm H}^\prime)((\Ib_n-\Ab)^T\otimes \Ib_m)\zetab_{l} + \Cb_{l} \mathcal{Z}_{\rm H}^\prime\big)\|\,\\
&\leq&\|\Cb_{l}(\mathcal{H}_{\rm H}^\prime - \mathcal{H}_{\rm H})((\Ib_n-\Ab)^T\otimes \Ib_m)\zetab_{l}\| + \|\Cb_{l} (\mathcal{Z}_{\rm H} - \mathcal{Z}_{\rm H}^\prime)\big)\|\,\\
&\leq&\mu\|((\Ib_n-\Ab)^T\otimes \Ib_m)\zetab_{l}\| + \mu\sqrt{n}\,\\
\ea\]
where the second inequality is obtained by using the facts that $\|\Cb_{l}\|_1=1$ and $\mathfrak{E}$ and $\mathfrak{E}^\prime$ are $\mu$-adjacent.
Thus, for $l=0$, we have
\[\ba{rcl}
&&\|\big(\Cb_{0}(\Ib_{mn} - \mathcal{H}_{\rm H})((\Ib_n-\Ab)^T\otimes \Ib_m)\zetab_{0} + \Cb_{0} \mathcal{Z}_{\rm H}\big) - \big(\Cb_{0}(\Ib_{mn} - \mathcal{H}_{\rm H}^\prime)((\Ib_n-\Ab)^T\otimes \Ib_m)\zetab_{0} + \Cb_{0} \mathcal{Z}_{\rm H}^\prime\big)\|\,\\
&\leq& \mu\sqrt{n}(\|\zeta_0\|+1)\,\\
&\leq& \mu(\sqrt{\nu}+\phi^\ast+\sqrt{n})\,,
\ea\]
and for $l=1,\ldots,L-1$, we have
\[\ba{rcl}
&&\|\big(\Cb_{l}(\Ib_{mn} - \mathcal{H}_{\rm H})((\Ib_n-\Ab)^T\otimes \Ib_m)\zetab_{l} + \Cb_{l} \mathcal{Z}_{\rm H}\big) - \big(\Cb_{l}(\Ib_{mn} - \mathcal{H}_{\rm H}^\prime)((\Ib_n-\Ab)^T\otimes \Ib_m)\zetab_{l} + \Cb_{l} \mathcal{Z}_{\rm H}^\prime\big)\|\,\\
&\leq& \mu\|\Delta_{l-1}(\phib_{l-1}) + \frac{1}{n}\big(\1_n\1_n^\top\otimes \Ib_m\big)\phib_{l-1}\| + \mu\sqrt{n}\,\\
&\leq& \mu(\sqrt{\nu}+\phi^\ast+\sqrt{n})
\ea\]
where the first inequality is obtained by using $\Delta_{l-1}(\phib_{l-1}) + \frac{1}{n}\big(\1_n\1_n^\top\otimes \Ib_m\big)\phib_{l-1} = ((\Ib_n-\Ab)^T\otimes \Ib_m)\zetab_{l}$, and the second is obtained by using the conditioned event $\bigcap_{l=0}^{L-1}\big(\mcalQ_S^{l} \bigcap \|\Delta_{l}(\phib_{l})) \|^2 \leq {\nu}\big)$ and Lemma \ref{lemma-E-nu}.

As in the arguments in Appendix \ref{app:sec-ppsc}, we can obtain
\[\ba{rcl}
&&\mathbb{P}(\mathscr{M}_{\rm eq}\big(\mathfrak{E} \,|\, \mathcal{E}_{\rm p}^\ast\big)\in R\big)= \mathbb{P}\big({\mathcal{F}_0^{\rm ea}}\big)\cdot\prod_{l=1}^{L-1}\mathbb{P}\big({\mathcal{F}_l^{\rm ea}}|\,\bigcap_{k=0}^{l-1}{\mathcal{F}_k^{\rm ea}}\big)\\
&\leq& \bigg(\exp({\epsilon\over L})\cdot\mathbb{P}\big({\mathcal{F}_0^{\rm ea}}^\prime\big) + \mathbb{P}\big(Z_0\geq \frac{\epsilon\sigma_{\gammab}\lambda_{\rm ppsc}}{L\mu(\sqrt{\nu}+\phi^\ast+\sqrt{n})}- \frac{\mu(\sqrt{\nu}+\phi^\ast+\sqrt{n})}{2\sigma_{\gammab}\lambda_{\rm ppsc}} \big)\bigg)\cdot\,\\
&&\cdot\prod_{l=1}^{L-1}\bigg(\exp({\epsilon\over L})\cdot\mathbb{P}\big({\mathcal{F}_l^{\rm ea}}^\prime|\,\bigcap_{k=1}^{l-1}{\mathcal{F}_k^{\rm ea}}^\prime\big) + \mathbb{P}\big(Z_l\geq \frac{\epsilon\sigma_{\gammab}\lambda_{\rm ppsc}}{L\mu(\sqrt{\nu}+\phi^\ast+\sqrt{n})}- \frac{\mu(\sqrt{\nu}+\phi^\ast+\sqrt{n})}{2\sigma_{\gammab}\lambda_{\rm ppsc}} \big)\bigg)\,\\
&\leq& \bigg(\exp({\epsilon\over L})\cdot\mathbb{P}\big({\mathcal{F}_0^{\rm ea}}^\prime\big)  + \delta_\ast\bigg)\cdot \prod_{l=1}^{L-1}\bigg(\exp({\epsilon\over L})\cdot\mathbb{P}\big({\mathcal{F}_l^{\rm ea}}^\prime|\,\bigcap_{k=1}^{l-1}{\mathcal{F}_k^{\rm ea}}^\prime\big)  + \delta_\ast\bigg)\,\\
&\leq& \exp(\epsilon)\mathbb{P}\big(\mathscr{M}_{\rm eq}\big(\mathfrak{E}^\prime \,|\, \mathcal{E}_{\rm p}^\ast\big)\in R\big) + \big(\exp\big(\frac{\epsilon}{L}\big)+\delta_\ast)^{L} - e^{\epsilon}
\ea\]
with $Z_l\sim \mathcal{N}(0,1)$, and
\[
\delta_\ast = \mathcal{Q}\bigg(\frac{\epsilon\sigma_{\gammab}\lambda_{\rm ppsc}}{L\mu(\sqrt{\nu}+\phi^\ast+\sqrt{n})}- \frac{\mu(\sqrt{\nu}+\phi^\ast+\sqrt{n})}{2\sigma_{\gammab}\lambda_{\rm ppsc}}\bigg)\,.
\]
With $\delta_{\sharp}=(\delta+e^{\epsilon})^{1\over L}-\exp(\frac{\epsilon}{L})$ and $\sigma_\gammab\geq\sigma_\gammab^\ast := {\mu\kappa(\frac{\epsilon}{L},\delta_{\sharp})(\sqrt{\nu}+\phi^\ast+\sqrt{n})}/ {\lambda_{\rm ppsc}}$,
we have
\[
\kappa(\frac{\epsilon}{L},\delta_{\sharp}) \leq \frac{\lambda_{\rm ppsc} \sigma_{\gammab}}{\mu(\sqrt{\nu}+\phi^\ast+\sqrt{n})}
\]
yielding $\delta_{\sharp} \geq \delta_{\ast}$.
This indicates $(\delta+e^{\epsilon})^{1\over L}\geq \delta_{\ast}+\exp(\frac{\epsilon}{L})$, leading to
\[
\delta \geq \big(\exp\big(\frac{\epsilon}{L}\big)+\delta_\ast\big)^{L} - e^{\epsilon}\,.
\]
Therefore,  there holds
\[
\mathbb{P}\big(\mathscr{M}_{\rm eq}\big(\mathfrak{E} \,|\, \mathcal{E}_{\rm p}^\ast\big)\in R\big)
\leq \exp(\epsilon)\mathbb{P}\big(\mathscr{M}_{\rm eq}\big(\mathfrak{E}^\prime \,|\, \mathcal{E}_{\rm p}^\ast\big)\in R\big) + \delta\,,
\]
which completes this part of proof.

%%%
\subsection{Proof of statement (ii)}

We observe that each $\Delta_{l}(\phib_{l})$ defined in (\ref{eq:Delta_l}), $l=0,1,\ldots,L-1$ can be regarded as the computation error of the PPSC-Gossip-AC algorithm with initial state $\phib_{l}$. Thus, by Lemma \ref{lemma-B}, we have
\[\ba{rcl}
\mathbb{E}\|\Delta_{l}(\phib_{l})\|^2 &\leq& (n\mathbb{E}\|\phib_{l}\|^2 + 2q^2S^2\sigma_{\gammab}^2) (1-\lambda_{\rm G})^{2T}\,\\
&\leq& \big(2n\mathbb{E}\|\phib_{l}- \1_n\otimes\yb^\ast\|^2 + 2n\|\yb^\ast\|^2+2q^2S^2\sigma_{\gammab}^2\big) (1-\lambda_{\rm G})^{2T}\,.
\ea\]
where the last inequality is obtained by using the H$\ddot{\textnormal{o}}$lder inequality.

In view of this, with (\ref{eq:phib-y}) and using the H$\ddot{\textnormal{o}}$lder inequality again, we thus have
\[\ba{l}
\mathbb{E}\|\phib_{l+1} - \1_n\otimes\yb^\ast\|^2
\leq \lambda_H^2\mathbb{E}\|\phib_{l} - \1_n\otimes\yb^\ast\|^2 + \mathbb{E}\|\Delta_{l}(\phib_{l})\|^2  + 2\sqrt{\lambda_H^2\mathbb{E}\|\phib_{l} - \1_n\otimes\yb^\ast\|^2\mathbb{E}\|\Delta_{l}(\phib_{l})\|^2} \\ \qquad
\leq (\lambda_H^2+2n(1-\lambda_{\rm G})^{2T})\mathbb{E}\|\phib_{l-1} - \1_n\otimes\yb^\ast\|^2 + \big[2n\|\yb^\ast\|^2+2q^2S^2\sigma_{\gammab}^2\big] (1-\lambda_{\rm G})^{2T}
\,\\ \qquad + \bigg[3n\mathbb{E}\|\phib_{l}-\1_n\otimes\yb^\ast\|^2  + 2n\|\yb^\ast\|^2  +2q^2S^2\sigma_{\gammab}^2\bigg] \cdot (1-\lambda_{\rm G})^T\,\\
\leq \big(\lambda_H^2+5n(1-\lambda_{\rm G})^{T}\big)\mathbb{E}\|\phib_{l} - \1_n\otimes\yb^\ast\|^2  + 4\big(n\|\yb^\ast\|^2 +q^2S^2\sigma_{\gammab}^2\big) (1-\lambda_{\rm G})^{T}\,.\\
\ea\]
This yields
\[\ba{l}
\mathbb{E}\|\phib_L - \1\otimes\yb^\ast\|^2 \leq \|\phib_0-\1_n\otimes\yb^\ast\|^2 (\lambda_H^2+5n(1-\lambda_{\rm G})^{T})^{L}  + \frac{4( n\|\yb^\ast\|^2+S^2\sigma_{\gammab}^2)(1-\lambda_{\rm G})^{T}}{1-\lambda_H^2-5n(1-\lambda_{\rm G})^{T}}\,
\ea\]
with $\|\phib_0-\1_n\otimes\yb^\ast\|^2\leq n\|\zeta_0-\yb^\ast\|^2$.
With the selections of $L$ and $T$ in Theorem \ref{theorem-Equ}, we have
\[\ba{l}
\lambda_H^2+5n(1-\lambda_{\rm G})^{T} \leq \frac{\varepsilon_0+\lambda_H^2}{1+\varepsilon_0}\,\\
\|\phib_0-\1_n\otimes\yb^\ast\|^2 (\lambda_H^2+5n(1-\lambda_{\rm G})^{T})^{L} \leq \frac{\nu}{2}\,\\
\frac{4( n\|\yb^\ast\|^2+S^2\sigma_{\gammab}^2)(1-\lambda_{\rm G})^{T}}{1-\lambda_H^2-5n(1-\lambda_{\rm G})^{T}} \leq \frac{\nu}{2}\,.
\ea\]
This in turn completes the proof.

%%%%%%%%%%%%%%%%%%%%%%%%%%%%%%%%%%%%%%%%%%%%%%%%%%%%%
%%%%%%%%%%%%%%%%%%%%%%%%%%%%%%%%%%%%%%%%%%%%%%%%%%%%%%%
\section{Proof of Theorem \ref{theorem-Op}}
\label{app:sec-opt}
%%
%%%
\subsection{Preliminaries}

Following the terminologies in Appendix \ref{app:equ}, we let $\zetab_{0}= \1_n\otimes \zeta_0$ and $\zetab_l\in\R^{nm}$ denote the netowrk state vector after the $l$ PPSC-Gossip procedure, and $\phib_0=\xb_0$ and $\phib_l$ denote the network state vector after the $l$-th procedure of projected subgradient descent. Denote
\[
  \mathpzc{P}_{\mathtt{C}^n}(x_1,\ldots,x_n)= \begin{pmatrix}
                                       \mathpzc{P}_{\mathtt{C}}(x_1) \\
                                       \vdots \\
                                       \mathpzc{P}_{\mathtt{C}}(x_n)
                                     \end{pmatrix}
                                      \,,\quad \mbox{for $x_k\in\R^m$}\,
\]
  and
  \[
  \nabla \Fb(x_1,\ldots,x_n) = \begin{pmatrix}
                               \nabla f_1(x_1) \\
                               \vdots \\
                               \nabla f_n(x_n)
                             \end{pmatrix}
                             = \begin{pmatrix}
                               \frac{\partial f_1}{\partial x_1}(x_1) \\
                               \vdots \\
                               \frac{\partial f_n}{\partial x_n} f_n(x_n)
                             \end{pmatrix}
                             \,,\quad \mbox{for $x_k\in\R^m$}\,.
  \]
According to Algorithm 5, we have
\beeq{\label{eq:AC-zeta}\ba{rl}
  \zetab_{l+1} &= \mathsf{C}_{\mathcal{E}_{{\rm p},l}} \phib_{l} + \mathsf{D}_{\mathcal{E}_{{\rm p},l}} \gammab_{l}\,,\quad \mbox{for $l=0,1,\ldots,L-1$}
  \ea
}
and
\beeq{\label{eq:AC-phi}\ba{rcl}
  \phib_{l+1} &=& \mathpzc{P}_{\mathtt{C}^n}\left(\big((\Ib_n-\Ab)^T\otimes \Ib_m\big)\zetab_{l+1}-\alpha_{l+1}\nabla \Fb\big(\big((\Ib_n-\Ab)^T\otimes \Ib_m\big)\zetab_{l+1}\big)\right)\,\,,\quad \mbox{for $l=0,1,\ldots,L-1$} \\
%  &=& \mathpzc{P}_{X^n}\bigg((\Ab^T\otimes \Ib)(\mathsf{C}_{\mathcal{E}_{{\rm p},l}} \phib_{l} + \mathsf{D}_{\mathcal{E}_{{\rm p},l}}\Gamma_{l})-\alpha_{l}\nabla \Fb\big((\Ab^T\otimes \Ib)  (\mathsf{C}_{\mathcal{E}_{{\rm p},l}} \phib_{l} + \mathsf{D}_{\mathcal{E}_{{\rm p},l}}\Gamma_{l})\big)\bigg)
  \ea
  }

\subsection{Proof of statement (i)}

We define
\[\ba{rcl}
\psib_l(\phib_{l}) &=& \mathpzc{P}_{\mathtt{C}^n}\left(\big((\Ib_n-\Ab)^T\otimes \Ib_m\big)\zetab_{l+1}-\alpha_{l+1}\nabla \Fb\big(\big((\Ib_n-\Ab)^T\otimes \Ib_m\big)\zetab_{l+1}\big)\right) \\ && - \big((\Ib_n-\Ab)^T\otimes \Ib_m\big)\zetab_{l+1}+\alpha_{l+1}\nabla \Fb\big(\big((\Ib_n-\Ab)^T\otimes \Ib_m\big)\zetab_{l+1}\big)\,\\
\Delta_{l}(\phib_{l}) &=& ((\Ib_n-\Ab)^T\otimes \Ib_m)[\mathsf{C}_{\mathcal{E}_{{\rm p},l}} \phib_{l} + \mathsf{D}_{\mathcal{E}_{{\rm p},l}}\Gamma_{l}] - \frac{1}{n}[(\1_n\1_n^\top)\otimes \Ib_m] \phib_{l}\,.
\ea\]
This then rewrites (\ref{eq:AC-phi})  as
\[\ba{l}
\phib_{l+1}
= \frac{1}{n}(\1_n\1_n^\top\otimes \Ib_m) \phib_{l}  + \Delta_{l}(\phib_{l})  -\alpha_{l+1}\nabla \Fb\big(\frac{1}{n}(\1_n\1_n^\top\otimes \Ib_m) \phib_{l} + \Delta_{l}(\phib_{l})\big)  +\psib_l(\phib_{l})\,.
\ea
\]

To complete this part of proof,  we present the following lemma.
\begin{lemma}\label{lemma-F}
Suppose $\|\Delta_{l}(\phib_{l})\| \leq \sqrt{\nu}\alpha_{l+1}^2$ for all $l\geq 0$. Then there exists an $L^\dag$, independent of $S$, $T$ and $\sigma_\gammab$, and a $\yb^\dag\in \mathtt{C}^\dag$ such that $\|\phib_L - \1_n\otimes\yb^\dag\|^2\leq \nu$ holds for all $L\geq L^\dag$.
\end{lemma}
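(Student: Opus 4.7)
The plan is to decompose $\phib_l$ into a network-average part and a disagreement part, analyze each separately, and then reassemble. Introduce the averaging projector $\mathcal{J}:=\frac{1}{n}(\mathbf{1}_n\mathbf{1}_n^\top\otimes\Ib_m)$, the consensus block $\bar{\phib}_l:=\mathcal{J}\phib_l=\mathbf{1}_n\otimes\bar{x}_l$ with $\bar{x}_l:=\frac{1}{n}\sum_{i=1}^n\phib_{l,i}\in\R^m$, and the disagreement $\tilde{\phib}_l:=(\Ib_{mn}-\mathcal{J})\phib_l$. Since $\phib_l$ is the output of $\mathpzc{P}_{\mathtt{C}^n}$ for $l\ge 1$, every block $\phib_{l,i}$ lies in $\mathtt{C}$, hence $\|\phib_l\|\le\sqrt{n}\,\phi^\dag$ and $\bar{\phib}_l\in\mathtt{C}^n$. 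Combining the nonexpansiveness of $\mathpzc{P}_{\mathtt{C}^n}$ with $\bar{\phib}_l\in\mathtt{C}^n$ yields $\|\psib_l(\phib_l)\|\le\|\Delta_l(\phib_l)\|+\alpha_{l+1}\sqrt{n}\,g^\dag$. Applying $\Ib_{mn}-\mathcal{J}$ to the compact recursion
\[
\phib_{l+1}=\bar{\phib}_l+\Delta_l(\phib_l)-\alpha_{l+1}\nabla\Fb\bigl(\bar{\phib}_l+\Delta_l(\phib_l)\bigr)+\psib_l(\phib_l)
\]
and using $(\Ib_{mn}-\mathcal{J})\bar{\phib}_l=0$ then delivers the disagreement bound $\|\tilde{\phib}_{l+1}\|\le 2\sqrt{\nu}\,\alpha_{l+1}^2+2\sqrt{n}\,g^\dag\alpha_{l+1}=O(\alpha_{l+1})$ under the hypothesis $\|\Delta_l\|\le\sqrt{\nu}\alpha_{l+1}^2$.

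For the consensus part I would use a per-agent Lyapunov function. Fix any $\yb^\dag\in\mathtt{C}^\dag$ and write $y_{l,i}:=\bar{x}_l+\Delta_{l,i}-\alpha_{l+1}\nabla f_i(\bar{x}_l+\Delta_{l,i})$, so that $\phib_{l+1,i}=\mathpzc{P}_{\mathtt{C}}(y_{l,i})$. Nonexpansiveness of $\mathpzc{P}_{\mathtt{C}}$ and $\yb^\dag\in\mathtt{C}$ give $\|\phib_{l+1,i}-\yb^\dag\|^2\le\|y_{l,i}-\yb^\dag\|^2$. Expanding each square, summing over $i$, applying Jensen's inequality $n\|\bar{x}_{l+1}-\yb^\dag\|^2\le\sum_i\|\phib_{l+1,i}-\yb^\dag\|^2$, and invoking $\sum_i\nabla f_i=\nabla f$ together with the convexity estimate $\langle\nabla f(\bar{x}_l),\bar{x}_l-\yb^\dag\rangle\ge f(\bar{x}_l)-f^\star$, I obtain the clean drift
\[
\|\bar{x}_{l+1}-\yb^\dag\|^2\le\|\bar{x}_l-\yb^\dag\|^2-\tfrac{2\alpha_{l+1}}{n}\bigl(f(\bar{x}_l)-f^\star\bigr)+R_l,
\]
where $R_l$ collects three error contributions: the squared gradient-step term of order $\alpha_{l+1}^2$, the cross term $O(\alpha_{l+1}\|\Delta_l\|)=O(\alpha_{l+1}^3)$, and the gradient-mismatch term $O(\alpha_{l+1}\,\omega(\|\Delta_l\|))$ where $\omega$ is the uniform modulus of continuity of $\nabla f_i$ over the compact set $\Omega_\nu$.

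With $\alpha_l=1/(l+1)$ we have $\sum_l\alpha_l=\infty$ while each piece of $R_l$ is summable (by boundedness of the iterates, the decay $\|\Delta_l\|\le\sqrt{\nu}\alpha_{l+1}^2$, and $\omega(\|\Delta_l\|)\to 0$), so a Robbins--Siegmund-type telescoping forces $\sum_l\alpha_{l+1}(f(\bar{x}_l)-f^\star)<\infty$ and convergence of $\|\bar{x}_l-\yb^\dag\|^2$. Combined with $\sum_l\alpha_l=\infty$ this gives $\liminf_l f(\bar{x}_l)=f^\star$, and a standard limit-point argument on the compact set $\mathtt{C}$ upgrades this to $\bar{x}_l\to\yb^\dag$ for some $\yb^\dag\in\mathtt{C}^\dag$. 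Together with $\|\tilde{\phib}_l\|=O(\alpha_l)\to 0$ this yields $\|\phib_L-\mathbf{1}_n\otimes\yb^\dag\|^2\to 0$, and $L^\dag$ is taken as the first index at which this quantity does not exceed $\nu$. Crucially, every constant in $R_l$ and in the disagreement bound depends only on $\nu$, $\phi^\dag$, $g^\dag$, the dimensions $n,m$, and structural graph constants such as $\lambda_{\rm G}$; the parameters $(S,T,\sigma_{\gammab})$ enter the proof only through the hypothesized bound $\|\Delta_l\|\le\sqrt{\nu}\alpha_{l+1}^2$ that has already absorbed them, establishing the asserted independence of $L^\dag$ from $(S,T,\sigma_{\gammab})$.

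The main obstacle lies in the gradient-mismatch term: since each $f_i$ is only assumed continuously differentiable rather than Lipschitz-smooth, the naive linear bound $\|\nabla f_i(\bar{x}_l+\Delta_{l,i})-\nabla f_i(\bar{x}_l)\|=O(\|\Delta_{l,i}\|)$ is not available. The remedy is to exploit the uniform continuity of each $\nabla f_i$ on the compact set $\Omega_\nu$ together with the rapid decay $\|\Delta_l\|\le\sqrt{\nu}\alpha_{l+1}^2$, so that $\omega(\|\Delta_l\|)$ vanishes fast enough for the contribution $\alpha_{l+1}\,\omega(\|\Delta_l\|)$ to be absorbed into a summable $R_l$. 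A secondary subtlety is carrying the dependence of every constant through the Lyapunov analysis and verifying that none of them secretly depend on $S$, $T$, or $\sigma_{\gammab}$; this bookkeeping is what ultimately guarantees that the required $L^\dag$ is structural rather than noise-level-dependent.
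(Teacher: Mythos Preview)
Your overall architecture---track the network average $\bar x_l=\frac{1}{n}(\1_n^\top\otimes\Ib_m)\phib_l$, derive a Lyapunov drift for $\|\bar x_l-\yb^\ddag\|^2$ with $\yb^\ddag\in\mathtt{C}^\dag$, apply a Robbins--Siegmund/telescoping argument to extract $\liminf f(\bar x_l)=f^\star$ and convergence of $\|\bar x_l-\yb^\ddag\|$, then upgrade via a limit-point argument and bolt on the $O(\alpha_{l+1})$ disagreement bound---is exactly the paper's route. The paper packages the projection step via Lemma~1(b) of \cite{nedic2010constrained} (which also yields the extra $-\|\psib_l\|^2$ term), but your use of nonexpansiveness and Jensen is an acceptable substitute.

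There is, however, a genuine gap in the one place you flag as delicate. You apply the convexity inequality at the \emph{unperturbed} point $\bar x_l$, which forces you to control the gradient mismatch $\sum_i\langle\nabla f_i(\bar x_l+\Delta_{l,i})-\nabla f_i(\bar x_l),\,\bar x_l-\yb^\dag\rangle$, and you bound this by $O\!\big(\alpha_{l+1}\,\omega(\|\Delta_l\|)\big)$. Your claim that this is summable ``by\ldots $\omega(\|\Delta_l\|)\to 0$'' does not follow: with $\alpha_{l+1}=1/(l+2)$ and only continuity of $\nabla f_i$, a modulus such as $\omega(t)\sim 1/\log(1/t)$ gives $\alpha_{l+1}\omega(\|\Delta_l\|)\sim 1/(l\log l)$, which diverges. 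Without summability of $R_l$ you lose the convergence of $\|\bar x_l-\yb^\ddag\|^2$ for each fixed $\yb^\ddag$, and the limit-point upgrade from $\liminf$ to $\lim$ breaks.

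The paper sidesteps this entirely by applying convexity at the \emph{perturbed} point: from $\langle\nabla f_i(\bar x_l+\Delta_{l,i}),\bar x_l+\Delta_{l,i}-\yb^\ddag\rangle\ge f_i(\bar x_l+\Delta_{l,i})-f_i(\yb^\ddag)$ one obtains the drift term $-\frac{2\alpha_{l+1}}{n}\sum_i\big[f_i(\bar x_l+\Delta_{l,i})-f_i(\yb^\ddag)\big]$, and then passes to $f_i(\bar x_l)$ using only the \emph{gradient bound} $L_\nu$ on $\Omega_\nu$ (mean value / convexity gives $|f_i(\bar x_l+\Delta_{l,i})-f_i(\bar x_l)|\le L_\nu\|\Delta_{l,i}\|$). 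The resulting residual is $O(\alpha_{l+1}\|\Delta_l\|)=O(\alpha_{l+1}^3)$, which is genuinely summable, and no modulus of continuity is needed. Replacing your gradient-mismatch step by this function-value step closes the gap and leaves the rest of your argument intact.
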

%%%
\begin{proof}
We denote $\theta_l=\frac{1}{n}(\1_n^\top\otimes \Ib_m)\phib_{l}- \yb^\ddag$ for any $\yb^\ddag\in \mathtt{C}^\dag$, which yields
\[
\theta_{l+1} = \theta_l + \frac{1}{n}(\1_n^\top\otimes\Ib_m)\Delta_l - \frac{\alpha_{l+1}}{n}(\1_n^\top\otimes\Ib_m)\nabla \Fb\big(\1_n\otimes(\theta_l+\yb^\ddag) + \Delta_l \big) + \frac{1}{n}(\1_n^\top\otimes \Ib_m)\psib_l(\phib_{l})\,.
\]
According to Lemma 1.(b) in \cite{nedic2010constrained}, this further implies
\[\ba{rcl}
\|\theta_{l+1}\|^2
&\leq& \left\|\theta_l + \frac{1}{n}(\1_n^\top\otimes\Ib_m)\Delta_l - \frac{\alpha_{l+1}}{n}(\1_n^\top\otimes\Ib_m)\nabla \Fb\big(\1_n\otimes(\theta_l+\yb^\ddag) + \Delta_l \big)\right\|^2  - \frac{1}{n}\|\psib_l(\phib_{l})\|^2\,\\
&\leq&  \|\theta_{l}\|^2 + \frac{1}{n}\|\Delta_{l}\|^2 + \frac{2}{n}\| \Delta_{l}\| \|\theta_{l}\| +\frac{\alpha_{l+1}^2}{n}\left\|\nabla \Fb\big(\1_n\otimes(\theta_l+\yb^\ddag) + \Delta_l \big)\right\|^2 \,\\ && - \frac{2\alpha_{l+1}}{n}\1_n^\top\left[\Fb\big(\1_n\otimes(\theta_l+\yb^\dag) + \Delta_l\big) - \Fb(\1_n\otimes\yb^\ddag)\right] - \frac{1}{n} \|\psib_l(\phib_{l})\|^2\,.
\ea
\]
If $\|\Delta_{l}(\phib_{l})\|\leq \sqrt{\nu}\alpha_{l+1}^2$ for all $l=1,2,\ldots,L$, we then have
\beeq{\label{eq:theta_l+1}\ba{rcl}
\|\theta_{l+1}\|^2
&\leq&   \|\theta_{l}\|^2 + \frac{\nu}{n}\alpha_{l+1}^4 + \frac{2\sqrt{\nu}}{n}\alpha_{l+1}^2 \|\theta_{l}\| +\frac{\alpha_{l+1}^2}{n}\left\|\nabla \Fb\big(\1_n\otimes(\theta_l+\yb^\ddag) + \Delta_l \big)\right\|^2 \,\\ && + \sqrt{\nu}\alpha_{l+1}^3 \left\|\nabla \Fb\big(\1_n\otimes(\theta_l+\yb^\ddag) + \Delta_l \big)\right\| \\ && - \frac{2\alpha_{l+1}}{n}\1_n^\top\left[\Fb\big(\1_n\otimes(\theta_l+\yb^\ddag)\big) - \Fb(\1_n\otimes\yb^\ddag)\right] - \frac{1}{n} \|\psib_l(\phib_{l})\|^2\,
\ea
}
for each $\yb^\ddag\in \mathtt{C}^\dag$.
Note that each $\nabla f_k$ is bounded over the set $\Omega_\nu$, i.e., there exists a $L_{\nu}>0$ such that $\|\nabla f_k(\yb)\|\leq L_\nu$, for all $\yb\in\Omega_{\nu}$ and $k\in{\rm V}$.
Since $\theta_l+\yb^\ddag=\frac{1}{n}(\1_n^\top\otimes \Ib_m)\phib_{l}\in \mathtt{C}\subset\Omega_\nu$, we have
\beeq{\label{eq:appE-1}\ba{l}
\left\|\nabla \Fb\big(\1_n\otimes(\theta_l+\yb^\ddag)  \big)\right\|^2 \leq n L_{\nu}^2\,\\
\eb_i^\top\left[\Fb\big(\1_n\otimes(\theta_l+\yb^\ddag) + \Delta_l\big) - \Fb(\1_n\otimes\yb^\ddag)\right] \geq 0\,,\quad \mbox{for all $i=1,\ldots,n$}
\ea}
which yields
\begin{equation}\label{eq:33}
\ba{rcl}
\|\theta_{l+1}\|^2
&\leq&   \|\theta_{l}\|^2 + \frac{\nu}{n}\alpha_{l+1}^4 + \frac{2\sqrt{\nu}}{n}\alpha_{l+1}^2 \|\theta_{l}\| +\alpha_{l+1}^2L_{\nu}^2 + \sqrt{n\nu}\alpha_{l+1}^3 L_{\nu}   - \frac{1}{n} \|\psib_l(\phib_{l})\|^2\,
\ea
\end{equation}
for each $\yb^\ddag\in \mathtt{C}^\dag$.
By dropping the last negative term and using the facts that $\|\theta_{l-1}\|$ is bounded by $\theta_l+\yb^\ddag=\frac{1}{n}\sum_{k=1}^n \phib_{l,k}\in \mathtt{C}$ and $\lim_{l\rightarrow\infty}\sum_{i=1}^l\alpha_i^2<\infty$, we can conclude that the sequence $\{\|\theta_l\|\}$, i.e., $\{\|\frac{1}{n}(\1_n^\top\otimes \Ib_m)\phib_{l}- \yb^\ddag\|\}$ is convergent for each $\yb^\ddag\in \mathtt{C}^\dag$. Since $\frac{1}{n}(\1_n^\top\otimes \Ib_m)\phib_{l}$ is bounded, it must have a limit point $\yb^\dag$. On the other hand, by (\ref{eq:theta_l+1}), we can obtain
\[\ba{l}
\|\theta_{l+1}\|^2 + \sum\limits_{i=0}^l\frac{2\alpha_{i+1}}{n}\1_n^\top\left[\Fb\big(\frac{1}{n}(\1_n\1_n^\top\otimes \Ib_m)\phib_{i} \big) - \Fb(\1_n\otimes\yb^\dag)\right] \\ \qquad \leq
 \|\theta_{0}\|^2   + \sum\limits_{i=0}^l\big(\frac{\nu}{n}\alpha_{i+1}^4 + \frac{2\sqrt{\nu}}{n}\alpha_{i+1}^2 \|\theta_{i}\| +\alpha_{i+1}^2L_{\nu}^2 + \sqrt{n\nu}\alpha_{i+1}^3 L_{\nu}\big)\,,
\ea\]
the right side of which is bounded  by $\lim_{l\rightarrow\infty}\sum_{i=1}^l\alpha_{i}^2<\infty$ and the fact that $\|\theta_{i}\|$ is bounded. This in turn implies
\[
\lim_{l\rightarrow\infty}\sum\limits_{i=0}^l\frac{2\alpha_{i+1}}{n}\1_n^\top\left[\Fb\big(\frac{1}{n}(\1_n\1_n^\top\otimes \Ib_m)\phib_{i} \big) - \Fb(\1_n\otimes\yb^\dag)\right]<\infty\,.
\]
By $\lim_{l\rightarrow\infty}\sum_{i=1}^l \alpha_l=\infty$ and the latter of (\ref{eq:appE-1}), we have
\[
\lim_{l\rightarrow\infty}\Fb\big(\frac{1}{n}(\1_n\1_n^\top\otimes \Ib_m)\phib_{l} \big) - \Fb(\1_n\otimes\yb^\dag)=\Fb\big(\1_n\otimes\yb^\dag \big) - \Fb(\1_n\otimes\yb^\ddag) = 0\,,
\]
which yields $\yb^\dag\in \mathtt{C}^\dag$. Therefore, by fixing $\yb^\ddag$ as $\yb^\dag$ in the definition of $\theta_l$, we have $\lim\limits_{l\rightarrow\infty}\theta_l = 0$.

Bearing in mind the previous analysis, we then observe that
\[\ba{l}
\phib_{l+1} - \1_n\otimes \yb^\dag
= \1_n\otimes \theta_{l}  + \Delta_{l}(\phib_{l})  -\alpha_{l+1}\nabla \Fb\big(\1_n\otimes(\theta_l+\yb^\dag) + \Delta_{l}(\phib_{l})\big)  +\psib_l(\phib_{l})
\ea
\]
which, if $\|\Delta_{l}(\phib_{l})\| \leq \sqrt{\nu}\alpha_{l+1}^2$ for all $l=0,1,\ldots,L-1$, leads to
\[
\|\phib_{l+1} - \1_n\otimes\yb^\dag\| \leq \sqrt{n}\|\theta_{l}\| + \sqrt{\nu}\alpha_{l+1}^2 + \alpha_{l+1}nL_{\nu} + \|\psib_l(\phib_{l})\| \,.
\]
Regarding the bound of the last term in the above inequality, we observe that
\[\ba{rcl}
\psib_{l}(\phib_{l}) &=&  \mathpzc{P}_{\mathtt{C}^n}\left(\1_n\otimes(\theta_l+\yb^\dag) + \Delta_{l}(\phib_{l})-\alpha_{l+1}\nabla \Fb\big(\1_n\otimes(\theta_l+\yb^\dag) + \Delta_{l}(\phib_{l})\big)\right) \\ && -\left(\1_n\otimes(\theta_l+\yb^\dag) + \Delta_{l}(\phib_{l})-\alpha_{l+1}\nabla \Fb\big(\1_n\otimes(\theta_l+\yb^\dag) + \Delta_{l}(\phib_{l})\big)\right)
\ea\]
which immediately follows that there exist $c_1,c_2>0$ such that
$\|\psib_{l}(\phib_{l})\| \leq c_1\|\theta_{l}\| +  c_2\alpha_{l+1}$.
Thus, as $\lim_{l\rightarrow\infty}\|\theta_l\| = 0$ and $\lim_{l\rightarrow\infty}\alpha_l = 0$, we have $\lim_{l\rightarrow\infty}\|\phib_l - \1_n\otimes\yb^\dag\| = 0$, which completes the proof.
\end{proof}

By Lemma \ref{lemma-B}, we have $\mathbb{E}\left(\|\Delta_{l}(\phib_{l})\|^2\right) \geq (n\|\phib_{l}\|^2 + 2q^2S^2\sigma_{\gammab}^2)  (1-\lambda_{\rm G})^{2T}$ for all $l\geq0$.
Using Markov's inequality, this implies that for any $\nu>0$,
\[
\mathbb{P}\big(\|\Delta_{l}(\phib_{l})\|^2 \geq \nu\alpha_{l+1}^4\big) \leq \frac{\mathbb{E}\left(\|\Delta_{l}(\phib_{l})\|^2\right)}{\nu\alpha_{l+1}^4}\,
\]
yielding
\[
\mathbb{P}\left(\bigcap_{l=0}^{L-1}\|\Delta_{l}(\phib_{l})) \| \leq \sqrt{\nu} \alpha_{l+1}^2\right) \geq \prod_{l=0}^{L-1}\bigg(1- \frac{[ n\|\phib_{l}\|^2 + 2q^2S^2\sigma_{\gammab}^2](1-\lambda_{\rm G})^{2T}}{\nu\alpha_{l+1}^4} \bigg)\,.
\]
Thus,  with $\phib_l\in \mathtt{C}$ for $l\geq 0$ and $T\geq \frac{\log((1-p^{1\over L})\nu\alpha_L^4)-\log(n{\phi^\dag}^2 + 2q^2S^2\sigma_{\gammab}^2)}{2\log(1-\lambda_{\rm G})}$, we have
\beeq{\label{eq:appF-1}
\mathbb{P}\left(\bigcap_{l=0}^{L-1}\|\Delta_{l}(\phib_{l})) \| \leq \sqrt{\nu} \alpha_{l+1}^2\right) \geq p\,.
}
Therefore, the proof is completed by combining Lemma \ref{lemma-F} and (\ref{eq:appF-1}).

\subsection{Proof of statement (ii)}

By the arguments of Lemma 1, it immediately follows that the event $\mcalQ_S^l$ that all node states have altered during the time $s\in[l(S+T+1)+1,l(S+T+1)+S]$ occurs with probability larger than $\rho^{1\over L}$, i.e., $\mathbb{P}(\mcalQ_S^l)\geq \rho^{1\over L}$ and $\mathbb{P}\left(\bigcap_{l=0}^{L-1}\mcalQ_S^l\right) \geq \rho$.
With this in mind, we now proceed to analyze the differential privacy of PPSC-Gossip-DCO algorithm, conditioned on the event $\bigcap_{l=0}^{L-1}\mcalQ_S^l$.

As in Appendix \ref{app:equ}, we denote $\mathcal{E}_{\rm p}^\ast:=\big\{\mathcal{E}_{{\rm p},0}^\ast,\mathcal{E}_{{\rm p},1}^\ast,\ldots,\mathcal{E}_{{\rm p},L-1}^\ast\big\}$ as the sequence of communication edges that is observed by the eavesdroppers during $L$ recursions of the Multi-Gossiping PPSC mechanism, and ${\Cb}_{l}\in\R^{mn\times mn}$ and $\Db_{l}\in\R^{mn\times mS}$ be the resulting values of $\mathsf{C}_{\mathcal{E}_{{\rm p},l}^\ast}$ and $\mathsf{C}_{\mathcal{E}_{{\rm p},l}^\ast}$.

%In view of the previous analysis, we have
%\[\ba{l}
%\mathbb{P}\big(\mathscr{M}_{\rm op}(\mathcal{F} \,|\, \mathcal{E}_{\rm p}^\ast)\in R\big)\\ \overset{}{=} \mathbb{P}\bigg(\bigcap\limits_{l=1}^L U_{l}^\top\Cb_{l} \mathpzc{P}_{X^n}\left((\Ab^T\otimes \Ib)\zetab_{l-1}-\alpha_{l-1}\nabla \Fb\big((\Ab^T\otimes \Ib)\zetab_{l-1}\big)\right) \Sigma_{l}\Gamma_{l}\in R_i'\bigg)\,\\
%\overset{}{=}\prod\limits_{l=1}^L\frac{(2\pi)^{-\frac{S}{2}}}{\sigma_{\gammab}\det(\widehat\Sigma)^{\frac{1}{2}}}\int_{R_l'}\exp\bigg(-\frac{1}{2\sigma_{\gammab}^2}\bigg(u-U_{l}^\top\Cb_{l} \mathpzc{P}_{X^n}\big((\Ab^T\otimes \Ib)\zetab_{l-1}\,\\ \qquad \qquad \qquad -\alpha_{l-1}\nabla \Fb\big((\Ab^T\otimes \Ib)\zetab_{l-1}\big)\big)\bigg)^\top\widehat\Sigma_l^{-1} \\ \qquad \cdot \bigg(u-U_{l}^\top\Cb_{l} \mathpzc{P}_{X^n}\left((\Ab^T\otimes \Ib)\zetab_{l-1}-\alpha_{l-1}\nabla \Fb\big((\Ab^T\otimes \Ib)\zetab_{l-1}\big)\right)\bigg){\rm d} u\,\\
%%%
%%}{\leq} \prod_{l=1}^L\left(\exp({\epsilon\over L})\cdot\mathbb{P}(\mathscr{M}_a\big({\bf d}^\prime \,|\, \mathcal{E}_{\rm p}^\ast)\in R\big)  + {\color{blue} ???} \mathbb{P}\bigg( \frac{1}{\sigma_{\gammab}}{\tilde{\bf d}}^\top \Cb^\top U_a \Sigma^{-\frac{1}{2}}v \geq \epsilon- \frac{1}{2\sigma_{\gammab}^2}\|\Sigma^{-\frac{1}{2}}U_a^\top\Cb\tilde{\bf d}\|^2\bigg)\right)\,
%\ea\]	
%where $\widehat\Sigma_l=\Sigma_l\Sigma_l^\top$.

Let $R_l\subseteq\mathbb{R}^{mn}$ and define the events
\[\ba{l}
{\mathcal{F}_l^{op}}:=\left\{\Cb_{l} \mathpzc{P}_{\mathtt{C}^n}\big(((\Ib_n-\Ab)^T\otimes \Ib_m)\zetab_{l}-\alpha_{l}\nabla \Fb\big(((\Ib_n-\Ab)^T\otimes \Ib_m)\zetab_{l}\big)\big) + \Db_{l}\gammab_{l}\in R_i\right\}\\
{\mathcal{F}_l^{op}}^\prime:=\left\{\Cb_{l} \mathpzc{P}_{\mathtt{C}^n}\big(((\Ib_n-\Ab)^T\otimes \Ib_m)\zetab_{l}-\alpha_{l}\nabla \Fb^\prime\big(((\Ib_n-\Ab)^T\otimes \Ib_m)\zetab_{l}\big)\big) + \Db_{l}\gammab_{l}\in R_i\right\}
\ea\]
for $l=0,1,\ldots,L-1$. Further, we define
\[\ba{l}
\widetilde{\mathpzc{P}}_{l} :=  \mathpzc{P}_{\mathtt{C}^n}\bigg(((\Ib_n-\Ab)^T\otimes \Ib_m)\zetab_{l}-\alpha_{l}\nabla \Fb\big(((\Ib_n-\Ab)^T\otimes \Ib_m)\zetab_{l}\big)\bigg)\,\\ \qquad \quad -\mathpzc{P}_{\mathtt{C}^n}\bigg(((\Ib_n-\Ab)^T\otimes \Ib_m)\zetab_{l}-\alpha_{l}\nabla \Fb^\prime\big(((\Ib_n-\Ab)^T\otimes \Ib_m)\zetab_{l}\big)\bigg)\,,
\ea\]
which for any $\mu$-adjacent $\mathbf{F}^\prime, \mathbf{F}$, satisfies
\[\ba{rcl}
\|\widetilde{\mathpzc{P}}_{l}\| &\leq& \alpha_{l}\|\nabla \Fb\big(((\Ib_n-\Ab)^T\otimes \Ib)\zetab_{l}\big)-\nabla \Fb^\prime\big(((\Ib_n-\Ab)^T\otimes \Ib)\zetab_{l}\big)\|\,\\
&\leq& \alpha_{l}\sum_{i=1}^{n}\|(\tau_i-\tau_i^\prime)\| g^\dag\,\\
&\leq& ng^\dag\mu\,.
\ea\]

Bearing in mind the previous analysis, as in the arguments in Appendices \ref{app:sec-ppsc} and \ref{app:equ}, we can obtain
\[\ba{rcl}
&&\mathbb{P}(\mathscr{M}_{\rm op}\big(\mathbf{F} \,|\, \mathcal{E}_{\rm p}^\ast\big)\in R\big)\,\\
&\leq& \bigg(\exp({\epsilon\over L})\cdot\mathbb{P}\big({\mathcal{F}_0^{op}}^\prime\big)  + \mathbb{P}\bigg(Z_l\geq \frac{\epsilon\sigma_{\gammab}\lambda_{\rm ppsc}}{Lng^\dag\mu}- \frac{ng^\dag\mu}{2\sigma_{\gammab}}\lambda_{\rm ppsc} \bigg)\bigg)\,\\
&& \cdot \prod\limits_{l=1}^{L-1}\bigg(\exp({\epsilon\over L})\cdot\mathbb{P}\big({\mathcal{F}_l^{op}}^\prime|\,\bigcap\limits_{k=1}^{l-1}{\mathcal{F}_k^{op}}^\prime\big)  + \mathbb{P}\bigg(Z_l\geq \frac{\epsilon\sigma_{\gammab}\lambda_{\rm ppsc}}{Lng^\dag\mu}- \frac{ng^\dag\mu}{2\sigma_{\gammab}\lambda_{\rm ppsc}} \bigg)\bigg)\,\\
&\leq& \bigg(\exp({\epsilon\over L})\cdot\mathbb{P}\big({\mathcal{F}_0^{op}}^\prime\big)  + \delta_\dag\bigg)\cdot \prod_{l=1}^{L-1}\bigg(\exp({\epsilon\over L})\cdot\mathbb{P}\big({\mathcal{F}_l^{op}}^\prime|\,\bigcap_{k=1}^{l-1}{\mathcal{F}_k^{op}}^\prime\big)  + \delta_\dag\bigg)\,\\
&\leq& \exp(\epsilon)\mathbb{P}\big(\mathscr{M}_{\rm op}(\mathbf{F}^\prime\,|\, \mathcal{E}_{\rm p}^\ast)\in R\big) + \big(\exp({\epsilon\over L})+\delta_\dag\big)^{L}-e^{\epsilon}
\ea\]
with  $Z_l\sim \mathcal{N}(0,1)$, and $
\delta_\dag = \mathcal{Q}\bigg(\frac{\epsilon\sigma_{\gammab}\lambda_{\rm ppsc}}{Lng^\dag\mu }- \frac{ng^\dag\mu}{2\sigma_{\gammab}\lambda_{\rm ppsc}}\bigg)$.
With $\sigma_{\gammab}\geq \frac{n\mu g^\dag\mathcal{R}(\frac{\epsilon}{L},\delta_\sharp)} {\lambda_{\rm ppsc}}$ and $\delta_\sharp=(\delta+e^{\epsilon})^{1\over L}-\exp({\epsilon\over L})$, it can be seen that $\big(\exp({\epsilon\over L})+\delta_\dag\big)^{L}-e^{\epsilon}\leq \delta$.
The proof is thus completed.

%\bibliographystyle{IEEEtran}
%\bibliography{reference}

\begin{thebibliography}{10}

\bibitem{berger2012smart}
L.~T. Berger and K.~Iniewski, \emph{Smart grid applications, communications,
  and security}.\hskip 1em plus 0.5em minus 0.4em\relax John Wiley \& Sons,
  2012.

\bibitem{sabattini2013decentralized}
L.~Sabattini, N.~Chopra, and C.~Secchi, ``Decentralized connectivity
  maintenance for cooperative control of mobile robotic systems,'' \emph{The
  International Journal of Robotics Research}, vol.~32, no.~12, pp. 1411--1423,
  2013.

%\bibitem{Papadimitratos2009}
%D.~La~Fortelle, A., K.~Evenssen, R.~Brignolo, and S.~Cosenza, ``Vehicular
%  communication systems: Enabling technologies, applications, and future
%  outlook on intelligent transportation,'' \emph{IEEE Communications Magazine},
%  vol.~47, no.~11, pp. 84--95, 2009.

\bibitem{Zhang2011}
J.~Zhang, F.~Wang, K.~Wang, W.~Lin, X.~Xu, and C.~Chen, ``Data-driven
  intelligent transportation systems: A survey,'' \emph{IEEE Transactions on
  Intelligent Transportation Systems}, vol.~12, no.~4, pp. 1624--1639, 2011.

%\bibitem{li2014scaling}
%M.~Li, D.~G. Andersen, J.~W. Park, A.~J. Smola, A.~Ahmed, V.~Josifovski,
%  J.~Long, E.~J. Shekita, and B.-Y. Su, ``Scaling distributed machine learning
%  with the parameter server,'' in \emph{Proceedings of the 11th USENIX Symposium on Operating
%  Systems Design and Implementation}, pp. 583--598, 2014.

\bibitem{boyd2011distributed}
S.~Boyd, N.~Parikh, E.~Chu, B.~Peleato, J.~Eckstein \emph{et~al.},
  ``Distributed optimization and statistical learning via the alternating
  direction method of multipliers,'' \emph{Foundations and Trends in Machine
  learning}, vol.~3, no.~1, pp. 1--122, 2011.

\bibitem{scaman2017optimal}
K.~Scaman, F.~Bach, S.~Bubeck, Y.~T. Lee, and L.~Massouli{\'e}, ``Optimal
  algorithms for smooth and strongly convex distributed optimization in
  networks,'' in \emph{Proceedings of International Conference on Machine Learning}, pp.
  3027--3036, 2017.

\bibitem{gentry2009fully}
C.~Gentry, ``Fully homomorphic encryption using ideal lattices,'' in
  \emph{Proceedings of the 41st annual ACM symposium on Theory of
  computing}, pp. 169--178, 2009.

\bibitem{shoukry2016privacy}
Y.~{Shoukry}, K.~{Gatsis}, A.~{Alanwar}, G.~{Pappas}, S.~{Seshia},
  M.~{Srivastava}, and P.~{Tabuada}, ``Privacy-aware quadratic optimization
  using partially homomorphic encryption,'' in \emph{Proceedings of the 55th IEEE Conference
  on Decision and Control}, pp. 5053--5058, 2016.

\bibitem{adria2017privacy}
A.~Gascón, P.~Schoppmann, B.~Balle, M.~Raykova, J.~Doerner, S.~Zahur, and
  D.~Evans, ``Privacy-preserving distributed linear regression on
  high-dimensional data,''  \emph{Proceedings on Privacy Enhancing
  Technologies},  no.~4, pp. 345--364, 2017.

\bibitem{huang2015differentially}
Z.~Huang, S.~Mitra, and N.~Vaidya, ``Differentially private distributed
  optimization,'' in \emph{Proceedings of the  International Conference on
  Distributed Computing and Networking}, pp. 1--10, 2015.

\bibitem{hall2013differential}
R.~Hall, A.~Rinaldo, and L.~Wasserman, ``Differential privacy for functions and
  functional data,'' \emph{Journal of Machine Learning Research}, vol.~14, no.
  Feb, pp. 703--727, 2013.

%\bibitem{dwork2006our}
%C.~Dwork, K.~Kenthapadi, F.~McSherry, I.~Mironov, and M.~Naor, ``Our data,
%  ourselves: Privacy via distributed noise generation,'' in \emph{Annual
%  International Conference on the Theory and Applications of Cryptographic
%  Techniques},  pp.  486--503, 2006.

\bibitem{dwork2006calibrating}
C.~Dwork, F.~McSherry, K.~Nissim, and A.~Smith, ``Calibrating noise to
  sensitivity in private data analysis,'' in \emph{Theory of Cryptography
  Conference}, pp.  265--284, 2006.

\bibitem{nozari2016differentially}
E.~Nozari, P.~Tallapragada, and J.~Cort{\'e}s, ``Differentially private
  distributed convex optimization via functional perturbation,'' \emph{IEEE
  Transactions on Control of Network Systems}, vol.~5, no.~1, pp. 395--408,
  2018.

\bibitem{huang2012differentially}
Z.~Huang, S.~Mitra, and G.~Dullerud, ``Differentially private iterative
  synchronous consensu,'' in \emph{Proceedings of the  ACM workshop on
  Privacy in the electronic society}, pp. 81--90, 2012.

\bibitem{He2018Preserving}
J.~{He}, L.~{Cai}, and X.~{Guan}, ``Preserving data-privacy with added noises:
  Optimal estimation and privacy analysis,'' \emph{IEEE Transactions on
  Information Theory}, vol.~64, no.~8, pp. 5677--5690, 2018.

\bibitem{mo2016privacy}
Y.~Mo and R.~M. Murray, ``Privacy preserving average consensus,'' \emph{IEEE
  Transactions on Automatic Control}, vol.~62, no.~2, pp. 753--765, 2016.



\bibitem{Nedic2009consensus}
A. Nedic, A. Olshevsky, A. Ozdaglar, and J. N. Tsitsiklis, ``On distributed averaging algorithms and quantization effects," \emph{IEEE Transactions on Automatic Control}, vol. 54, no. 11, pp. 2506--2517, 2009.



\bibitem{shi2017networkTAC}
G.~Shi, B.~D. Anderson, and U.~Helmke, ``Network flows that solve linear
  equations,'' \emph{IEEE
  Transactions on Automatic Control}, vol.~62, no.~2, pp.
  2659--2674, 2017.

\bibitem{Mou2015}
S.~Mou, J. Liu and A. S. Morse, ``A Distributed Algorithm for Solving a Linear Algebraic Equation,'' \emph{IEEE
  Transactions on Automatic Control}, vol.60, no.11, pp. 2863-2878, 2015.

\bibitem{Vempala2020}
S. S. Vempala, R. Wang, and D. P. Woodruff, ``The communication complexity of optimization,” in \emph{Proceedings of the 14th Annual ACM-SIAM Symposium on Discrete Algorithms. Society for Industrial and Applied Mathematics}, pp. 1733-1752, 2020.

\bibitem{nedic2010constrained}
A.~Nedic, A.~Ozdaglar, and P.~A. Parrilo, ``Constrained consensus and
  optimization in multi-agent networks,'' \emph{IEEE Transactions on Automatic
  Control}, vol.~55, no.~4, pp. 922--938, 2010.

\bibitem{konevcny2015federated}
J.~Kone{\v{c}}n{\`y}, B.~McMahan, and D.~Ramage, ``Federated optimization:
  Distributed optimization beyond the datacenter,'' \emph{arXiv preprint
  arXiv:1511.03575}, 2015.

\bibitem{Shi2015Consensus}
G.~{Shi}, B.~D.~O. {Anderson}, and K.~H. {Johansson}, ``Consensus over random
  graph processes: Network Borel–Cantelli lemmas for almost sure
  convergence,'' \emph{IEEE Transactions on Information Theory}, vol.~61,
  no.~10, pp. 5690--5707, 2015.

\bibitem{Dekel2012Optimal}
O.~Dekel, R.~Gilad-Bachrach, O.~Shamir, and L.~Xiao, ``Optimal distributed
  online prediction using mini-batches,'' \emph{The Journal of Machine Learning
  Research}, vol.~13, no.~1, pp. 165–202, 2012.

\bibitem{scaman2019optimal}
K.~Scaman, F.~Bach, S.~Bubeck, Y.~Lee, and L.~Massouli{\'e}, ``Optimal
  convergence rates for convex distributed optimization in networks,''
  \emph{The Journal of Machine Learning Research}, vol.~20, pp. 1--31, 2019.

\bibitem{Ny-Pappas-TAC-2014}
L.~N. Jerome and J.~P. George, ``Differentially private filtering,'' \emph{IEEE
  Transactions on Automatic Control}, vol.~59, no.~2, pp. 341--354, 2014.

\bibitem{kawano2020design}
Y.~Kawano and M.~Cao, ``Design of privacy-preserving dynamic controllers,''
  \emph{IEEE Transactions on Automatic Control}, 2020.

%\bibitem{barthe2013verified}
%G.~Barthe, G.~Danezis, B.~Gr{\'e}goire, C.~Kunz, and S.~Zanella-Beguelin,
%  ``Verified computational differential privacy with applications to smart
%  metering,'' in \emph{Proceedings of the 26th IEEE Computer Security Foundations
%  Symposium}, pp. 287--301, 2013.

%\bibitem{Dimitrakakis2017DP}
%C.~Dimitrakakis, B.~Nelson, Z.~Zhang, A.~Mitrokotsa, and B.~I.~P. Rubinstein,
%  ``Differential privacy for bayesian inference through posterior sampling,''
%  \emph{The Journal of Machine Learning Research}, vol.~18, no.~1, pp.
%  343–381,  2017.



\bibitem{abadi2016deep}
M.~Abadi, A.~Chu, I.~Goodfellow, H.~B. McMahan, I.~Mironov, K.~Talwar, and
  L.~Zhang, ``Deep learning with differential privacy,'' in \emph{Proceedings
  of the 2016 ACM SIGSAC Conference on Computer and Communications Security}, pp. 308--318,
  2016.

%\bibitem{Wang2016Learning}
%Y.-X. Wang, J.~Lei, and S.~E. Fienberg, ``Learning with differential privacy:
%  Stability, learnability and the sufficiency and necessity of erm principle,''
%  \emph{The Journal of Machine Learning Research}, vol.~17, no.~1, pp.
%  6353–6392, Jan. 2016.

\bibitem{nozari2017differentially}
E.~Nozari, P.~Tallapragada, and J.~Cort{\'e}s, ``Differentially private average consensus: Obstructions, trade-offs, and optimal algorithm design",
  \emph{Automatica}, vol. 81, pp. 221--231, 2017.

\bibitem{He2020TSP}
J. He, L. Cai, and X. Guan ``Differential private noise adding mechanism and its application on consensus algorithm",
  \emph{IEEE Transactions on Signal Processing}, vol. 68, pp. 4069--4082, 2020.

\bibitem{han2016differentially}
S.~Han, U.~Topcu, and G.~J. Pappas, ``Differentially private distributed
  constrained optimization,'' \emph{IEEE Transactions on Automatic Control},
  vol.~62, no.~1, pp. 50--64, 2016.



%\bibitem{8619065}
%Y.~{Liu}, J.~{Wu}, I.~{Manchester}, and G.~{Shi}, ``Gossip algorithms that
%  preserve privacy for distributed computation part ii: Performance against
%  eavesdroppers,'' in \emph{Proceedings of  IEEE Conference on Decision and Control
%  }, pp. 5346--5351, 2018.

\bibitem{8619065}
Y.~{Liu}, J.~{Wu}, I.~{Manchester}, and G.~{Shi}, ``Gossip algorithms that preserve privacy for distributed computation
  part i: The algorithms and convergence conditions,'' in \emph{Proceedings of  IEEE Conference on Decision and Control
  }, pp. 4499--4504, 2018.

\bibitem{david2003gossip}
D.~Kempe, A.~Dobra, and G.~Johannes, ``Gossip-based computation of aggregate
  information,'' in \emph{Proceedings of the 44th Annual IEEE Symposium on
  Foundations of Computer Science}, pp. 1--10, 2003.

\bibitem{Boyd2006Gossip}
S. Boyd, A. Ghosh, B. Prabhakar, and D. Shah, ``Randomized gossip algorithms," \emph{IEEE Transactions
on Information Theory}, vol. 52, no. 6, pp. 2508--2530, 2006.

%\bibitem{david2004spatial}
%D.~Kempe, J.~Kleinberg, and A.~Demers, ``Spatial gossip and resource location
%  protocols,'' \emph{Journal of the ACM}, vol.~51, no.~6, pp. 943--967, 2004.

\bibitem{hanzely2017privacy}
F.~Hanzely, J.~Kone{\v{c}}n{\`y}, N.~Loizou, P.~Richt{\'a}rik, and
  D.~Grishchenko, ``Privacy preserving randomized gossip algorithms,''
  \emph{arXiv preprint arXiv:1706.07636}, 2017.

%\bibitem{doerr2012why}
%B.~Doerr, M.~Fouz, and T.~Friedrich, ``Why rumors spread so quickly in social
%  networks,'' \emph{Commun. ACM}, vol.~55, no.~6, pp. 70--75, 2012.

\bibitem{Cheu2019Distributed}
A.~Cheu, A.~Smith, J.~Ullman, D.~Zeber, and M.~Zhilyaev, ``Distributed
  differential privacy via shuffling,'' in \emph{Annual International Conference on the Theory and Applications of Cryptographic Techniques},  pp. 375--403, 2019.

\bibitem{Balcer2020Connecting}
V.~Balcer, A.~Cheu, M.~Joseph, and J.~Mao, ``Connecting robust shuffle privacy
  and pan-privacy,'' \emph{arXiv preprint arXiv:2004.09481}, 2020.



\bibitem{kairouz2019advances}
P.~Kairouz, H.~B. McMahan, B.~Avent, A.~Bellet, M.~Bennis, A.~N. Bhagoji,
  K.~Bonawitz, Z.~Charles, G.~Cormode, R.~Cummings \emph{et~al.}, ``Advances
  and open problems in federated learning,'' \emph{arXiv preprint
  arXiv:1912.04977}, 2019.

\bibitem{menezes1996handbook}
J.~Menezes~Alfred, C.~van Oorschot~Paul, and A.~Vanstone~Scott, \emph{Handbook of
  Applied Cryptography}, CRC Press, 1996.

\bibitem{Bellare2007Multi}
M.~{Bellare}, A.~{Boldyreva}, K.~{Kurosawa}, and J.~{Staddon}, ``Multirecipient
  encryption schemes: How to save on bandwidth and computation without
  sacrificing security,'' \emph{IEEE Transactions on Information Theory},
  vol.~53, no.~11, pp. 3927--3943, 2007.

\bibitem{mirzasoleiman2016fast}
B.~Mirzasoleiman, M.~Zadimoghaddam, and A.~Karbasi, ``Fast distributed
  submodular cover: Public-private data summarization,'' \emph{Advances in
  Neural Information Processing Systems}, vol.~29, pp. 3594--3602, 2016.

\bibitem{chierichetti2015efficient}
F.~Chierichetti, A.~Epasto, R.~Kumar, S.~Lattanzi, and V.~Mirrokni, ``Efficient
  algorithms for public-private social networks,'' in \emph{Proceedings of the
  21th ACM SIGKDD International Conference on Knowledge Discovery and Data
  Mining}, pp. 139--148, 2015.



\bibitem{dwork2014algorithmic}
C.~Dwork, A.~Roth \emph{et~al.}, ``The algorithmic foundations of differential
  privacy,'' \emph{Foundations and Trends in Theoretical Computer Science},
  vol.~9, no. 3--4, pp. 211--407, 2014.

\bibitem{frechet1935generalisation}
M.~Fr{\'e}chet, ``G{\'e}n{\'e}ralisation du th{\'e}oreme des probabilit{\'e}s
  totales,'' \emph{Fundamenta Mathematicae}, vol.~1, no.~25, pp. 379--387,
  1935.

\end{thebibliography}

% that's all folks
\end{document}